\documentclass[
    10pt,
    twocolumn,
    superscriptaddress,
    nofootinbib,
    amsmath,
    amssymb,
    aps,
    prx,
]{revtex4-2}

\usepackage[utf8]{inputenc}

\usepackage{tikz}
\usepackage{amsmath,amsthm,amsfonts,amssymb}
\usepackage{mathtools}
\usepackage{verbatim}
\usepackage{dsfont}
\usepackage{caption}
\usepackage[usestackEOL]{stackengine}
\usepackage{hhline}
\usepackage{tabularx}
\usepackage{enumitem}
\usepackage{algorithm}

\usepackage{hyperref}
\hypersetup{colorlinks=true, citecolor=blue,linkcolor=blue,filecolor=blue,urlcolor=blue}

\usepackage{physics}

\usepackage{tikz}
\usetikzlibrary{quantikz}
\usetikzlibrary{external}

\newcommand{\cA}{\mathcal{A}}

\newcommand{\cC}{\mathcal{C}}

\newcommand{\cQ}{\mathcal{Q}}

\newcommand{\cS}{\mathcal{S}}

\newcommand{\cW}{\mathcal{W}}
\newcommand{\cX}{\mathcal{X}}
\newcommand{\cY}{\mathcal{Y}}
\newcommand{\cZ}{\mathcal{Z}}

\newcommand{\MIM}{\texttt{MIM}}

\newcommand{\mbf}{\mathbf}
\newcommand{\mbb}{\mathbb}
\newcommand{\mc}{\mathcal}

\renewcommand{\tr}[1]{\textrm{Tr}\left[ #1 \right]}

\renewcommand{\ip}[1]{\langle#1\rangle}
\renewcommand{\op}[2]{|#1\rangle\langle #2|}

\newcommand{\1}{\mathds{1}}

\newcommand{\Hmc}{\mc{H}}

\newcommand{\Zbb}{\mbb{Z}}

\theoremstyle{definition}

\newtheorem{definition}{Definition}

\newtheorem{proposition}{Proposition}
\newtheorem{theorem}{Theorem}

\definecolor{prep_green}{HTML}{669933}
\definecolor{proc_red}{HTML}{CC3333}
\definecolor{meas_blue}{HTML}{3366CC}
\definecolor{pink}{HTML}{FB607F}
\definecolor{teal}{HTML}{008080}
\definecolor{quantum_purple}{HTML}{663366}
\definecolor{classical_gray}{HTML}{666666}

\usetikzlibrary{shapes,decorations,arrows,calc,arrows.meta,fit,positioning,external}
\tikzset{
    terminal/.style={},
    event/.style={draw, circle, fill = black, minimum size = 0.12cm, inner sep=0pt},
    source/.style ={ellipse, draw, minimum width = 0.5 cm, color=classical_gray, fill=classical_gray!10, text=black},
    graph_node/.style ={circle, draw, minimum width = 0.5 cm, color=quantum_purple, fill=quantum_purple!20, text=black},
    graph_node_x/.style ={circle, draw, minimum width = 0.7 cm, color=teal, fill=teal!20, text=black},
    graph_node_y/.style ={circle, draw, minimum width = 0.7 cm, color=pink, fill=pink!20, text=black},
    dev/.style={rectangle, rounded corners, draw, minimum width = 0.7 cm, color=classical_gray, fill=classical_gray!10, text=black},
    qdev/.style={dev, color=quantum_purple, fill=quantum_purple!20, text=black},
    evesource/.style ={ellipse, draw, minimum width = 0.5 cm, color=quantum_purple, fill=quantum_purple!20, text=black},
    qsource/.style ={ellipse, draw, minimum width = 0.5 cm, color=prep_green, fill=prep_green!20, text=black},
    prep_dev/.style={dev, color=prep_green, fill=prep_green!20, text=black},
    proc_dev/.style={dev, color=proc_red, fill=proc_red!20, text=black},
    meas_dev/.style={dev, color=meas_blue, fill=meas_blue!20, text=black},
    el/.style = {align=left},
    meas_gate/.style={color=meas_blue, fill=meas_blue!20},
    prep_gate/.style={color=prep_green, fill=prep_green!20},
    proc_gate/.style={color=proc_red, fill=proc_red!20},
    meas_gate_group/.style={dashed,rounded corners,color=meas_blue},
    proc_gate_group/.style={dashed,rounded corners,color=proc_red},
    prep_gate_group/.style={dashed,rounded corners,color=prep_green},
}

\newcommand{\cedge}{edge[double, line width=1pt, double distance=1.5pt, arrows = {-Latex[length=0.5pt 2.5 0]}]}
\newcommand{\qedge}{edge[line width=2pt, arrows = {-Latex[length=6pt 1.5 0]}]}
\newcommand{\gedge}{edge[line width=2pt, arrows = {}]}
\newcommand{\gedgenotmim}{edge[line width=2pt, color=black!35, arrows = {}]}
\newcommand{\gedgemim}{edge[line width=3pt, arrows = {}]}

\providecommand{\ignore}[1]{}

\makeatletter
\newenvironment{breakablealgorithm2}[1][htb]
{
	\begin{flushleft}
		\refstepcounter{protocol}
		\hrule height.8pt depth0pt \kern2pt
		\renewcommand{\caption}[2][\relax]{
			{\raggedright\textbf{\fname@algorithm~\thealgorithm} ##2\par}%
			\ifx\relax##1\relax 
			\addcontentsline{loa}{algorithm}{\protect\numberline{\thealgorithm}##2}%
			\else 
			\addcontentsline{loa}{algorithm}{\protect\numberline{\thealgorithm}##1}%
			\fi
			\kern2pt\hrule\kern2pt
		}
	}{
		\kern2pt\hrule\relax
	\end{flushleft}
}
\makeatother

\newcounter{protocoldesc}
\makeatletter
\newenvironment{protocoldesc}{%
	\renewcommand{\ALG@name}{Protocol}
		\let\c@algorithm\c@protocol
	\begin{breakablealgorithm2}%
	}{\end{breakablealgorithm2}
}

\makeatother

\newcommand{\wt}{\widetilde}

\begin{document}

\title{Robust Entanglement Witnessing via Dense Network Coding with Graph States}

\author{Brian Doolittle}
\affiliation{Aliro Technologies, Inc., Brighton, Massachusetts, 02135, USA}

\author{Ian George}
\affiliation{Centre for Quantum Technologies, National University of Singapore, Singapore 117543, Singapore}

\date{\today}

\begin{abstract}
    Practical tests for witnessing entanglement in the presence of noise are needed to verify and apply entanglement-based applications in real-world communication networks.
    Quantum dense network coding is a communication protocol  that uses an entangled state preparation together with quantum communication
    to halve the amount of communication needed to evaluate certain bijective functions in networks having multiple senders and one receiver. 
    In this work, we consider a dense network coding protocol that uses graph state entanglement to implement a family of affine transformations, we derive a bound on the success probability that can be achieved by classical networks, and we show that the noise robustness exponentially amplifies with the number of senders.  
    We apply our results in a semi-device-independent protocol for robustly estimating the effective visibility of graph states and measurement bases. Overall, our approaches are  robust to noise, can support the usage of uncharacterized state preparations or measurements, and  can be applied to any graph state.
\end{abstract}

\maketitle

\section{Introduction}
Entanglement enables quantum networks to provide advantages over classical networks in applications such as secure communications, distributed information processing, and distributed sensing \cite{buhrman2010_nonlocality_communication_complexity,Wehner2018_quantum_internet,Zhang2021_entanglement_sensing}. 
Despite being a valuable resource for communication networks, it is challenging to detect and characterize multipartite entanglement due to environmental noise, as well as mathematical complexity. 
Standard approaches for witnessing entanglement use classical data from local measurements to detect multipartite entanglement via the violation of a classical bound \cite{Toth_2005_genuine_multipartite_entanglement,Toth_2005_stabilizer_entanglement_detection,tang2013greenberger,McKague2014_graph_state_self-testing,brunner2014_bell_nonlocality,supic2020_self-testing,wu2021_robust_multiparticle_self-testing}, or a positive partial transpose (PPT) negativity criterion \cite{Horodecki1997_ppt_criterion,jungnitsch2011_graph_entanglement-witness}.

In communication networks, entanglement can be tested using semi-device-independent (SDI) certification \cite{liang2011_sdi,li2012_sdi,bennet2014_sdi_entangled_measurements,VanHimbeeck2017_sdi,tavakoli2018_sdi_multipartite_entanglement,moreno2021_sdi_entanglement}, in which the experimenter has characterized the communication network topology and the amount of communication in each channel. Each network device's hardware and capabilities are uncharacterized, but its classical inputs, measurement data, and settings are exposed to the experimenter. 
For any communication network, linear nonclassicality witnesses that bound the classical input-output data of the communication network can be constructed \cite{Bowles2015_nonclassicality_communication_networks,doolittle2024operational_nonclassicality}.
A violation of a nonclassicality witness demonstrates a communication advantage over the classical communication network, \textit{i.e.}, the total amount of information transmitted is reduced. 

Superdense coding \cite{bennet1992_dense_coding}, is an example of a communication advantage, in which an entanglement-assisted qubit channel is used to communicate two bits of information for each transmitted qubit.  
The advantage of superdense coding has been characterized \cite{Hiroshima2001_optimal_dense_coding, Horodecki2012_dense_coding_advantage}, applied in protocols such as quantum secure direct communication \cite{wu2022_qsdc_dense_coding}, and extended in protocols such as distributed dense coding \cite{bruss2004_distributed_dense_coding,Shadman2012_distributed_dense_coding_noisy} and network dense coding \cite{Roy2018_network_dense_coding}.
Remarkably, both entanglement and quantum communication are required to achieve an advantage over classical communication resources in the classical channel capacity \cite{Holevo1973BoundsFT, EA-capacity, Cubitt2011_NS-capacity, Frenkel2015_classical_information_n-level_quantum_system,dallarno2017_no_hypersignaling,chitambar2023_communication_value}.

\textit{Quantum dense network coding}  \cite{george2026_dnc} is distinct from prior works because it evaluates a function on distributed data rather than communicating the data directly.
Quantum dense network coding builds on the dense bitwise XOR protocol presented in Reference~\cite{doolittle2024operational_nonclassicality}, and formalized in Reference~\cite{george2026_dnc}. Several applications of dense network coding have been studied, including linear computations over networks with entanglement-assistance \cite{Allaix-2023a,Yao-2024a,Yao-2025a,hu2025_linear_computation,Meng-2026a}, quantum key growing \cite{george2026_dnc}, two-way key distribution \cite{Beaudry2013_twoway_qkd_using_dense_coding}, private information retrieval \cite{Song-QPIR-2021a}, and conference key agreement \cite{das2021_secure_communication_qmac}. 
Although closely related, dense network coding is distinct from classical network coding \cite{ahlswede_2000_network_coding,chou2007_network_coding_wireless,ho2008_network_coding}, which requires more classical bits of communication, and from quantum network coding \cite{Lu2019_quantum_network_coding}, which uses quantum resources to transmit quantum information.

We generalize dense network coding to multivariate functions in multiaccess networks with $n$ senders and one receiver, and characterize the communication advantage of graph state entanglement.
Our contributions include:
\begin{enumerate}
    \item \textbf{Dense Network Coding with Graph States:} We generalize quantum dense network coding to the $n$-sender setting, derive classical bounds, and characterize the communication advantage. 
    \item \textbf{Graph State Entanglement Witnesses:} We introduce a SDI protocol for  estimating entanglement visibility using uncharacterized measurements.
    \item \textbf{Robust Communication Advantage:}
    We show for large $n$ that graph states, including tree, star, and cluster topologies, demonstrate communication advantage for nearly all nonzero visibilities.
\end{enumerate}

We proceed as follows.
In Section~\ref{section:overview}, we review the multiaccess network setting, and the bounds that place communication constraints on evaluating conditionally bijective functions.
In Section~\ref{section:dense-affine-transformation}, we introduce the dense network coding protocol, and characterize the communication constraints, classical communication cost, and noise robustness of the communication advantage.
In Section~\ref{section:certifying_communication_resources} we present our entanglement certification procedures and compare our results with previous works.

\section{Multiaccess Network Overview}\label{section:overview}

In this section we provide an overview of the multiaccess network (MN) scenario.
In Section~\ref{section:comm_networks_resources}, we introduce our SDI framework for characterizing MNs.
In Section~\ref{section:multiparty_function_mn}, we describe how functions can be evaluated over MNs with communication constraints.
In Section~\ref{section:conditional-bijectivity-constraints}, we derive basic bounds on MNs when the evaluated functions are conditionally bijective with respect to each party's input.

\subsection{Multiaccess Network Communication Resources}\label{section:comm_networks_resources}

A multiaccess network is a communication network that consists of $n$ independent senders $i\in [n]:=\{1,\dots,n\}$ and a central receiver (see Fig.~\ref{fig:multipartite_dag}).
At the highest level, the communication networks we study have classical inputs and outputs, which are modeled by finite alphabets,  $\mathcal{A}_{[n]}:=\bigtimes_{i\in[n]}\mathcal{A}_i$ and $\mathcal{Z}$, respectively. The average input-output behavior of the network is modeled by  a stochastic map, $\mathbf{P} : \mathcal{A}_{[n]}\to\mathcal{Z}$, which is referred to as a multiaccess channel, and belongs to the set
\begin{equation}\label{eq:mac_behavior_general}
    \mathcal{P}_{\mathcal{Z}|\mathcal{A}_{[n]}}:=\{\mathbf{P}\in\mathbb{R}_{\geq0}^{|\mathcal{Z}|\times\vert\mathcal{A}_{[n]}\vert }\ :\ \sum_{z\in\mathcal{Z}} \mathbf{P}_{z|a} = 1\}
\end{equation}
where the channel's transition probabilities are $\mathbf{P}_{z|a}$.

When the amount of communication is limited, linear constraints bound the set of channels that can be induced by a particular MN resource configuration.
For a characterization of MN resource configurations, refer to Table~\ref{table:communication_resources} and Fig.~\ref{fig:multipartite_dag}.
We quantify the amount of communication by $\vec{c} := (c_i)_{i\in[n]}$ where $c_i\in\{1,\dots,|\mathcal{Z}|\}$ is the signaling dimension of $i^{th}$ sender's channel  \cite{doolittle_2021_certifying_classical_simulation_cost,doolittle2024operational_nonclassicality,george2026_dnc},
which specifies that no more than $\log_2 c_i$ bits of information can be transmitted over either quantum or classical channels \cite{Holevo1973BoundsFT, Frenkel2015_classical_information_n-level_quantum_system, dallarno2017_no_hypersignaling}.
We define the set of MN configurations that lack quantum communication or lack entanglement between senders  
\begin{equation}\label{eq:classically_bound_resources_configurations}
    \mathcal{S}(\vec{c}) := \mathcal{C}(\vec{c}) \cup \mathcal{Q}(\vec{c}) \cup \mathcal{C}^\texttt{E}(\vec{c}) \cup  \mathcal{C}^\texttt{NS}(\vec{c}) 
\end{equation}
where we compare the network coding performance of $\mathcal{Q}^\texttt{E}(\vec{c})$ with $\mathcal{S}(\vec{c})$.
Since we derive linear bounds on $P_S^\star(g,\mathcal{S}(\vec{c}))$, they also hold for $\texttt{Conv}(\mathcal{S}(\vec{c}))$, hence shared randomness does not improve the success probability with respect to linear bounds \cite{Bowles2015_nonclassicality_communication_networks,doolittle2024operational_nonclassicality}.

\begin{table}[t!]
    \centering

    \begin{tabular}{|c|c|c|}
         \hline
         \textbf{Resource Configuration} &  \textbf{Set} & \textbf{DAG} \\
         \hhline{|=|=|=|}
         Classical MN  &  $\mathcal{C}(\vec{c})$ & Fig.~\ref{fig:multipartite_dag}.a \\
         \hline
         Quantum MN & $\mathcal{Q}(\vec{c})$ & Fig.~\ref{fig:multipartite_dag}.b  \\
         \hline
         Entanglement-assisted Classical MN & $\mathcal{C}^{\texttt{E}}(\vec{c})$  & Fig.~\ref{fig:multipartite_dag}.c\\
         \hline
         Non-signaling-assisted Classical MN &  $\mathcal{C}^{\texttt{NS}}(\vec{c})$ & \cite[Fig. 1.c]{george2026_dnc}\\
         \hline
         Entanglement-assisted Quantum MN & $\mathcal{Q}^{\texttt{E}}(\vec{c})$ &  Fig.~\ref{fig:multipartite_dag}.d \\
         \hline
    \end{tabular}
    \caption{
        \textbf{Notation for MN resource configurations. } Notation for the set of MACs induced by each MN with signaling dimension  $\vec{c}=(c_i)_{i\in [n]}$.
    }
    \label{table:communication_resources}
    \small
    \resizebox{\columnwidth}{!}{
    \begin{tabular}{l c l}
        {\normalsize (a)}  &  & {\normalsize (b) } \\
        \begin{tikzpicture}
            \node[terminal] (x0) at (-0.1,2) {$a_1$};
            \node[terminal] (x1) at (-0.1,1) {$a_2$};
            \node[terminal] (xn) at (-0.1,-1) {$a_{n}$};  
            
            \node[dev] (A0) at (2.1,2) {$P_{\mu_1|a_1}$};
            \node[dev] (A1) at (2.1,1) {$P_{\mu_2|a_2}$};
            \node[terminal] (dots) at (2.1,0) {$\vdots$};
            \node[dev] (An) at (2.1,-1) {$P_{\mu_{n}|a_{n}}$};
            
            \node[dev] (C) at (4.0, 0) {$P_{z|\mu_1,\dots,\mu_n}$};
            \node[terminal] (z) at (5.6, 0) {$z$};
        
            \path (x0) \cedge (A0);
            \path (x1) \cedge (A1);
            \path (xn) \cedge (An);
            \path (A0) \cedge (C);
            \path (A1) \cedge (C);
            \path (An) \cedge (C);
            \path (C) \cedge (z);
        \end{tikzpicture} & & \begin{tikzpicture}
            \node[terminal] (x0) at (-0.1,2) {$a_1$};
            \node[terminal] (x1) at (-0.1,0.75) {$a_2$};

            \node[terminal] (xn) at (-0.1,-1) {$a_{n}$};    
            \node[prep_dev] (A0) at (2.1,2) {$\rho_{a_1}$};
            \node[prep_dev] (A1) at (2.1,0.75) {$\rho_{a_2}$};
            \node[terminal] (dots) at (2.1,0) {$\vdots$};
            \node[prep_dev] (An) at (2.1,-1) {$\rho_{a_{n}}$};
            \node[meas_dev] (C) at (3.6, 0) {$\Pi_{z}$};
            \node[terminal] (z) at (4.8, 0) {$z$};
        
            \path (x0) \cedge (A0);
            \path (x1) \cedge (A1);
            \path (xn) \cedge (An);
            \path (A0) \qedge  (C);
            \path (A1) \qedge (C);
            \path (An) \qedge  (C);
            \path (C) \cedge (z);
        \end{tikzpicture} \\
        \hfill \\
        {\normalsize (c)} & & {\normalsize (d)} \\
        \begin{tikzpicture}
            \node[terminal] (x0) at (-0.1,2) {$a_1$};
            \node[qsource] (lambda) at (0,0) {$\rho$};
            \node[terminal] (x1) at (-0.1,1) {$a_2$};
            \node[terminal] (dots) at (2.1,0) {$\vdots$};
            \node[terminal] (xn) at (-0.1,-1) {$a_{n}$};    
            \node[meas_dev] (A0) at (2.1,2) {$\Pi_{\mu_1|a_1}$};
            \node[meas_dev] (A1) at (2.1, 1) {$\Pi_{\mu_2|a_2}$};
            \node[terminal] (dots) at (2.1,0) {$\vdots$};
            \node[meas_dev] (An) at (2.1,-1) {$\Pi_{\mu_{n}|a_{n} }$};
            \node[dev] (C) at (4.2, 0) {$P_{z|\mu_1,\dots, \mu_n}$};
            \node[terminal] (z) at (6.0, 0) {$z$};
        
            \path (lambda) \qedge (A0);
            \path (lambda) \qedge (A1);
            \path (lambda) \qedge (An);
            \path (lambda) \qedge (dots);
            \path (dots) \cedge (C);
            \path (x0) \cedge (A0);
            \path (x1) \cedge (A1);
            \path (xn) \cedge (An);
            \path (A0) \cedge (C);
            \path (A1) \cedge (C);
            \path (An) \cedge   (C);
            \path (C) \cedge (z);
        \end{tikzpicture} & & 
        \begin{tikzpicture}
            \node[terminal] (x0) at (-0.1,2) {$a_1$};
            \node[qsource] (lambda) at (0,0) {$\rho$};
            \node[terminal] (x1) at (-0.1,1) {$a_2$};
            \node[terminal] (dots) at (2.1,0) {$\vdots$};
            \node[terminal] (xn) at (-0.1,-1) {$a_{n}$};    
            \node[proc_dev] (A0) at (2.1,2) {$U_{a_1}$};
            \node[proc_dev] (A1) at (2.1,1) {$U_{a_2}$};
            \node[proc_dev] (An) at (2.1,-1) {$U_{a_{n}}$};
            \node[meas_dev] (C) at (4.2, 0) {$\Pi_{z}$};
            \node[terminal] (z) at (5.2, 0) {$z$};
        
            \path (lambda) \qedge (A0);
            \path (lambda) \qedge (A1);
            \path (lambda) \qedge (An);
            \path (lambda) \qedge (dots);
            \path (dots) \qedge (C);
            \path (x0) \cedge (A0);
            \path (x1) \cedge (A1);
            \path (xn) \cedge (An);
            \path (A0) \qedge  (C);
            \path (A1) \qedge  (C);
            \path (An) \qedge  (C);
            \path (C) \cedge (z);
        \end{tikzpicture}\\
    \end{tabular}
    }
    \captionof{figure}{\textbf{MN resource configurations. } Directed acyclic graphs show four different $n$-sender MN resource configurations. Double-lined arrows denote classical communication and single-lined arrows denote quantum communication. (a) Classical MN. (b) Quantum MN. (c) Entanglement-assisted classical MN. (d) entanglement-assisted Quantum MN.
    }
    \label{fig:multipartite_dag}
\end{table}

\subsection{Multiparty Functions in Multiaccess Networks}\label{section:multiparty_function_mn}

A multiaccess channel's ability to implement a multi-input function can be quantified. Specifically, given a function $g : \mathcal{A}_{[n]} \to \mathcal{Z}$, the probability that a multiaccess channel $\mathbf{P}\in\mathcal{P}_{\mathcal{Z}|\mathcal{A}_{[n]}}$ correctly evaluates the function on average is defined as \cite{doolittle2024operational_nonclassicality}
\begin{equation}\label{eq:function_success_probability}
    P_S(g,\mathbf{P}) := 
    \frac{1}{|\mathcal{A}_{[n]}|} \sum_{a \in \mathcal{A}_{[n]}} \delta_{z,g(a)} \mathbf{P}_{z|a}
\end{equation}
where the term $1/|\mathcal{A}_{[n]}|$ results from our assumption that the inputs are uniformly random over $\mathcal{A}_{[n]}$, and the success probability is bounded as $P_S(g,\mathbf{P}) \in [0,1]$.

The signaling dimension and available communication resources impose fundamental constraints on the information-processing capabilities of communication networks \cite{Bowles2015_nonclassicality_communication_networks,doolittle2024operational_nonclassicality}.
For a fixed set of MN communication resources, \textit{e.g.} $\mathcal{S}(\vec{c})$, the success probability of evaluating a function $g$ is bounded as
\begin{equation}\label{eq:max_success_mn_resources}
   1 \geq P^\star_S(g,\mathcal{S}(\vec{c})) := \max_{\mathbf{P} \in \mathcal{S}(\vec{c})} P_S(g,\mathbf{P}) \ .
\end{equation}

When quantum resources are used in the multiaccess network setting, a wide range of communication advantages can be achieved. In quantum multiaccess networks $\mathcal{Q}(\vec{c})$, advantages have been found for evaluating inner products \cite{Cleve1999_ea_inner_product}, performing quantum finger printing \cite{buhrman2001quantum}, and evaluating a nonlocal XOR function \cite{Bowles2015_nonclassicality_communication_networks}. 
In classical multiaccess networks with entangled senders advantages related to nonlocal games were identified \cite{Leditzky2020_mac_games_ea_cmac}. Nonclassicality was identified in quantum multiaccess networks with entangled senders  \cite{Zhang2022_single_particle_mac}.
Quantum dense network coding generalizes the dense bitwise XOR protocol presented in Reference~\cite{doolittle2024operational_nonclassicality, george2026_dnc}.

\subsection{Communication Constraints on Evaluating Conditionally Bijective Functions}\label{section:conditional-bijectivity-constraints}

A key property of the functions implementable with quantum dense network coding using a two-sender MN is that they are \textit{doubly-conditionally bijective} \cite{george2026_dnc}. To consider computing multivariate functions over larger MNs, we generalize the definitions of conditional bijectivity and double conditional bijectivity to $n$-variate functions, $f: \cW_{[n]} \to \cZ$, and extend the bound established in the bivariate case to the multivariate case.

Given a Cartesian product of $n$ alphabets $\mathcal{W}_{[n]} := \bigtimes_{i \in [n]} \cW_{i}$ and a subset of indices $M \subset [n]$, we denote the product over the subset as $\cW_{M} \coloneq \bigtimes_{i \in M} \cW_{i}$ where the complement of a set is notated as $M^{C} \coloneq [n]\setminus M$ with $\cW_{M^{C}}$ being well-defined.
This notation allows us to clearly specify the induced bivariate function $f: \cW_{M} \times \cW_{M^{C}} \to \cZ$ that arises by treating an $n$-variate function $f:\cW_{[n]} \to \cZ$ as a function with two inputs that are over the tuples of alphabets $\cW_{M}$ and $\cW_{M^C}$. This allows us to define conditional  bijectivity of multivariate functions as follows.
\begin{definition}\label{def:conditional-bijectivity}
    \textbf{$\mathcal{W}_{M}$-Conditional Bijectivity:}
    For an $n$-variate function $f: \cW_{[n]} \to \mathcal{Z}$ and set $M \subset [n]$, the function is $\mathcal{W}_{M}$\textit{-conditionally bijective} if the induced bivariate function $f:\cW_{M^{C}} \times \cW_{M} \to \cZ$ satisfies  that the univariate function $f_{\hat{w}}(w) \coloneq f(w,\hat{w})$ is a bijection for each $\hat{w} \in \cW_{M}$.
\end{definition}
\begin{definition} \textbf{Symmetrical Conditional Bijectivity:} A function $f: \mathcal{W}_{[n]} \to \mathcal{Z}$ is \textit{symmetrically-conditionally bijective}\footnote{When $n=2$ symmetrically-conditional bijectivity is equivalent to the doubly-conditional bijectivity discussed in \cite{george2026_dnc}.} if the function is $\mathcal{W}_{[n]\setminus \{i\}}$-conditionally bijective for each $i\in [n]$.
\end{definition}

For all $\mathcal{W}_{[n]\setminus \{i\}}$-conditionally bijective functions it holds that $|\mathcal{W}_i|=|\mathcal{Z}|$, and symmetrically-conditionally bijective functions satisfy $|\mathcal{W}_i| = |\mathcal{Z}|$ for all $i\in[n]$.
Since symmetrical conditional bijectivity generalizes double conditional bijectivity, in the following theorem, we extend  the bounds on evaluating a doubly-conditionally bijective function over a two-sender MN \cite[Proposition 44]{george2026_dnc} to bounds on evaluating an $n$-variate symmetrically-conditionally bijective function over an $n$-sender MN.

\begin{theorem}\label{thm:conditional-bijectivity-network-communication-bounds} \textbf{Conditional Bijectivity Bound: }
    For an $n$-variate, symmetrically-conditionally bijective function, $g: \mathcal{W}_{[n]} \to \mathcal{Z}$, the success probability of evaluating the function is bounded as
    \begin{equation}\label{eq:conditional-bijectivity-bound}
        P^\star_S(g,\mathcal{S}(\vec{c}))\leq \frac{1}{|\mathcal{Z}|} \min_{i\in [n] }\{c_i\}
    \end{equation}
     for any MN resource configuration contained by $\mathcal{S}(\vec{c})$.
\end{theorem}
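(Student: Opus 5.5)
The plan is to reduce the $n$-sender bound to the two-sender bound of \cite[Proposition 44]{george2026_dnc}, by grouping the senders into two parties. Fix any $i\in[n]$ and set $M=[n]\setminus\{i\}$. Treat $g$ as the induced bivariate function $g:\mathcal{W}_i\times\mathcal{W}_M\to\mathcal{Z}$.

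First, I would check that this bivariate function is conditionally bijective in the needed sense. By symmetrical conditional bijectivity, $g$ is $\mathcal{W}_M$-conditionally bijective. So for every fixed $\hat w\in\mathcal{W}_M$, the map $w_i\mapsto g(w_i,\hat w)$ is a bijection $\mathcal{W}_i\to\mathcal{Z}$. This is exactly what makes the one-sender counting argument work.

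Second, I would give a direct bound that avoids needing the full doubly-conditional structure on the grouped side.
\begin{itemize}
\item Take any $\mathbf{P}$ induced by a configuration in $\mathcal{C}(\vec c)$, $\mathcal{Q}(\vec c)$, $\mathcal{C}^{\texttt{E}}(\vec c)$ or $\mathcal{C}^{\texttt{NS}}(\vec c)$.
\item Merge senders in $M$, together with any shared resources among them, into a single super-sender. Keep sender $i$ separate, with its channel of signaling dimension $c_i$.
\item In every configuration of $\mathcal{S}(\vec c)$, conditioned on $\hat w$, the channel $w_i\mapsto z$ has signaling dimension at most $c_i$:
  \begin{itemize}
  \item In $\mathcal{C}$ and $\mathcal{C}^{\texttt{E}}$, sender $i$ sends a classical message of $c_i$ symbols, and entanglement-assisted classical or nonsignaling-assisted correlations with the other senders cannot raise the signaling dimension from $w_i$ to $z$. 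Here I would invoke the known result \cite{Frenkel2015_classical_information_n-level_quantum_system,dallarno2017_no_hypersignaling} that nonsignaling side resources do not increase signaling dimension.
  \item In $\mathcal{Q}$, there is no entanglement between senders. So conditioned on $\hat w$ (and on any shared randomness, handled by convexity), the joint state of the other senders is a fixed state tensored with $\rho_{w_i}$ of dimension $c_i$. Frenkel--Weiner then gives signaling dimension $\le c_i$.
  \end{itemize}
\item Then
\[
\sum_{w_i}\mathbf{P}_{g(w_i,\hat w)|w_i,\hat w}\le c_i .
\]
This holds because any channel with signaling dimension $c_i$ satisfies $\sum_{x}P_{\sigma(x)|x}\le c_i$ for an injection $\sigma$. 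Such a channel is a convex combination of channels whose outputs lie in a $c_i$-element set, and at most $c_i$ of the distinct targets $\sigma(x)$ can be hit.
\end{itemize}

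Third, I would average. Writing $|\mathcal{W}_i|=|\mathcal{Z}|$,
\[
P_S(g,\mathbf{P})=\frac{1}{|\mathcal{W}_M|}\sum_{\hat w}\frac{1}{|\mathcal{Z}|}\sum_{w_i}\mathbf{P}_{g(w_i,\hat w)|w_i,\hat w}\le \frac{c_i}{|\mathcal{Z}|} .
\]
Since $i$ was arbitrary, minimizing over $i$ gives Eq.~\eqref{eq:conditional-bijectivity-bound}. The bound is linear, so it extends to convex hulls.

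The main obstacle is step two. I need to rigorously justify that, in each of the four configurations, fixing the other inputs leaves a channel from $w_i$ with signaling dimension at most $c_i$. This is delicate for $\mathcal{C}^{\texttt{E}}$ and $\mathcal{C}^{\texttt{NS}}$, where correlations with the other senders' measurements might seem to help. I would argue by absorbing the other senders' devices, their fixed inputs, and the decoder into a post-processing map. This yields an entanglement- or NS-assisted classical channel of $c_i$ symbols from sender $i$ to a combined receiver, whose signaling dimension is $c_i$ by the cited no-hypersignaling results. The key point is that $\mathcal{Q}^{\texttt{E}}$ is excluded precisely because there this absorption yields entanglement-assisted quantum communication, which can reach $c_i^2$.
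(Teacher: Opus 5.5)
Your proposal is correct. Its outer structure matches the paper's: fix $i$, merge the senders in $[n]\setminus\{i\}$ into one party, use only $\mathcal{W}_{[n]\setminus\{i\}}$-conditional bijectivity, and minimize over $i$.

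The difference is in the two-party step. The paper relaxes each configuration to a two-sender network: a super-sender of signaling dimension $\prod_{j\neq i}c_j$, with the $n$-party nonsignaling box relaxed to a bipartite one. It then cites the two-sender bounds of \cite{george2026_dnc}: Theorem~7 for $\mathcal{Q}$, Theorem~8 for $\mathcal{C}^{\texttt{E}}$ and $\mathcal{C}^{\texttt{NS}}$, and inclusion for $\mathcal{C}$. You prove that bound directly. You condition on $\hat w$, show the induced channel $w_i\mapsto z$ is simulable with $c_i$ symbols plus shared randomness, and count how many values of the bijection $g(\cdot,\hat w)$ can be hit. The paper's route is shorter and delegates the configuration-specific work to established results. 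Yours is self-contained. It also makes explicit that the super-sender's dimension plays no role, and that only two things are used: the $c_i$-symbol bottleneck and no-signaling from sender $i$ to the rest.

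One justification should be replaced. For $\mathcal{C}^{\texttt{E}}$ and $\mathcal{C}^{\texttt{NS}}$ you appeal to \cite{Frenkel2015_classical_information_n-level_quantum_system,dallarno2017_no_hypersignaling}. Those works concern prepare-and-measure with an $n$-level quantum system and a GPT principle, respectively, not classical channels assisted by shared correlations. The correct argument is elementary. Write $\mu_{-i}$ for the other senders' outcomes. With $\hat w$ fixed, no-signaling from sender $i$ to the group gives
\[
p(\mu_i,\mu_{-i}\,|\,w_i,\hat w)=p(\mu_{-i}\,|\,\hat w)\,p(\mu_i\,|\,w_i,\mu_{-i},\hat w).
\]
So $\mu_{-i}$ acts as shared randomness between sender $i$'s encoder and the decoder, and the induced channel from $w_i$ to $z$ is a mixture of channels routed through $c_i$ symbols.

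This needs the \emph{joint} marginal of $[n]\setminus\{i\}$ to be independent of $a_i$, which is exactly the condition the paper imposes in its bipartite relaxation. It also relies on the group's inputs to the shared resource being fixed at $\hat w$ rather than chosen after seeing $\mu_i$. Describing the result as a generic ``NS-assisted classical channel'' hides this feature and makes the claim harder to justify than it actually is.
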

 \begin{proof}
     Fix $i \in [n]$. We begin by relaxing the function to the two-sender setting. To do this, we let the $[n]\setminus\{i\}$ senders act as a single sender with a single channel of signaling dimension $\wt{c} \coloneq \prod_{j \in [n]\setminus \{i\}} c_{j}$ of the appropriate type for the relevant communication resource (e.g.~quantum or classical). In the case of $\mathcal{C}^{\texttt{NS}}(\vec{c})$, we further relax the constraint that the multipartite non-signaling box is non-signaling between each party to being non-signaling between party $i$ and the remaining $[n]\setminus\{i\}$ parties, resulting in a bipartite fully classical non-signaling box.\footnote{
     The relaxation to the bipartite fully classical non-signaling box is formalized as follows.
     The set of boxes for $n$ `parties' where party $i$ takes input $a_{i} \in \cA_{i}$ and outputs $\mu_{i} \in \mathcal{M}_{i}$ and does not signal to any other user is mathematically the set of conditional distributions $p(\mu_{1},...,\mu_{n}\vert a_{1},...,a_{n})$ where for each $i$, the marginal of party $i$ does not depend on the other parties, \textit{i.e.}~for all $i \in [n]$,
     \begin{equation}\label{eq:NS-for-party-i}
     \begin{aligned}
         & \forall (a_{i},\mu_{i},\vec{a},\vec{a}^{\:\prime}) \in \cA_{i} \times \mathcal{M}_{i} \times \cA_{[n]\setminus \{i\}} \times \cA_{[n]\setminus \{i\}} \ , \\
         & \hspace{2mm} \sum_{\vec{\mu} \in \mathcal{M}_{[n]\setminus\{i\}}} p(\mu_{i},\vec{\mu} \vert a_{i}, \vec{a}) = \sum_{\vec{\mu} \in \mathcal{M}_{[n]\setminus\{i\}}} p(\mu_{i},\vec{\mu} \vert a_{i}, \vec{a}^{\:\prime}) \ . 
     \end{aligned}
     \end{equation}
     We can obtain a larger, \textit{i.e.}~relaxed, set of conditional distributions by only demanding the marginal of party $i$ and the marginal of the remaining parties be independent, \textit{i.e.} $p(\mu_{1},...,\mu_{n}\vert a_{1},...,a_{n})$ satisfies for a single $i \in [n]$ that \eqref{eq:NS-for-party-i} holds and then also that
     \begin{equation}
     \begin{aligned}
         & \forall (\vec{a},\vec{\mu},a_{i},a_{i}') \in \cA_{[n]\setminus \{i\}} \times \mathcal{M}_{[n]\setminus \{i\}} \times \cA_{i} \times \cA_{i} \ , \\
         & \hspace{2mm} \sum_{\mu_{i} \in \mathcal{M}_{i}} p(\mu_{i},\vec{\mu} \vert a_{i}, \vec{a}) = \sum_{\mu_{i} \in \mathcal{M}_{i}} p(\mu_{i},\vec{\mu} \vert a_{i}', \vec{a}) 
     \end{aligned}
     \end{equation}
     holds. This is less restrictive and is the mathematical formulation of no signaling between party $i$ and the `effective party' $[n]\setminus \{i\}$.} Thus, we have relaxed each communication resource setting  to a two-sender MN. 
     
     Now, as $g$ is assumed to be symmetrically-conditionally bijective, $g$ is $\cW_{[n]\setminus \{i\}}$-conditionally bijective.
     As we have relaxed to a scenario in which a conditionally bijective function is evaluated over a two-sender MN, we can apply the main results of Reference~\cite{george2026_dnc}. In particular, for $\cW_{[n]\setminus \{i\}}$-conditionally bijective function $g$, it holds that $P_S^\star(g, \mathcal{Q}(c_i,\wt{c})) \leq \frac{c_i}{|\mathcal{Z}|}$ \cite[Theorem 7]{george2026_dnc}, and it holds that $P^\star_S(g,\mathcal{C}^{\texttt{E}}(c_i, \wt{c}))\leq \frac{c_{i}}{\vert \cZ \vert}$ and  $P^\star_S(g,\mathcal{C}^{\texttt{NS}}(c_i, \wt{c}))\leq \frac{c_{i}}{\vert \cZ \vert}$ \cite[Theorem 8]{george2026_dnc}.
     Since $\cC(\vec{c})\subseteq\cQ(\vec{c})$ and $\mathcal{C}(\vec{c})\subseteq\cC^{\texttt{E}}(\vec{c})\subseteq\cC^{\texttt{NS}}(\vec{c})$, it follows from the previous equations that $P_S^\star(g, \mathcal{S}(c_i,\wt{c})) \leq \frac{c_i}{|\mathcal{Z}|}$.
     As the choice of $i \in [n]$ was arbitrary, we may apply this argument to every $i$ resulting in the minimization in Eq.~\eqref{eq:conditional-bijectivity-bound}.
 \end{proof}

The bound induced by a $\mathcal{W}_{[n]\setminus \{i\}}$-conditionally bijective function, $P^\star_S(g,\mathcal{S}(\vec{c}))\leq \frac{c_{i}}{\vert \cZ \vert}$, corresponds to the communication constraint that the classical capacity of a channel cannot be improved by unassisted quantum communication or  classical communication assisted by non-signaling resources \cite{Holevo1973BoundsFT, EA-capacity, Cubitt2011_NS-capacity, Frenkel2015_classical_information_n-level_quantum_system,dallarno2017_no_hypersignaling,chitambar2023_communication_value}. 
For example, entanglement-assisted classical multiaccess networks demonstrate communication advantages in playing nonlocal games \cite{Leditzky2020_mac_games_ea_cmac}, nonclassicality witnessing \cite{Zhang2022_single_particle_mac,doolittle2024operational_nonclassicality}, and information theoretic quantities \cite{pereg2025_eacmac_information,zivarifard2026ea_cmac}.
As shown in Reference~\cite{george2026_dnc}, the bound in Theorem~\ref{thm:conditional-bijectivity-network-communication-bounds} relies upon the fact that shared nonsignaling resources do not allow the encoding parties to communicate with each other  \cite{popescu1994quantum_pr_box,barrett2005_general_pr_box,Barrett2005_multiparty_pr_box,linden2007_limits_nonlocal_computation}.

Overall, Theorem~\ref{thm:conditional-bijectivity-network-communication-bounds} asserts that for channels $\mathbf{P}\in\mathcal{S}(\vec{c})$ the success probability of evaluating a symmetrically-conditionally bijective function over a MN is constrained by the signaling dimension $\vec{c}$.
Moreover, if $P_S^\star(g,\mathcal{C}(\vec{c}))$ achieves the upper bound in Eq.~\eqref{eq:conditional-bijectivity-bound}, then the resource configurations $\mathcal{C}^{\texttt{E}}(\vec{c})$, $\mathcal{C}^{\texttt{NS}}(\vec{c})$ and $\mathcal{Q}(\vec{c})$, show no communication advantage over $\mathcal{C}(\vec{c})$.
As proven in Reference~\cite{george2026_dnc}, and generalized in Section~\ref{section:dense-affine-transformation}, entanglement-assisted quantum MNs $\mathcal{Q}^{\texttt{E}}(\vec{c})$ are an exception that can surpass the bound in Eq.~\eqref{eq:conditional-bijectivity-bound}.
As a result, the bounds induced by conditionally bijective functions are nonclassicality witnesses \cite{Bowles2015_nonclassicality_communication_networks,doolittle2024operational_nonclassicality}, and can be used to detect entanglement-assisted quantum MNs.
We will describe this relationship in more detail as we discuss certification of graph state entanglement visibility in Section~\ref{section:certifying_communication_resources}.

\section{Quantum Dense Network Coding with Graph States}
\label{section:dense-affine-transformation}

In this section, we generalize quantum dense network coding to $n$-sender MNs.
In Section~\ref{section:qudit-graph-states}, we introduce qudit graph states. In Section~\ref{section:dense_network_coding_protocol}, we present a dense network coding protocol that utilizes graph states to implement a family of channels $\mathbf{P}\in\mathcal{Q}^{\texttt{E}}(\vec{d})$ that perform an affine transformations across the inputs to each party.
In Section~\ref{section:classical_multiaccesss_networks}, we derive a nonclassicality witness that bounds $\mathcal{S}(\vec{d})$ in the multiparty affine transformation, proving that the protocol requires both entanglement-assisted senders and quantum communication to violate the bound.
In Section~\ref{section:classical_achievability}, we provide sufficient conditions for classical MNs to be able to achieve the upper bound on $\mathcal{S}(\vec{d})$.
In Section~\ref{section:classical_communication_cost}, we derive the classical communication cost of the multiparty affine transformation, showing that the quantum dense network coding protocol uses half as many qudits of communication as dits.
In Section~\ref{section:noise_robustness}, we characterize the noise robustness  of the dense network coding communication with respect to depolarizing noise.
For illustrative examples of quantum dense network coding in two-sender MNs, as well as $n$-sender MNs using graph states, please refer to 
Appendix~\ref{section:appendix-illustrative-examples}.

\subsection{Qudit Graph States and Measurements}\label{section:qudit-graph-states}

\begin{figure}
    \centering
    \small
    \resizebox{\columnwidth}{!}{
        \begin{tabular}{c c c c c}
            {\normalsize (a)}  & & {\normalsize (b)} & & {\normalsize (c)} \\
            \begin{tikzpicture}
                \node[graph_node] (1) at (0, 2) {$1$};
                \node[graph_node] (2) at (1, 2) {$2$};
                \node[graph_node] (3) at (2, 2) {$3$};
              
                \path (2) \gedge node[el, below=1pt] {$s$} (3);
   
            \end{tikzpicture} & & \begin{tikzpicture}
                
                \node[graph_node] (1) at (0, 2) {$1$};
                \node[graph_node] (2) at (1, 2) {$2$};
                \node[graph_node] (3) at (2, 2) {$3$};
                
                \path (1) \gedge node[el, below=1pt] {$r$} (2);
                \path (2) \gedge node[el, below=1pt] {$s$} (3);
            \end{tikzpicture} & & \begin{tikzpicture}
                
                \node[graph_node] (1) at (0, 2) {$1$};
                \node[graph_node] (2) at (1, 2) {$2$};
                \node[graph_node] (3) at (2, 2) {$3$};

                \path (1) \gedge node[el, below=1pt] {$r$} (2);
                \path (2) \gedge node[el, below=1pt] {$s$} (3);
                \path (1) \gedge[bend left=45] node[el, xshift=-22pt] {$q$} (3);
            \end{tikzpicture} 
            \\
            \hfill \\
            $\begin{pmatrix}
                0 & 0 & 0 \\
                0 & 0 & s \\
                0 & s & 0 \\
            \end{pmatrix}$ & & $\begin{pmatrix}
                0 & r & 0 \\
                r & 0 & s \\
                0 & s & 0 \\
            \end{pmatrix}$ & & 
            $\begin{pmatrix}
                0 & r & q \\
                r & 0 & s \\
                q & s & 0 \\
            \end{pmatrix}$ \\ 
        \end{tabular}
}
    \caption{\textbf{Basic graph states.} Three-qudit graph states with corresponding adjacency matrix with edge magnitudes $q,r,s\in\mathbb{Z}_d$. (a) A qudit with a maximally entangled pair. (b) A $\texttt{Star}(2,1)$ or $\texttt{GHZ}(3)$  graph state. (c) A $\texttt{Comp}(3)$ (complete) graph state.}
    \label{fig:basic_graph_state_examples}
\end{figure}

We characterize qudit systems using the Heisenberg-Weyl group $\mathcal{G}_d$, which can be represented by the set of unitary operators $\{W_{j,k}:=X^j Z^k\}_{(j,k)\in \mathbb{Z}^{\times 2}_{d}}$ where $X$ and $Z\in \mathbb{C}^{d\times d}$ are defined on the computational basis as $X \ket{m} = \ket{m + 1}$ and $Z \ket{m} = \omega^m \ket{m}$ where $\omega = e^{\frac{2 \pi i}{d}}$ is a complex phase factor.
For more details on the discrete Heisenberg-Weyl group and its associated algebra, please refer to Appendix~\ref{section:discrete_heisenberg-weyl_operators}.

Following References~\cite{Markham2007_graph_state_entanglement_local_measurement,keet2010graph_stae_QSS,helwig2013absolutelymaximallyentangledqudit,aigner2025_qudit_stabilizer_formalism}, we introduce graph states as a family of $n$-qudit entangled states that are represented by an \textit{adjacency matrix}, $\mathbf{A}\in \mbb{Z}_d^{n\times n}$ that is symmetric $\mathbf{A} = \mathbf{A}^T$ and represents a graph as follows:
\begin{align}
    \texttt{Graph}&(\mathbf{A}):=\{  \\
     & \texttt{Nodes}(\mathbf{A}) := [n] = \{1,\dots,n\}, \notag\\
     &\texttt{Edges}(\mathbf{A}):=\{\{i,j\}\subseteq[n]:i < j, \; \mathbf{A}_{i,j}\neq 0\}\notag \\
    \} \qquad &\notag
\end{align}
where $\mathbf{A}_{i,i}=0$ for all $i\in[n]$ and $N(i):=\{j\in [n]: \mathbf{A}_{i,j} \neq 0\}$
is the set of all nodes that neighbor node $i$ (see Fig.~\ref{fig:basic_graph_state_examples}).
Given the adjacency matrix of a graph $\mathbf{A}$, the corresponding $n$-qudit graph state is defined as
\begin{equation}\label{eq:graph_state_def_main_text}
    \ket{\Gamma^{\mbf{A}}_0} := \prod_{\{i,j\}\in\texttt{Edges}(\mathbf{A})} CZ^{\mbf{A}_{i,j}}_{i,j}\ket{\overline{0}}
\end{equation}
where the controlled phase gate is defined as $CZ_{i,j} := \sum_{k \in \mathbb{Z}_d}\op{k}{k}_i\otimes Z_{j}^{k}$ and the initial state $\ket{\overline{0}}$ is in the Fourier basis $\ket{\overline{k}} = F \ket{k} = \frac{1}{\sqrt{d}} \sum_{\ell\in\Zbb_d}\omega^{k\ell}\ket{\ell}$. 
An orthonormal basis of graph states is constructed as
\begin{equation}\label{eq:ONB-of-graph-states}
    \{\ket{\Gamma_z^{\mbf{A}}} = \textstyle\prod_{i\in[n]} Z_i^{z_i}\ket{\Gamma_0^{\mbf{A}}}\}_{z\in\mathbb{Z}_d^n}
\end{equation}
where $|\ip{\Gamma^{\mathbf{A}}_{z'}|\Gamma^{\mathbf{A}}_z}|^2=\delta_{z',z}$.
For more details on graph states and measurements, please refer to Appendix~\ref{section:graph_states_and_measurements}.

\subsection{Quantum Dense Network Coding Multiparty Affine Transformations}\label{section:dense_network_coding_protocol}

In dense network coding, an $n$-sender entanglement-assisted quantum MN (see Fig.~\ref{fig:multipartite_dag}.d) implements a channel $\mathbf{P}\in\mathcal{Q}^{\texttt{E}}(\vec{d})$ that evaluates a symmetrically-conditionally bijective function $g$ with probability $P_S(g, \mathbf{P}) =1$.
The protocol is densely coded because half as many qudits  of communication are needed to perform the same task as compared to dits.
The key elements of quantum dense network coding are the initialized entangled state $\ket{\psi}$ and the set of unitaries that encode the output of the conditionally bijective function onto the entangled basis as $\omega'\ket{\psi_{g(a)}} = U^{\texttt{Net}}_{a}\ket{\psi_0}$ up to a global phase factor $\omega'$.

We define our main function of study, $g^{\mathbf{A}}:\Zbb_d^{n} \times \Zbb_d^{n}  \to \Zbb_d^n$, as a family of affine transformations
\begin{equation}\label{eq:affine_transformation_vector}
    g^{\mathbf{A}}(x,y) := y - \mathbf{A} x \mod d
\end{equation}
where
$\mathbf{A}\in\mathbb{Z}_d^{n\times n}$ is symmetric and $\mathbf{A}_{i,j}$ is coprime with $d$ for each $\{i,j\}\in\texttt{Edges}(\mathbf{A})$.
As we show in Proposition~\ref{prop:heisenber-weyl-graph-state-encoding} of Appendix~\ref{section:appendix-quantum-characterization-proofs}, if the discrete Weyl operators are applied to a graph state as encoding unitaries, then the output of the function in Eq.~\eqref{eq:affine_transformation_vector} is encoded onto the graph state basis as
\begin{equation}\label{eq:dense_state_encoding}
    \ket{\Gamma_{g^{\mathbf{A}}(x,y)}^{\mathbf{A}}} = \omega' W^{\texttt{Net}}_{x,y}\ket{\Gamma_0^{\mbf{A}}} = \omega'\prod_{i\in[n]} W_{x_i,y_i}\ket{ \Gamma_0^{\mbf{A}}}
\end{equation}
where $\omega'\in \mathbb{C}$ is a scalar phase factor that can be ignored.

\begin{protocoldesc}
		\caption{\textbf{: Quantum Dense Network Coding Multiparty Affine Transformation.} Using an $n$-sender entanglement-assisted quantum MN, implement a channel $\mathbf{P}\in\mathcal{Q}^{\texttt{E}}(\vec{d})$ that evaluates the affine transformation $g^{\mathbf{A}}(x,y)$ from Eq.~\eqref{eq:affine_transformation_vector} with unit success probability, $P_S(g^{\mathbf{A}},\mathbf{P})=1$.}\label{protocol:dense_affine_transformation}
    
        \textbf{Inputs:} \\
		\hspace{0.25cm}
		\begin{tabular}{l l l}
            $x,y \in \Zbb^n_d$ &: & For each $i\in [n]$, the $i^{th}$ sender is \\
            & & given input $a_i = (x_i, y_i)\in\Zbb_d^2$ .\\
		\end{tabular}

        \vspace{0.25cm}
        \textbf{Output:} \\
		\hspace{0.25cm}
        \begin{tabular}{l l l}
			$z \in \Zbb^n_d $ & : & The receiver outputs $z=g^{\mathbf{A}}(x,y)$ with \\
            & & success probability $P^\star_S(g^{\mathbf{A}}, \mathbf{P})=1.$ \\
		\end{tabular}

        \vspace{0.25cm}
		\textbf{Protocol:} 
		\begin{enumerate}
            \item[0.] \textbf{Initialization:} An $n$-qudit graph state $\ket{\Gamma_0^{\mathbf{A}}}$ is distributed to the senders in the multiaccess network where the $i^{th}$ qudit of the graph state is given to sender $i$.
            \item \textbf{Encoding:} For each $i\in [n]$, the sender encodes their input, $(x_i,y_i)\in\mathbb{Z}_d^2$, onto their local qudit state using the discrete Weyl operator $W_{x_i,y_i}$, which from Eq.~\eqref{eq:dense_state_encoding}, produces the jointly encoded state $| \Gamma_{g^{\mathbf{A}}(x,y)}^{\mathbf{A}}\rangle$ up to a phase.
            \item \textbf{Transmission:} Each sender transmits their encoded qudit to the receiver over a noiseless quantum channel of signaling dimension $d$.
            \item \textbf{Decoding:} The receiver obtains the encoded $n$-qudit state $|\Gamma_{g^\mathbf{A}(x,y)}^{\mathbf{A}}\rangle$ up to a phase and measures it in the graph basis $\{\ket{\Gamma_z^{\mbf{A}}} \}_{z\in\mathbb{Z}_d^n}$ to obtain outcome $z$ with probability $P_{z|x,y} =|\ip{\Gamma_z^{\mbf{A}}|\Gamma_{g^{\mathbf{A}}(x,y)}^{\mbf{A}}}|^2 =\delta_{z,g^{\mathbf{A}}(x,y)}$.
		\end{enumerate}
	\end{protocoldesc}

It is straightforward to check that the channel implemented by Protocol~\ref{protocol:dense_affine_transformation} achieves unit success probability in the multiparty affine transformation task as $P_S(g^{\mathbf{A}}, \mathbf{P}) = 1$.
We now characterize the success probability that can be achieved for general quantum states $\rho\in D(\Hmc_d^{n})$ and/or positive operator-valued measure (POVM) $\{\Pi_{z}\}_{z\in\Zbb_d^n}$.
Overall, for state $\rho$, POVM $\Pi$, and discrete Weyl operator encoding, the induced channel is
\begin{equation}\label{eq:prob_eaqmac}
    \mathbf{P}_{z|x,y} = \tr{\Pi_{z}(W^{\texttt{Net}}_{x,y})\rho(W^{\texttt{Net}}_{x,y})^\dagger} = \tr{\Pi_z \rho_{x,y}} \ ,
\end{equation}
in which we assume that the discrete Weyl operators are applied without error, such that the senders jointly  encode $\rho_{x,y} = W^{\texttt{Net}}_{x,y} \rho (W^{\texttt{Net}}_{x,y})^{\dagger}$ where $W^{\texttt{Net}}_{x,y}=\bigotimes_{i\in[n]} W_{x_i,y_i}$.
Following Eq.~\eqref{eq:function_success_probability}, the  probability of success for the quantum strategy is 
\begin{equation}\label{eq:general-state-success}
    P_S(g^{\mathbf{A}},\rho,\{\Pi_z\}_z) := \frac{1}{d^{2n}} \sum_{x,y,z\in\mathbb{Z}_d^n} \delta_{z,g^{\mathbf{A}}(x,y)}\tr{\Pi_z \rho_{x,y}} \ ,
\end{equation}
for which $P_S(g^{\mathbf{A}},\rho,\{\Pi_z\}_z) = \frac{1}{d^n}\sum_{z\in\mathbb{Z}_d^n}\tr{\Pi_z \widetilde{\rho}_{z}^{\mathbf{A}}}$
where 
\begin{equation}\label{eq:post-selected-twirl}
    \widetilde{\rho}_{z}^{\mathbf{A}} := \frac{1}{d^n}\sum_{x,y\in\mathbb{Z}_d^n} \delta_{z,g^{\mathbf{A}}(x,y)}\rho_{x,y}
\end{equation}
is the uniform mixture of all encodings that satisfy $z= g(x,y)$.
As a result, the multiparty affine transformation can be interpreted in the context of state discrimination where the receiver must discriminate the set of states, $\{\widetilde{\rho}^{\mathbf{A}}_z\}_{z\in\mathbb{Z}_d^n}$, using a $d^n$-outcome POVM measurement.
Furthermore, rearranging Eq~\eqref{eq:dense_state_encoding} yields $(W^{\texttt{Net}}_{x,y})^{\dagger}\ket{\Gamma^{\mathbf{A}}_{g^{\mathbf{A}}(x,y)}} = \omega'\ket{\Gamma^{\mathbf{A}}_{0}}$, such that ideal graph basis measurements have success probability
\begin{equation}\label{eq:ideal_meas_identity}
    P_S(g^{\mathbf{A}}, \rho, \{\op{\Gamma^{\mathbf{A}}_z}{\Gamma^{\mathbf{A}}_z}\}_z) = \ip{\Gamma^{\mathbf{A}}_0|\rho|\Gamma^{\mathbf{A}}_0} \ .
\end{equation}

\subsection{Communication Constraints on Multiparty Affine Transformations }\label{section:classical_multiaccesss_networks}

In this section, we characterize the communication advantage and noise robustness of the dense multiparty affine transformation protocol.
In Theorem~\ref{thm:graph_fn_classical_bound}, we derive a bound on the success probability of the set of multiaccess network resource configurations $\mathcal{S}(\vec{d})$ as defined in Eq.~\eqref{eq:classically_bound_resources_configurations}.
To establish our limits, we must bound the ability to guess the function when all inputs but $x_{i}$ and $y_{j}$ are fixed for some $\{i,j\} \in \texttt{Edges}(\mathbf{A})$. To this end, we define a few functions induced by $g^{\mbf{A}}$. First, for $i \in [n]$, we denote the function that determines the $i^{th}$ output:
\begin{align}
    g^{\mathbf{A}}_i(x_{N(i)},y_i) \coloneq y_i - \sum_{j\in N(i)}\mathbf{A}_{i,j}x_j \bmod d \label{eq:output-i-function} \ .
\end{align}
Second, when all inputs except those for $i$ and $j$ are fixed variables as $\hat{x} = (\hat{x}_k)_{k\in [n]\setminus \{i,j\}}$, the effective function for the $i^{th}$ output is defined by
\begin{align}\label{eq:output-i-function_fixed}
    g^{\mathbf{A}}_{i,\hat{x}}(x_{j},y_{i}) \coloneq g_{i}^{\mbf{A}}(x_{j},y_{i},\hat{x})  \ .
\end{align}
Finally, we use Eq.~\eqref{eq:output-i-function_fixed} to define the function that outputs the $i^{th}$ and $j^{th}$ output when all inputs except those for $i$ and $j$ are fixed:
\begin{align}
    g^{\mathbf{A}}_{i,j,\hat{x}}(x_i,y_i,x_j,y_j,\hat{x}) = \left(g^{\mathbf{A}}_{i,\hat{x}}(y_i,x_j), \ g^{\mathbf{A}}_{j,\hat{x}}(y_j,x_i)\right)  \ .
\end{align}
We now establish that these functions satisfy the following conditional bijectivity properties.

\begin{proposition}\label{prop:cond-bij-props-of-restricted-functions-main-text}
    \textbf{Conditions for Conditional Bijectivity:} Let $\mathbf{A}_{i,j}=\mathbf{A}_{j,i}$ be coprime with $d\geq 2$ for all $\{i,j\}\in\texttt{Edges}(\mathbf{A})$. The following claims hold:

    \begin{enumerate}
        \item For all $i\in[n]$, the function $g^{\mathbf{A}}_i(x_{N(i)},y_i)$ is symmetrically-conditionally bijective.
        \item For $\{i,j\} \in \texttt{Edges}(\mbf{A})$, and fixed inputs $\hat{x} = (\hat{x}_k)_{k\in [n]\setminus \{i,j\}}$, the function $g_{i,j,\hat{x}}^{\mbf{A}}: \cX_{i} \times \cX_{j} \times \cY_{i} \times \cY_{j} \to \mbb{Z}_{d}^{2}$ 
        is doubly-conditionally bijective with respect to $\mathcal{A}_i=\mathcal{X}_i\times\mathcal{Y}_i$ and $\mathcal{A}_j=\mathcal{X}_j\times\mathcal{Y}_j$.
    \end{enumerate}
\end{proposition}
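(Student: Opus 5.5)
Both claims reduce to the fact that, for $a$ coprime with $d$, the map $u\mapsto c+au \bmod d$ is a bijection of $\Zbb_d$ for every offset $c\in\Zbb_d$. Such an $a$ has an inverse $a^{-1}$ in $\Zbb_d$, so the inverse map is $v\mapsto a^{-1}(v-c)$. I will verify Definition~\ref{def:conditional-bijectivity} directly in each case, taking $M=[n']\setminus\{k\}$ for each variable $k$ of the function in question.

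For claim 1, the function $g^{\mathbf{A}}_i(x_{N(i)},y_i)=y_i-\sum_{j\in N(i)}\mathbf{A}_{i,j}x_j \bmod d$ has the variables $y_i$ and $x_j$ for $j\in N(i)$. All of these take values in $\Zbb_d$, and the output alphabet is $\cZ=\Zbb_d$. I would treat the two kinds of variable separately.

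If every variable except $y_i$ is fixed, the function is $y_i\mapsto y_i-c$. This is a translation, so it is a bijection.

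If every variable except one $x_j$ is fixed, with $j\in N(i)$, the function is $x_j\mapsto c-\mathbf{A}_{i,j}x_j$. Since $\{i,j\}$ is an edge, $\mathbf{A}_{i,j}$ is coprime with $d$, so $-\mathbf{A}_{i,j}$ is invertible mod $d$ and this map is a bijection.

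Together these give conditional bijectivity with respect to every complement-of-a-singleton, which is exactly symmetric conditional bijectivity.

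For claim 2, I read $g^{\mathbf{A}}_{i,j,\hat{x}}$ as a bivariate function of $a_i=(x_i,y_i)$ and $a_j=(x_j,y_j)$ with values in $\Zbb_d^2$. Note that $j\in N(i)$ and $i\in N(j)$ because $\mathbf{A}$ is symmetric. With $\hat{x}$ fixed, the function becomes
\[
(x_i,y_i,x_j,y_j)\mapsto\big(y_i-\mathbf{A}_{i,j}x_j-c_i,\; y_j-\mathbf{A}_{j,i}x_i-c_j\big),
\]
where the constants $c_i,c_j$ collect the contributions of $\hat{x}$ and of any diagonal terms. These constants depend only on $\hat{x}$, since $\mathbf{A}_{i,i}=0$.

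Now fix $a_j=(x_j,y_j)$. The map $(x_i,y_i)\mapsto(y_i-\mathbf{A}_{i,j}x_j-c_i,\ y_j-\mathbf{A}_{j,i}x_i-c_j)$ splits into two independent coordinate maps: the first coordinate is a translation in $y_i$, and the second is an invertible affine map in $x_i$. Hence it is a bijection $\Zbb_d^2\to\Zbb_d^2$. I would write down the explicit inverse,
\[
x_i=\mathbf{A}_{j,i}^{-1}(y_j-c_j-z_2),\qquad y_i=z_1+\mathbf{A}_{i,j}x_j+c_i .
\]
The case with $a_i$ fixed is symmetric, swapping $i$ and $j$. This gives double conditional bijectivity, consistent with $|\cA_i|=|\cA_j|=|\cZ|=d^2$.

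I expect no real obstacle in either claim. The only delicate point is notational. The paper writes the arguments of $g_{i,\hat{x}}$ in varying orders, so I need to fix the correct pairing of variables: output $i$ depends on $(y_i,x_j)$ and output $j$ on $(y_j,x_i)$. With that fixed, each party's input controls one coordinate through a translation and the other through an invertible scaling. I also need to check that the term $\mathbf{A}_{i,i}x_i$ vanishes so that nothing couples $x_i$ into output $i$; this holds because the diagonal of $\mathbf{A}$ is zero. Coprimality is used only to guarantee that $\mathbf{A}_{i,j}$ is invertible mod $d$.
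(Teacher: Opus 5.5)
Your proposal is correct and follows essentially the same route as the paper. For claim 1, the appendix likewise solves $g^{\mathbf{A}}_i$ for $y_i$, which is a translation, and for each $x_k$ with $k\in N(i)$, using the inverse of $\mathbf{A}_{i,k}$ mod $d$. For claim 2, it likewise observes that fixing one party's pair leaves a product of two one-variable bijections, one acting on $y$ and one on $x$; your explicit inverses make this concrete, with the correct pairing of variables and signs, which the appendix's notation somewhat garbles.
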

\begin{proof}
    The proofs for claims 1) and 2) are respectively found in Appendix~\ref{section:appendix-classical-bound}, Proposition~\ref{prop:single_node_guessing_prob} and Proposition~\ref{prop:neighbor_node_guessing_probability}.
\end{proof}

As expressed by Proposition~\ref{prop:cond-bij-props-of-restricted-functions-main-text}, the functions $g^{\mathbf{A}}_{i,\hat{x}}$ and $g^{\mathbf{A}}_{i,j,\hat{x}}$ are each symmetrically-conditionally bijective whenever $\mathbf{A}_{i,j}$ is coprime with $d$ for all $\{i,j\}\in \texttt{Edges}(\mathbf{A})$. This constraint is required to ensure that the modulo multiplicative inverse $\mathbf{A}_{i,j}^{-1}$ exists \cite{ireland2013_number_theory}. For the remainder of the work, we will assume
that the elements of $\mathbf{A}_{i,j}$ are coprime with $d$.

\begin{proposition}\label{prop:subfunction_bounds} \textbf{Two-Sender Conditional Bijectivity Bounds:}
    Let
    $\mathbf{A}_{i,j}=\mathbf{A}_{j,i}$ be coprime with $d\geq2$ for all $\{i,j\}\in\texttt{Edges}(\mathbf{A})$. The following bounds hold:
    \begin{align}
        &P^{\star}_{S}(g^{\mathbf{A}}_{i},\cS(\vec{c})) \leq P^{\star}_{S}(g^{\mathbf{A}}_{i,\hat{x}},\cS(\vec{c})) \leq \frac{1}{d}\min\{c_{i},c_{j}\} \label{eq:giA-network-communication-bound}\\
        &P^\star_S(g^{\mathbf{A}}_{i,j}, \mathcal{S}(\vec{c})) \leq P^\star_S(g^{\mathbf{A}}_{i,j,\hat{x}}, \mathcal{S}(\vec{c})) \leq \frac{1}{d^2}\min\{c_i,c_j\} \label{eq:bipartite-bound-success-probability}
    \end{align}
\end{proposition}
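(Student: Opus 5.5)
Each displayed inequality consists of two parts. The first is a reduction from the original function to the restricted function with all inputs outside $\{i,j\}$ fixed. The second is an application of Theorem~\ref{thm:conditional-bijectivity-network-communication-bounds}, together with the bivariate results of Reference~\cite{george2026_dnc}, to that restricted function. I would handle them in that order.

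\textbf{First inequality (fixing inputs).} Write the success probability of any $\mathbf{P}\in\cS(\vec{c})$ as an average over the fixed inputs $\hat{x}$ (and over the $\hat{y}$ of the other parties, where relevant). For the $i$-th output only the variables $x_{N(i)}$ and $y_i$ matter, so $P_S(g^{\mathbf{A}}_i,\mathbf{P})=\mathbb{E}_{\hat{x}}\,P_S(g^{\mathbf{A}}_{i,\hat{x}},\mathbf{P}_{\hat{x}})$. Here $\mathbf{P}_{\hat{x}}$ is the channel obtained by hard-wiring $\hat{x}$ into the encoders of the parties $k\notin\{i,j\}$.

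Hard-wiring inputs into the local encoders keeps the channel in the same resource configuration: it is still classical, quantum, or entanglement/NS-assisted, with the same signaling dimensions. So each term is at most $P^\star_S(g^{\mathbf{A}}_{i,\hat{x}},\cS(\vec{c}))$. This quantity does not depend on $\hat{x}$, because changing $\hat{x}$ only shifts the output by a constant, which the decoder can relabel. An average is at most its maximum, which gives the first inequality in Eq.~\eqref{eq:giA-network-communication-bound}. The same argument gives the first inequality of Eq.~\eqref{eq:bipartite-bound-success-probability}, now applied to the pair of outputs $(i,j)$.

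\textbf{Second inequality, single output.} By Proposition~\ref{prop:cond-bij-props-of-restricted-functions-main-text}, $g^{\mathbf{A}}_{i,\hat{x}}(x_j,y_i)=y_i-\mathbf{A}_{i,j}x_j-\text{const}$ is symmetrically-conditionally bijective in its two variables, since $\mathbf{A}_{i,j}$ is invertible mod $d$. Its output alphabet is $\mathbb{Z}_d$, and its inputs are held by senders $i$ and $j$. The senders other than $i$ and $j$ now carry no relevant input; their communication can be absorbed into the two remaining channels, or we can use the fact that the bound of Theorem~\ref{thm:conditional-bijectivity-network-communication-bounds} only involves $c_i$ and $c_j$. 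Applying Theorem~\ref{thm:conditional-bijectivity-network-communication-bounds} then yields $\frac{1}{d}\min\{c_i,c_j\}$.

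\textbf{Second inequality, pair of outputs.} For $g^{\mathbf{A}}_{i,j,\hat{x}}$, the output lies in $\mathbb{Z}_d^2$, so $|\cZ|=d^2$. Part 2 of Proposition~\ref{prop:cond-bij-props-of-restricted-functions-main-text} gives double conditional bijectivity with respect to $\mathcal{A}_i$ and $\mathcal{A}_j$. I would then invoke the two-sender bound \cite[Proposition 44]{george2026_dnc}, which is the $n=2$ case of Theorem~\ref{thm:conditional-bijectivity-network-communication-bounds}, to get $\frac{1}{d^2}\min\{c_i,c_j\}$.

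\textbf{Main obstacle.} I expect the delicate step to be justifying that the restricted strategies remain inside $\cS(\vec{c})$. This needs care for the nonsignaling-assisted case: parties that have been fixed (or that are idle) still share the box. The fix is to treat their box outputs as part of party $i$'s or party $j$'s side, and to use the bipartite relaxation from the proof of Theorem~\ref{thm:conditional-bijectivity-network-communication-bounds}. That relaxation merges all other senders into one side, so the combined signaling dimension $\wt{c}$ appears only on the side whose capacity does not enter the minimum.

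A second point needs checking: the inputs of the unfixed parties enter $g_{i,j,\hat{x}}$ irrelevantly. For example, $y_k$ with $k\neq i,j$ does not affect outputs $i$ and $j$. This is harmless, since averaging over irrelevant inputs again cannot raise the maximum.
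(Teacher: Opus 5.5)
Your proposal is correct and follows the same route as the paper: the right-hand bounds come from the conditional bijectivity in Proposition~\ref{prop:cond-bij-props-of-restricted-functions-main-text} combined with Theorem~\ref{thm:conditional-bijectivity-network-communication-bounds}, with $|\cZ|=d$ and $|\cZ|=d^2$ respectively, and the left-hand inequalities come from fixing the remaining inputs, which can only make evaluation easier. Your averaging argument and your treatment of the idle senders (merging them onto the side whose signaling dimension does not enter the minimum, including in the nonsignaling case) make explicit what the paper's short proof leaves implicit.
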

\begin{proof}
    When $\mathbf{A}_{i,j}=\mathbf{A}_{j,i}$ is coprime with $d$ for all edges $\{i,j\}\in\texttt{Edges}(\mathbf{A})$, Proposition~\ref{prop:cond-bij-props-of-restricted-functions-main-text} holds such that $g^{\mathbf{A}}_{i,\hat{x}}$ and $g^{\mathbf{A}}_{i,j,\hat{x}}$ are each symmetrically-conditionally bijective. Applying Theorem~\ref{thm:conditional-bijectivity-network-communication-bounds} to the bivariate function $g_{i,\hat{x}}^{\mathbf{A}}$ with $|\mathcal{Z}|=d$ yields the upper bound in  Eq.~\eqref{eq:giA-network-communication-bound}. Applying Theorem~\ref{thm:conditional-bijectivity-network-communication-bounds} to the bivariate function $g_{i,j,\hat{x}}^{\mathbf{A}}$ with $|\mathcal{Z}|=|\mathcal{A}_i|=|\mathcal{A}_j|= d^2$ yields the upper bound in Eq.~\eqref{eq:bipartite-bound-success-probability}. The lower bounds result from $g_{i,\hat{x}}^{\mathbf{A}}$ and $g_{i,j,\hat{x}}^{\mathbf{A}}$ each being induced by fixing the inputs to $g^{\mathbf{A}}$, making the functions easier to evaluate.
\end{proof}

Before we describe our main result in Theorem~\ref{thm:graph_fn_classical_bound}, we must first introduce the maximum induced matching (MIM).
An induced matching is defined as a set of edges $\texttt{IM}(\mathbf{A})\subseteq \texttt{Edges}(\mathbf{A})$ such that for all $e_1,e_2\in\texttt{IM}(\mathbf{A})$ there are no shared nodes $e_1 \cap e_2 = \emptyset$, and there does not exist a third edge $e_3\in \texttt{Edges}(\mathbf{A}) \setminus \texttt{IM}(\mathbf{A})$ that links the nodes of either edge. 
The maximum induced matching $\texttt{MIM}(\mathbf{A})$ is the induced matching with the largest size. For examples, please refer to Fig.~\ref{fig:graph_state_examples}.

\begin{theorem}\label{thm:graph_fn_classical_bound}
    \textbf{Maximum Induced Matching Bound:} Let
    $\mathbf{A}_{i,j}=\mathbf{A}_{j,i}$ be coprime with $d\geq2$ for all $\{i,j\}\in\texttt{Edges}(\mathbf{A})$.
    Given a MN with communication resources $\mathcal{S}(\vec{d})$ where $\vec{d} = (d,\dots,d)$, the success probability of evaluating the multiparty affine transformation $g^\mathbf{A}$ over uniformly random inputs is bounded as 
    \begin{align}
        P^\star_S(g^{\mathbf{A}}, \mathcal{S}(\vec{d})) &\leq  P^\star_S(g^{\texttt{MIM}(\mathbf{A})}, \mathcal{S}(\vec{d})) = \left( \frac{1}{d} \right)^{|\texttt{MIM}(\mathbf{A})|} \label{eq:classical_mim_bound}
    \end{align}
     where $|\texttt{MIM}(\mathbf{A})|$ denotes the size of the maximum induced matching of $\texttt{Graph}(\mathbf{A})$ and we define the function $g^{\texttt{MIM}(\mathbf{A})} := \bigtimes_{\{i,j\}\in\texttt{MIM}(\mathbf{A})} g^{\mathbf{A}}_{i,j,\hat{x}}$.
\end{theorem}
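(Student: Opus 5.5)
Let $M=\texttt{MIM}(\mathbf{A})$ with $m=|M|$. The plan is to prove the inequality in two steps. First, evaluating all of $g^{\mathbf{A}}$ is at least as hard as evaluating the $2m$ output coordinates indexed by the endpoints of the edges in $M$. Second, those coordinates factorize into $m$ independent two-sender subproblems of the form $g^{\mathbf{A}}_{i,j,\hat{x}}$, each of which obeys the bound $1/d$ from Proposition~\ref{prop:subfunction_bounds}.

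\textbf{Step 1 (marginalization).} Let $V(M)$ be the set of endpoints of edges in $M$. Correctly outputting $z=g^{\mathbf{A}}(x,y)$ implies correctly outputting $z_{V(M)}$. Hence any $\mathbf{P}\in\mathcal{S}(\vec{d})$, post-processed by discarding the other coordinates (which keeps it in $\mathcal{S}(\vec{d})$), satisfies $P_S(g^{\mathbf{A}},\mathbf{P})\le P_S(g^{\mathbf{A}}_{V(M)},\mathbf{P}')$.

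\textbf{Step 2 (giving inputs away).} Next, write the success probability as an average over the inputs $\hat{x}$ of the senders outside $V(M)$, and also over $y_k$ for $k\notin V(M)$, which are irrelevant to these coordinates. We may fix the maximizing values of these inputs, since a maximum dominates an average. The fixed values can be hardcoded into the devices, so the resulting strategy still lies in $\mathcal{S}(\vec{d})$ restricted to the senders in $V(M)$.

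Because $M$ is an \emph{induced} matching, for $\{i,j\}\in M$ the output $z_i=y_i-\mathbf{A}_{i,j}x_j-\sum_{k\in N(i)\setminus\{j\}}\mathbf{A}_{i,k}x_k$ has no neighbor $k\in V(M)\setminus\{i,j\}$. So $z_i$ depends only on $(y_i,x_j)$ together with the fixed $\hat{x}$. The restricted function is therefore exactly $g^{M}:=\bigtimes_{\{i,j\}\in M} g^{\mathbf{A}}_{i,j,\hat{x}}$, a product of functions on disjoint sender pairs. This gives the first inequality in Eq.~\eqref{eq:classical_mim_bound}.

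\textbf{Step 3 (upper bound on the product).} Here I would not multiply the per-edge bounds naively, because a general resource in $\mathcal{S}(\vec{d})$, such as entanglement-assisted or non-signaling correlations across pairs, might correlate the pairs. Instead I would argue directly via conditional bijectivity, generalizing Theorem~\ref{thm:conditional-bijectivity-network-communication-bounds}. Take one endpoint $i_e$ from each edge $e\in M$ and group these $m$ senders into a single effective sender $I$, with signaling dimension $d^m$ and input alphabet of size $d^{2m}$. Group the other endpoints into a second effective sender $J$. As in the proof of Theorem~\ref{thm:conditional-bijectivity-network-communication-bounds}, this relaxes every configuration in $\mathcal{S}(\vec{d})$ to a two-sender configuration of the same type.

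The function $g^M$ then has output alphabet of size $d^{2m}$. By Proposition~\ref{prop:cond-bij-props-of-restricted-functions-main-text}(2), applied edgewise, $g^M$ is conditionally bijective in $I$'s input for each fixed input of $J$, and symmetrically. The two-sender bound of \cite{george2026_dnc} invoked in Theorem~\ref{thm:conditional-bijectivity-network-communication-bounds} then gives $P^\star_S\le d^m/d^{2m}=d^{-m}$.

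\textbf{Step 4 (achievability).} For the equality, I would exhibit a classical strategy achieving $d^{-m}$. On each edge $\{i,j\}$, sender $i$ sends $y_i$ and sender $j$ sends $y_j$. The receiver computes $z_i$ and $z_j$ by guessing $x_j$ (correct with probability $1/d$) and inferring $x_i$ from that guess, or more simply uses the known strategy from \cite{george2026_dnc} attaining $1/d$ for $g^{\mathbf{A}}_{i,j,\hat{x}}$. Running the edges independently gives $d^{-m}$.

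\textbf{Main obstacle.} The hardest part is Step 3: justifying that grouping the senders preserves membership in each resource class, especially $\mathcal{C}^{\texttt{NS}}$, and checking that the conditional bijectivity of $g^{M}$ holds for the grouped variables. A careless edgewise product argument would fail precisely because of cross-pair correlations, so the grouping must be done before any bound is applied.
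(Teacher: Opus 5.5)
Your argument is correct in its main line, but at the decisive step it takes a different route from the paper.

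\textbf{How the routes differ.} The paper's proof goes through its general MIM theorem in the appendix. It relaxes the unmatched senders to signaling dimension $d^2$ so that their inputs can be forwarded, and it fixes $\hat{x}$. It then asserts that the optimum factorizes, $P^\star_S(g^{\texttt{MIM}(\mathbf{A})}_{\hat{x}},\mathcal{S})=\prod_{\{i,j\}}P^\star_S(g^{\mathbf{A}}_{i,j,\hat{x}},\mathcal{S})$, on the grounds that the edge outputs depend on disjoint inputs. Finally it bounds each factor by $1/d$ using Proposition~\ref{prop:subfunction_bounds}.

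You instead put one endpoint of every matched edge into a single effective sender of signaling dimension $d^m$. You then apply the two-sender conditional-bijectivity bound once to the whole product function, which gives $d^m/d^{2m}$.

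\textbf{What each route buys.} Your route never needs the multiplicativity claim. That claim is exactly what is not automatic for $\mathcal{C}^{\texttt{E}}$ and $\mathcal{C}^{\texttt{NS}}$, where one shared resource spans all edges and can correlate them, as in parallel repetition. Your concern in Step 3 is therefore well placed. Your argument also carries over unchanged to $(c/d^2)^m$ for general signaling dimension $c$. The paper's route is more modular and additionally tracks the isolated vertices $V_0(\mathbf{A})$, which only matter when $c<d$.

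\textbf{Step 2 needs a line of justification.} You claim that hardcoding the outside inputs leaves a strategy in $\mathcal{S}(\vec{d})$ restricted to $V(M)$. For $\mathcal{C}^{\texttt{E}}$ and $\mathcal{C}^{\texttt{NS}}$ this is not immediate, because the outside senders still send messages correlated with the matched senders. Conditioning on their input-independent outcomes yields a convex mixture of configurations on $V(M)$, and a mixture is harmless for a linear objective.

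It is simpler to skip fixing $\hat{x}$ altogether. By the induced-matching property, $g^{\mathbf{A}}_{V(M)}$ is already a bijection in $a_I$ for every fixed input of all the remaining senders. So grouping $I$ against everyone else gives the bound directly, since that bound uses only $c_I=d^m$.

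\textbf{The first strategy in Step 4 fails.} If senders $i$ and $j$ send $y_i$ and $y_j$, the receiver needs both $x_j$ and $x_i$. These are independent, so "inferring $x_i$" from a guess of $x_j$ is impossible and the success probability is $1/d^2$, not $1/d$. The correct encoding has sender $i$ send $y_i$ and sender $j$ send $x_j$. The receiver then computes $z_i$ exactly and guesses $z_j$, which succeeds with probability $1/d$. This is the $c_+=d$, $c_-=1$ case of the paper's edge-achievability proposition. Your fallback of citing the known $1/d$ strategy is fine.
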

\begin{proof}
    Let
    $\mathbf{A}_{i,j}$ be coprime with $d\geq 2$ for all $\{i,j\}\in \texttt{Edges}(\mathbf{A})$.
    Given a function $g^{\mathbf{A}}$, let $c=d$ and apply the general MIM bound in Appendix~\ref{section:appendix-classical-bound}, Theorem~\ref{thm:appendix_graph_fn_classical_bound}  to obtain Eq.~\eqref{eq:classical_mim_bound}.
\end{proof}

\subsection{Classical Achievability of The Maximum Induced Matching Bound}\label{section:classical_achievability}

As specified in Theorem~\ref{thm:graph_fn_classical_bound}, the upper bound on the success probability depends on the
modulus $d$ and the signaling dimension $c_i$ of each sender in the maximum induced matching $\texttt{MIM}(\mathbf{A})$.
In this section, we describe classical encodings that achieve the MIM bound in Theorem~\ref{thm:graph_fn_classical_bound} for star graphs (GHZ states), complete graphs, and tree graphs.
In Appendix~\ref{appendix:classical_encodings_for_mim_bound}, we characterize the conditions required to achieve the upper bound in Eq.~\eqref{eq:bipartite-bound-success-probability}, and  in Protocol~\ref{protocol:mis_classical_channels}  and Protocol~\ref{protocol:classical_channel_sum_encoding} we provide explicit classical encodings that each achieve the MIM bound in Theorem~\ref{thm:graph_fn_classical_bound} in special cases.

In the following Theorem, we describe a classical channel encoding that relies on the size of the graph's maximum independent set $|\texttt{MIS}(\mathbf{A})|$.
Formally, given the adjacency matrix $\mbf{A}\in\Zbb_d^{n\times n}$, we define an independent set of the graph as any selection of vertices satisfying $\texttt{IS}(\mathbf{A}) := \{i \in [n]:  N(i) \cap \texttt{IS}(\mathbf{A}) = \emptyset \}$.
The maximum independent set is defined as $\texttt{MIS}(\mathbf{A}) := \arg\max_{\texttt{IS}(\mathbf{A})} |\texttt{IS}(\mathbf{A})|$,
which specifies the largest set of vertices  that does not include any neighboring vertices (\textit{e.g.} see Fig.~\ref{fig:graph_state_examples}).

\begin{table*}[t!]
    \centering
    \begin{tabular}{|c|c|c|c|c|c|c|}
         \hline
         $\texttt{Graph}(\mathbf{A})$  & \textbf{Example} & $|\texttt{MIM}(\mathbf{A})|$ & $|\texttt{MIS}(\mathbf{A})|$ & $P_S^{\star}\left(g^{\mathbf{A}}, \mathcal{S}(\vec{d})\right)$ & $P_S\left(g^{\mathbf{A}}, \mathbf{P}\in\mathcal{C}(\vec{d})\right)$ & $\sim v^\star(\texttt{MIM}(\mathbf{A}))$ \\
         \hhline{|=|=|=|=|=|=|=|}
         $\texttt{Clus}(\ell, w)$ & Fig.~\ref{fig:graph_state_examples}.c & $\lceil n / 4 \rceil$ & $n/2$ & $1 / d^{\lceil n/4 \rceil}$ &  $1 / d^{n/2}$ & $\sim 1 / d^{\lceil n/4 \rceil}$ \\
         \hline
         $\texttt{GHZ}(n)$ & Fig.~\ref{fig:graph_state_examples}.a & 1 & $n-1$  & $1 / d$ & $ 1 / d$ & $\sim 1 / d$ \\
         \hline
         $\texttt{Comp}(n)$ & Fig.~\ref{fig:graph_state_examples}.e & 1 & 1 & $ 1 / d$ & $1 / d$ & $\sim 1 / d$ \\ 
         \hline
         $\texttt{Pair}(m)$ & Fig.~\ref{fig:graph_state_examples}.b & $n / 2$ & $n / 2$ & $1 / d^{n/2}$ & $1 / d^{n/2}$ & $\sim 1 / d^{n/2}$\\
         \hline
         $\texttt{Tree}(b,h)$ & Fig.~\ref{fig:graph_state_examples}.d &  $\sum_{j=0}^{\lfloor (h-1) / 3\rfloor} b^{h -1 - 3j}$ & $\sum_{j=0}^{\lfloor h / 2\rfloor} b^{ h - 2j}$ & $\leq 1/d^{b^{h-1}}$ &  $\leq1 / d^{b^{h - 1}}$
         & $\lesssim 1 / d^{b^{h-1}}$ \\ 
         \hline
         $\texttt{Star}(b,2)$ & Fig.~\ref{fig:graph_state_examples}.h & $(n - 1)/2$ & $(n-1)/ 2 + 1$ & $1 / d^{(n-1)/2}$ & $1 / d^{(n-1)/2}$ & $\sim 1 / d^{(n - 1) / 2}$
         \\
         \hline
         $\texttt{Star}(b,h)$ & Fig.~\ref{fig:graph_state_examples}.g & $b(\lfloor \frac{h+1}{3}  \rfloor) + \delta_{1, h\bmod 3}$ & $b(\lceil \frac{h}{2}  \rceil) + \delta_{0, h\bmod 2}$ & $\leq \left(\frac{1}{d}\right)^{b\lfloor (h+1)/3\rfloor}$ & $\left( \frac{1}{d}\right)^{n-b\lceil h / 2 \rceil - \delta_{0,h\bmod 2}}$ & $\lesssim  \left(\frac{1}{d}\right)^{b\lfloor (h+1)/3\rfloor} $ \\
         \hline
    \end{tabular}%
    \caption{
        \textbf{Robustness comparison between graph states.} Comparison of $n$-sender dense network coding channels induced by the graph states shown in Fig.~\ref{fig:graph_state_examples}. Since each example is a connected graph, Eq.~\eqref{eq:classical_communication_cost_dnc} shows that $2n \ \texttt{dits}$ of classical communication to evaluate the function. The cluster state is assumed to have even $n$ and $\ell,w\geq 2$, and for the tree graph, $h$ is the tree's height, $b$ is the number of branches for each node, and the number of nodes is $n_{t}=(b^{h+1} - 1) / (b - 1)$. See Appendix~\ref{section:example_multiparty_graph_states}. 
    }
    \label{table:graph_state_examples}

    \input{tikz_figures/graph_state_examples}

\end{table*}

\begin{theorem}\label{thm:tight_classical_bound}
    \textbf{Sufficient Conditions for  Achievability:} There exists a classical channel $\mathbf{P}\in\mathcal{C}(\vec{d})$ that achieves the upper bound in Theorem~\ref{thm:graph_fn_classical_bound} as $P_S^\star(g^{\mathbf{A}}, \mathbf{P} ) = P^\star_S(g^{\texttt{MIM}(\mathbf{A})}, \mathcal{S}(\vec{d}))$ if one of the following Criterion hold: 
    \begin{enumerate}
        \item The graph satisfies $|\texttt{MIM}(\mathbf{A})| = n - |\texttt{MIS}(\mathbf{A})|$.
        \item The graph is  complete with uniform edge weight.
        \item The graph decomposes into connected components that each satisfy one of the previous criteria.
    \end{enumerate}
\end{theorem}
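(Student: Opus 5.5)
Since Theorem~\ref{thm:graph_fn_classical_bound} already gives $P^\star_S(g^{\mathbf{A}},\mathcal{S}(\vec{d}))\le d^{-|\texttt{MIM}(\mathbf{A})|}$ and $\mathcal{C}(\vec{d})\subseteq\mathcal{S}(\vec{d})$, it is enough to exhibit, for each criterion, an explicit deterministic classical strategy $\mathbf{P}\in\mathcal{C}(\vec{d})$ with $P_S(g^{\mathbf{A}},\mathbf{P})=d^{-|\texttt{MIM}(\mathbf{A})|}$. In each strategy every sender transmits one dit, and the receiver outputs a guess $z$ for $y-\mathbf{A}x$.

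\emph{Criterion 1.} Let $I=\texttt{MIS}(\mathbf{A})$ and $J=[n]\setminus I$, so $|J|=n-|\texttt{MIS}(\mathbf{A})|=|\texttt{MIM}(\mathbf{A})|$. Each sender $j\in J$ sends $x_j$. Each sender $i\in I$ sends $y_i$. Because $I$ is independent, every neighbor of $i\in I$ lies in $J$, so $z_i=y_i-\sum_{k\in N(i)}\mathbf{A}_{i,k}x_k$ is computed exactly from transmitted data. For $j\in J$, the receiver also knows $\sum_{k\in N(j)}\mathbf{A}_{j,k}x_k$ on the neighbors in $J$, but not $y_j$ or the $x_i$ with $i\in N(j)\cap I$. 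It therefore guesses $z_j$ by setting $y_j=0$ and $x_i=0$ for those unknown terms. Since $y_j$ is uniform and independent of everything transmitted, each such guess is correct with probability exactly $1/d$. These events are independent across $j\in J$ because the $y_j$ are independent. The success probability is therefore $d^{-|J|}=d^{-|\texttt{MIM}(\mathbf{A})|}$.

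\emph{Criterion 2.} Here the graph is complete with uniform weight $a$, so $z_k=y_k-a\sum_{\ell\neq k}x_\ell=y_k+ax_k-aS$ with $S=\sum_\ell x_\ell$. Each sender transmits $m_k=y_k+ax_k \bmod d$, and the receiver guesses $S$, say $S=0$, outputting $z_k=m_k-aS$. All coordinates are then correct exactly when the guess of $S$ is correct. Since $S$ is uniform and independent of $(m_k)_k$ (the $y_k$ act as one-time pads), this happens with probability $1/d=d^{-1}$. This matches the bound because any two edges of a complete graph are joined by a third, so $|\texttt{MIM}|=1$.

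\emph{Criterion 3.} Disconnected components involve disjoint inputs and outputs. The receiver therefore runs the component strategies in parallel, and the success probabilities multiply. A maximum induced matching of the whole graph is a disjoint union of maximum induced matchings of the components, since no edges join different components, so $|\texttt{MIM}|$ is additive. Hence the product equals $d^{-|\texttt{MIM}(\mathbf{A})|}$.

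\textbf{Main obstacle.} The key step is the exact-$1/d$, independence argument in Criterion~1. The receiver's guess for $z_j$ must be uniform given the messages, which relies on $y_j$ never being transmitted. Coprimality of the edge weights is not needed for achievability.
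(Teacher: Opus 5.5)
Your proof is correct and follows the paper's argument. You use the same $\texttt{MIS}(\mathbf{A})$ encoding for Criterion~1 (senders in $\texttt{MIS}(\mathbf{A})$ send $y_i$, the rest send $x_i$), the same sum encoding $y_k + a x_k$ with a single guessed offset for Criterion~2, and additivity of $|\texttt{MIM}(\mathbf{A})|$ over connected components for Criterion~3. The only differences are cosmetic: you use fixed zero guesses where the paper's protocols draw uniformly random guesses. One small caveat: your Criterion~2 step ``correct exactly when the guess of $S$ is correct'' uses that $a$ is invertible mod $d$, so your closing remark that coprimality is unnecessary really holds only for Criterion~1.
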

\begin{proof}
    Let $\mathbf{P}\in\mathcal{C}(\vec{d})$ be a classical channel. It holds that
    \begin{equation}
        P_S(g^{\mathbf{A}}, \mathbf{P}) \leq P^\star_S(g^{\mathbf{A}}, \mathcal{S}(\vec{d})) \leq P^\star_S(g^{\texttt{MIM}(\mathbf{A})}, \mathcal{S}(\vec{d})) \ ,
    \end{equation}
    therefore, the  classical upper bound in Theorem~\ref{thm:graph_fn_classical_bound} tightly bounds the classical set whenever $P_S(g^{\mathbf{A}}, \mathbf{P}) = P^\star_S(g^{\texttt{MIM}(\mathbf{A})}, \mathcal{S}(\vec{d}))$.
    We now prove that the upper bound is achieved for each of the listed cases.
    
    For Criterion 1), we note from Theorem~\ref{thm:graph_fn_classical_bound} that the upper bound is $P_S^{\star}(g^{\mathbf{A}}, \mathcal{S}(\vec{d}))\leq(\frac{1}{d})^{|\texttt{MIM}(\mathbf{A})|}$. In Protocol~\ref{protocol:mis_classical_channels} of Appendix~\ref{appendix:classical_encodings_for_mim_bound} we describe a classical strategy that implements a channel $\mathbf{P}\in\mathcal{C}(\vec{d})$ that achieves $P_S(g^{\mathbf{A}},\mathbf{P}) = (\frac{1}{d})^{n-|\texttt{MIS}(\mathbf{A})|}$ as shown in Proposition \ref{proposition:appendix-mis-achievability}. Therefore, if it holds for a given graph that $|\texttt{MIM}(\mathbf{A})| = n-|\texttt{MIS}(\mathbf{A})|$, then $P_S(g^{\mathbf{A}}, \mathbf{P}) = P^\star_S(g^{\mathbf{A}}, \mathcal{S}(\vec{d})) = P_S^\star(g^{\texttt{MIM}(\mathbf{A})}, \mathcal{S}(\vec{d}))$.
    
    For Criterion 2), note that $|\texttt{MIM}(\texttt{Comp}(n))| = 1$ for any complete graph. By Theorem \ref{thm:graph_fn_classical_bound}, $P_S^{\star}(g^{\texttt{MIM}(\texttt{Comp}(n))}, \mathcal{S}(\vec{d})) \leq \frac{1}{d}$. The classical channel described in Protocol~\ref{protocol:classical_channel_sum_encoding} of Appendix~\ref{appendix:classical_encodings_for_mim_bound} achieves $P_S(g^{\mathbf{A}},\mathbf{P}) = \frac{1}{d}$. Hence the classical bound achieves the induced matching bound $P_S^\star(g^{\texttt{MIM}(\texttt{Comp}(n))}, \mathcal{S}(\vec{d})) = P_S(g^{\mathbf{A}}, \mathbf{P})$.
    
    Finally, for Criterion 3), we may decompose $\texttt{Graph}(\mbf{A})$ into its connected components, \textit{i.e.}~we decompose $\texttt{Graph}(\mbf{A})$ into disjoint subgraphs $\{G_{k}\}_{k}$ where each $G_{k}$ is connected but is not the subgraph of a larger, connected subgraph of $\texttt{Graph}(\mbf{A})$. Because there are no edges between distinct connected components, $|\texttt{MIM}(\mathbf{A})| = \sum_{k} |\texttt{MIM}(G_{k})|$. Therefore, if each connected component satisfies one of the above cases, the classical strategies for achieving the upper bound can be used for each connected component independently.
\end{proof}

We note that Theorem~\ref{thm:tight_classical_bound}, Criterion 3) describes  the exponential amplification in the number of senders proven in Reference~\cite[Theorem 4]{george2026_dnc}. 
In particular, the pairwise entangled state demonstrates the maximum possible maximum induced matching of size $|\texttt{MIM}(\texttt{Pair}(m))| = m = \frac{n}{2}$ and a maximum independent set of size $|\texttt{MIS}(\texttt{Pair}(m))| = \frac{n}{2}$. Therefore, pairwise entangled states demonstrate the smallest possible bound on classical MNs.
A notable example of a genuine multiparty entangled graph is the star graph, $\texttt{Star}(b, 2)$ (see Fig.~\ref{fig:graph_state_examples}.h), which has $|\texttt{MIM}(\texttt{Star}(b,2))|= \frac{n-1}{2}$ and is tight due to  Theorem~\ref{thm:tight_classical_bound}, Criterion 1).

\subsection{The Classical Communication Cost of the Dense Multiparty Affine Transformation}\label{section:classical_communication_cost}

In this section we establish the zero-error classical MN communication cost of $g^{\mbf{A}}(x,y)$.
We follow references \cite{doolittle_2021_certifying_classical_simulation_cost,doolittle2024operational_nonclassicality}, to define the MN communication cost as the minimum amount of classical communication needed to evaluate a multivariate function $g : \mathcal{A}_{[n]}\to\mathcal{Z}$  as
\begin{equation}\label{eq:zero-error_classical_mn_communication_cost}
    \kappa(g) := \min_{\vec{c}\in [d^2]^n} \prod_{i\in[n]} c_i \quad\text{s.t.} \ \  1 = P^\star_S(g,\mathcal{C}(\vec{c})) \ .
\end{equation}
To state the classical communication MN cost of the multiparty affine transformation, we define the set of neighborless nodes as
\begin{equation}
    V_0(\mathbf{A}) := \{i \in [n] \ | \ N(i) = \emptyset\} \subseteq [n] \ .
\end{equation}

\begin{theorem}\label{thm:classical_communication_cost}
    \textbf{Classical MN Communication Cost:}
    Let $d\geq 2$ be
    coprime with $\mbf{A}_{i,j}$ for all $\{i,j\}\in\texttt{Edges}(\mathbf{A})$. Then the classical MN communication cost is
    \begin{equation}\label{eq:zero_error_communication_cst}
        \kappa(g^{\mathbf{A}}) = d^{2n - |V_0(\mathbf{A})|} \leq d^{2n} \ .
    \end{equation}
\end{theorem}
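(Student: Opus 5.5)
The proof has two halves: an achievability construction and a matching lower bound, which together give $\kappa(g^{\mathbf{A}}) = d^{2n-|V_0(\mathbf{A})|}$.

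\textbf{Achievability.} We exhibit $\vec{c}$ with $\prod_i c_i = d^{2n-|V_0(\mathbf{A})|}$ and $P^\star_S(g^{\mathbf{A}},\mathcal{C}(\vec{c}))=1$. Each sender $i\notin V_0(\mathbf{A})$ sends its full input $(x_i,y_i)$, so $c_i=d^2$. Each neighborless sender $i\in V_0(\mathbf{A})$ sends only $y_i$, so $c_i=d$. This works because $x_i$ enters $g^{\mathbf{A}}$ only through $\mathbf{A}x$. The $i$-th column of $\mathbf{A}$ vanishes when $N(i)=\emptyset$ (the diagonal is zero), so $x_i$ never affects the output. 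The receiver then knows $y$ and every $x_j$ with nonzero column, and computes $y-\mathbf{A}x \bmod d$ deterministically.

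\textbf{Converse.} We show that any zero-error classical MN needs $c_i\ge d^2$ for $i\notin V_0(\mathbf{A})$ and $c_i\ge d$ for $i\in V_0(\mathbf{A})$. The product bound then follows, since the constraints are per-sender.

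In a deterministic zero-error classical strategy (randomness cannot help reach success probability $1$), sender $i$ applies an encoder $e_i:\Zbb_d^2\to[c_i]$. Suppose two distinct inputs $a_i=(x_i,y_i)$ and $a_i'=(x_i',y_i')$ have the same encoding. Then for every choice of the other senders' inputs, the receiver's output is identical, so $g^{\mathbf{A}}(a_i,a_{-i}) = g^{\mathbf{A}}(a_i',a_{-i})$ for all $a_{-i}$.
\begin{itemize}
\item The $i$-th output coordinate is $y_i-\sum_{j}\mathbf{A}_{i,j}x_j$, and it does not depend on $x_i$. Equality of this coordinate therefore forces $y_i=y_i'$.
\item If $i\notin V_0(\mathbf{A})$, pick $j\in N(i)$. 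The $j$-th coordinate contains $-\mathbf{A}_{j,i}x_i$ with all other terms equal. Equality then gives $\mathbf{A}_{j,i}(x_i-x_i')\equiv 0 \pmod d$, and since $\mathbf{A}_{j,i}$ is coprime with $d$, this forces $x_i=x_i'$.
\end{itemize}
Hence $e_i$ is injective on $\Zbb_d^2$ for non-isolated nodes, and injective in $y_i$ for isolated nodes. This gives $c_i\ge d^2$ and $c_i\ge d$ respectively.

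This is essentially the conditional-bijectivity argument of Proposition~\ref{prop:cond-bij-props-of-restricted-functions-main-text}, applied with success probability $1$ rather than as a bound.

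\textbf{Main subtlety.} The step needing the most care is justifying that zero error permits a reduction to deterministic encoders and a deterministic decoder. The plan is to note that $P_S=1$ requires every term $\mathbf{P}_{g(a)|a}=1$. Consequently every deterministic strategy in the support of the shared randomness must also be zero-error, and it suffices to analyze one of them.

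The final inequality $\kappa(g^{\mathbf{A}})\le d^{2n}$ is immediate from $|V_0(\mathbf{A})|\ge 0$.
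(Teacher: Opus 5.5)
Your proof is correct. Your achievability half is the same as the paper's: isolated senders forward $y_i$, and all other senders forward $(x_i,y_i)$.

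Your converse takes a different route. The paper argues probabilistically. It upper-bounds the success probability by the guessing probability of $z_i=y_i$ for $i\in V_0(\mathbf{A})$ and of the pair $(z_i,z_j)$ for edges. It then invokes Proposition~\ref{prop:subfunction_bounds}, i.e.\ the conditional-bijectivity bound of Theorem~\ref{thm:conditional-bijectivity-network-communication-bounds}, to conclude $P_S^\star<1$ unless $c_i\ge d$ (resp.\ $c_i\ge d^2$). You instead reduce a zero-error strategy to deterministic ones. A direct collision argument then shows each encoder must be injective in $y_i$ (resp.\ in $(x_i,y_i)$), using only $\mathbf{A}_{i,i}=0$ and the invertibility of a single $\mathbf{A}_{j,i}$ modulo $d$.

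Your version is elementary and self-contained. It does not depend on the external two-sender bounds underlying Theorem~\ref{thm:conditional-bijectivity-network-communication-bounds}. It also sidesteps the factorization of guessing probabilities in Eq.~\eqref{eq:guessing-prob-factorization-for-zero-error}, which the paper asserts somewhat informally and which is not really needed for a zero-error conclusion. The per-sender lower bounds also make the product minimization immediate.

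The paper's route buys more. Its bounds are quantitative: below threshold the success probability is at most $c_i/d$ or $\min\{c_i,c_j\}/d^2$. Because they hold on all of $\mathcal{S}(\vec{c})$, the same argument shows that entanglement-assisted and non-signaling-assisted classical MNs also cannot reach zero error below $d^{2n-|V_0(\mathbf{A})|}$. Your reduction to deterministic local encoders is tied to $\mathcal{C}(\vec{c})$ and does not carry over to settings with shared resources, where the senders' messages are correlated.
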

\begin{proof}
    Define $(z_{i} \coloneq g_{i}^{\mbf{A}}(x_{N(i)},y_{i}))$ for each $i \in [n]$. Let $\Pr^{\star}$ be the optimal probability over choice of receiver's decoders, for any choice of local, classical encoding
    \begin{align}
        &\Pr^{\star}[\text{Correctly guessing } z_{i} \; \forall i \in [n]] \nonumber \\
        &\leq \left(\prod_{i \in V_{0}} \Pr^{\star}[\text{Correctly guessing } z_{i}] \right) \nonumber \\
        & \hspace{7mm} \cdot \Pr^{\star}[\text{Correctly guessing } z_{i} \; \forall i \in [n]\setminus V_{0}]  \ , \label{eq:guessing-prob-factorization-for-zero-error}
    \end{align}
    where we have used that for $i \in V_{0}$, the value of $z_{i}$ is independent of all inputs but party $i$'s, which are encoded locally, so the guessing is multiplicative. 
    
    For $i \in V_{0}(\mathbf{A})$, we use the fact that $z_i = g_i^{\mathbf{A}}(y_i) = y_i$ and $y_i$ is guessed with probability at most $c_i/d$ to obtain
    \begin{align}
        \Pr^{\star}[\text{Correctly guessing } z_{i}] 
        \leq  P_S^\star(g_i^{\mathbf{A}}, \mathcal{C}(\vec{c})) = \frac{c_i}{d} \ . \label{eq:guessing-prob-for-lone-vertex}
    \end{align}
    By applying Proposition~\ref{prop:subfunction_bounds}.2 which our assumptions on $\mbf{A}$ allow us to do,
    \begin{align}
        &\Pr^{\star}[\text{Correctly guessing } z_{i} \; \forall i \in [n]\setminus V_{0}(\mathbf{A})] \nonumber \\
        &\leq \min_{\{i,j\} \in \texttt{Edges}(\mathbf{A})}\{P_S^\star(g_{i,j}^{\mathbf{A}}, \mathcal{C}(\vec{c}))\} \nonumber \\
        &=  \min_{\{i,j\} \in \texttt{Edges}(\mathbf{A})} \frac{1}{d^2}\min\{c_i,c_j\} \ . \label{eq:guessing-prob-for-remaining-pair}
    \end{align}
    By combining Eqs.~\eqref{eq:guessing-prob-factorization-for-zero-error}, \eqref{eq:guessing-prob-for-lone-vertex}, and \eqref{eq:guessing-prob-for-remaining-pair}, $P_{S}^{\star}(g^{\mbf{A}},\cS(\vec{c}))$ is strictly less than $1$ if there exists $i \in V_{0}(\mathbf{A})$ such that $c_{i} < d$ or there exists  $i \in [n] \setminus V_{0}(\mathbf{A})$ such that $c_{i} < d^{2}$. Thus, necessary conditions for zero error are that $c_{i} = d$ for all $i \in V_{0}(\mathbf{A})$ and $c_{i} = d^{2}$ for all $i \in [n] \setminus V_{0}(\mathbf{A})$.

    However, these necessary conditions are also sufficient as in this case sender $i \in V_{0}(\mathbf{A})$ sends input $y_{i}$ to the receiver and sender $i \in [n] \setminus V_{0}(\mathbf{A})$ sends inputs $x_{i},y_{i}$ to the receiver, and the receiver has all the inputs necessary to compute the function $g^{\mbf{A}}$. Finally, noting in this case $\prod_{i} c_{i} = d^{|V_0(\mathbf{A})|} \cdot (d^{2})^{n - |V_{0}(\mathbf{A})|}$ completes the proof.
\end{proof}

In Theorem~\ref{thm:classical_communication_cost}, we show that the classical communication cost of the multiparty affine transformation is 
\begin{equation}\label{eq:classical_communication_cost_dnc}
    \log_d(\kappa(g^{\mathbf{A}})) = |V_0(\mathbf{A})| + 2 (n - |V_0(\mathbf{A})|) \leq 2n \ \ \texttt{dits}
\end{equation}
where  $|V_0(\mathbf{A})|$ is the number of neighborless nodes. Since the dense multiparty affine transformation in Protocol~\ref{protocol:dense_affine_transformation} requires at most $n$ qudits of communication, an explicit communication advantage is achieved when $|V_0(\mathbf{A})| < n$. In other words, all graphs having at least one edge exhibit a communication advantage.  When the graph is connected such that $|V_0(\mathbf{A})|=0$, the maximum advantage is achieved requiring $2n$ dits of communication as compared to the $n$ qudits used by Protocol~\ref{protocol:dense_affine_transformation}.
Thus quantum resources halve the minimum amount of communication that is needed to compute  the multiparty affine transformations when compared to a classical MN.

\subsection{The Noise Robustness of the Multiparty Affine Transformation Advantage}\label{section:noise_robustness}

In Protocol~\ref{protocol:dense_affine_transformation}, a dense network coding channel $\mathbf{P}\in\mathcal{Q}^{\texttt{E}}(\vec{c})$, implements the multiparty affine transformation $g^{\mathbf{A}}$ with success probability $P_S(g^{\mathbf{A}}, \mathbf{P}) = 1$.
Following references~\cite{Bowles2015_nonclassicality_communication_networks,doolittle2024operational_nonclassicality}, we characterize the depolarizing noise robustness of the protocol as the channel 
\begin{equation}\label{eq:channel_visibility}
    \mathbf{P}(v) = v \mathbf{P} + (1-v) \mathbf{W}
\end{equation}
where $\mathbf{W}_{z|x,y} = \frac{1}{d^n}$
for all $x,y,z\in\mathbb{Z}_d^n$ and the $v$ quantifies the visibility of the dense network coding channel.
The success probability of the depolarized dense network coding channel is then
\begin{equation}\label{eq:ideal_depolarized_channel}
    P_S(g^{\mathbf{A}}, \mathbf{P}(v)) = v  + (1-v)\frac{1}{d^n}
\end{equation}
because $P_S(g^{\mathbf{A}}, \mathbf{W}) = \frac{1}{d^n}$ and $P_S(g^{\mathbf{A}}, \mathbf{P}) = 1$.
The critical visibility is defined as
\begin{align}\label{eq:critical_visibility}
    v^{\star}(\mathbf{A}) := \max_{v\in[0,1]}  v \ \ \text{s.t.} \  P_S(g^{\mathbf{A}},\mathbf{P}(v)) \leq P^{\star}_S(g^{\mathbf{A}},\mathcal{S}(\vec{d})) \ ,
\end{align}
which is the largest visibility that does not violate the separability bound. Thus a communication advantage is demonstrated for any visibility that satisfies $v > v^\star(\mathbf{A})$.
To calculate the critical visibility in Eq.~\eqref{eq:critical_visibility}, we set Eq.~\eqref{eq:ideal_depolarized_channel} equal to the classical bound as
\begin{equation}\label{eq:critical_visibility_equation}
   P_S^\star(g^{\mathbf{A}}, \mathcal{S}(\vec{d})) = v^\star(\mathbf{A}) + (1-v^\star(\mathbf{A})) \frac{1}{d^n} \ ,
\end{equation}
and solve to obtain
\begin{equation}\label{eq:critical_visibility_calculation}
    v^\star(\mathbf{A}) = \frac{d^n }{d^n - 1}\left( P_S^\star(g^{\mathbf{A}}, \mathcal{S}(\vec{d})) - \frac{1}{d^n}\right) \ .
\end{equation}

\begin{figure*}
    \centering
    \includegraphics[width=\linewidth]{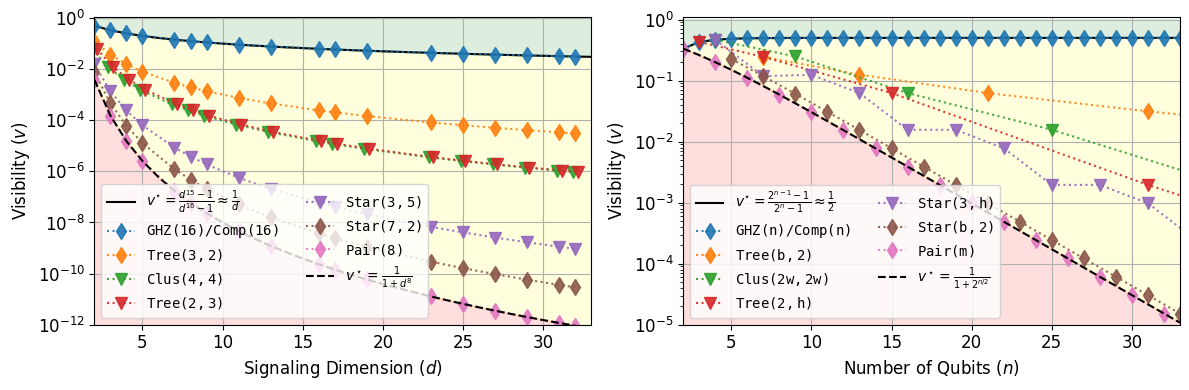}
    \caption{\textbf{Depolarizing noise robustness.} For the graph states in Table~\ref{table:graph_state_examples}, we plot the critical visibility $v^{\texttt{MIM}}(\mathbf{A})$ from Eq.~\eqref{eq:noise_robustness_d=c_limits} versus the signaling dimension $d$ of graph states with 16 qudits or fewer (left) and the number of qubits $n$ with signaling dimension fixed as $d=2$ (right). Diamonds mark the critical visibilities $v^{\texttt{MIM}}(\mathbf{A})$ that are tight on $\mathcal{S}(\vec{d})$ whereas triangles mark the critical visibilities that correspond to loose upper bounds on the critical visibility. The solid black line plots the global upper bound  while the dashed black line plots the global lower bound on the critical visibility.
    The red shaded region at the bottom shows the effective visibilities that do not admit communication advantage with respect to the MIM bound. The yellow shaded region shows the effective visibilities that can achieve communication advantage for some multiparty affine transformations. The  green shaded region at the top shows the effective visibilities that can achieve communication advantage for all multiparty affine transformations.
    }
    \label{fig:noise-robustness}
\end{figure*}

\begin{theorem}\label{thm:noise_robustness}\textbf{Depolarizing Noise Robustness of Communication Advantage:}
    Given a depolarized channel $\mathbf{P}(v)\in\mathcal{Q}^{\texttt{E}}(\vec{d})$ that implements Protocol~\ref{protocol:dense_affine_transformation} with success probability $P_S(g^{\mathbf{A}}, \mathbf{P}(v)) = v + (1-v)\frac{1}{d^n}$,
    the critical visibility for communication advantage via maximum induced matching is
    \begin{equation}\label{eq:noise_robustness_d=c_limits}
        v^{\texttt{MIM}}(\mathbf{A}) :=  \frac{d^{(n - |\texttt{MIM}(\mathbf{A})|)}-1}{d^n-1}
    \end{equation}
    where $v^{\texttt{MIM}}(\mathbf{A}) \to 0$ as $n \to \infty$ when $|\texttt{MIM}(\mathbf{A})| \sim n$, and $v^{\star}(\mathbf{A}) \leq v^{\texttt{MIM}}(\mathbf{A})\leq \frac{1}{d}$ holds for all graphs having $|\texttt{MIM}(\mathbf{A})| \geq 1$.
\end{theorem}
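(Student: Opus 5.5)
The plan is to substitute the MIM bound of Theorem~\ref{thm:graph_fn_classical_bound} into the critical-visibility formula Eq.~\eqref{eq:critical_visibility_calculation}. Since $v^\star(\mathbf{A})$ in Eq.~\eqref{eq:critical_visibility_calculation} is an increasing affine function of $P^\star_S(g^{\mathbf{A}},\mathcal{S}(\vec{d}))$, replacing that quantity by its upper bound $d^{-|\texttt{MIM}(\mathbf{A})|}$ gives an upper bound on $v^\star(\mathbf{A})$. Writing $m:=|\texttt{MIM}(\mathbf{A})|$, this bound is
\begin{equation*}
v^{\texttt{MIM}}(\mathbf{A}) = \frac{d^n}{d^n-1}\left(d^{-m}-d^{-n}\right) = \frac{d^{n-m}-1}{d^n-1}.
\end{equation*}
This establishes Eq.~\eqref{eq:noise_robustness_d=c_limits} and the inequality $v^\star(\mathbf{A})\leq v^{\texttt{MIM}}(\mathbf{A})$ together.

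For the upper bound $v^{\texttt{MIM}}(\mathbf{A})\leq 1/d$ when $m\geq 1$, I first note that $(d^{n-m}-1)/(d^n-1)$ is decreasing in $m$, because the numerator decreases while the denominator stays fixed. It therefore suffices to treat $m=1$. In that case the claim $d(d^{n-1}-1)\leq d^n-1$ reduces to $d^n-d\leq d^n-1$, which holds because $d\geq 2$.

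For the asymptotic claim, I will read $|\texttt{MIM}(\mathbf{A})|\sim n$ as $m\geq \alpha n$ for some constant $\alpha>0$, for instance $\alpha=1/4$ for cluster states or $\alpha=1/2$ for pairs. Then
\begin{equation*}
v^{\texttt{MIM}}(\mathbf{A}) \leq \frac{d^{n-m}}{d^n-1} \leq \frac{d^{n}}{d^n-1}\, d^{-\alpha n} \longrightarrow 0 .
\end{equation*}
Since $m\leq n/2$ for any matching, the exponent $n-m$ stays at least $n/2$, but this does not affect the limit.

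The only delicate point is interpretive rather than computational. The monotonicity step must use the form of Eq.~\eqref{eq:critical_visibility_calculation}, which solves the defining condition in Eq.~\eqref{eq:critical_visibility} as an equality, together with the fact that $P_S(g^{\mathbf{A}},\mathbf{P}(v))$ is increasing in $v$. I also have to make precise what ``$|\texttt{MIM}(\mathbf{A})|\sim n$'' means, namely linear growth in $n$, so that the limit statement is well defined. Beyond that, every step is elementary algebra.
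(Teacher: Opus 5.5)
Your proposal is correct and follows the paper's proof: substitute the MIM bound of Theorem~\ref{thm:graph_fn_classical_bound} into Eq.~\eqref{eq:critical_visibility_calculation}, use that this expression is increasing in $P^\star_S$ to get $v^\star(\mathbf{A})\leq v^{\texttt{MIM}}(\mathbf{A})$, and read off the asymptotics. Your direct finite-$n$ check, that $(d^{n-m}-1)/(d^n-1)$ decreases in $m$ and is at most $1/d$ at $m=1$, is actually more rigorous than the paper, which only infers $v^{\texttt{MIM}}(\mathbf{A})\leq 1/d$ from the limiting value $d^{-|\texttt{MIM}(\mathbf{A})|}$ as $n\to\infty$.
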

\begin{proof}
    To calculate the critical visibility, we substitute the classical bound in Theorem~\ref{thm:graph_fn_classical_bound} into the Eq.~\eqref{eq:critical_visibility_calculation}
    \begin{equation}\label{eq:critical_visibility_classical}
       v^{\texttt{MIM}}(\mathbf{A}) = \frac{d^n}{d^n - 1}\left(\frac{1}{d^{|\texttt{MIM}(\mathbf{A})|}} - \frac{1}{d^n}\right) \ ,
    \end{equation}
    which simplifies to Eq.~\eqref{eq:noise_robustness_d=c_limits}.
    Since $P_S^\star(g^{\mathbf{A}}, \mathcal{S}(\vec{c})) \leq P_S^\star(g^{\texttt{MIM}(\mathbf{A})},\mathcal{S}(\vec{c}))$, it follows from Eq.~\eqref{eq:critical_visibility_calculation} that $v^\star(\mathbf{A}) \leq v^{\texttt{MIM}}(\mathbf{A})$.
    To prove that the critical visibility becomes negligible as $n\to \infty$, we first note that in this limit, the critical visibility becomes $v^{\texttt{MIM}}(\mathbf{A}) \to \frac{1}{d^{|\texttt{MIM}(\mathbf{A})|}}$,
    and the size of the maximum induced matching set ranges as $0 \leq |\texttt{MIM}(\mathbf{A})| \leq \lfloor\frac{n}{2}\rfloor$, therefore when $|\texttt{MIM}(\mathbf{A})|\geq 1$, $v^{\texttt{MIM}}(\mathbf{A}) \leq \frac{1}{d}$. Furthermore, whenever $|\texttt{MIM}(\mathbf{A})|\sim n$, $v^{\texttt{MIM}}(\mathbf{A}) \to 0$ and $v^\star(\mathbf{A}) \to 0$ because $|\texttt{MIM}(\mathbf{A})| \to \infty$ as $n\to \infty$. 
\end{proof}

Theorem~\ref{thm:noise_robustness} shows that the critical visibility becomes negligible as $v^{\texttt{MIM}}(\mathbf{A}) \to 0$ in certain cases with large $n$ (see Fig.~\ref{fig:noise-robustness}). For example, cluster states with even $n$ and $w,\ell\geq 2$ have $|\texttt{MIM}(\texttt{Clus}(\ell,w))|=\lceil n / 4 \rceil$ and as $n\to\infty$, $v^\star \to 1 / d^{\lceil n / 4 \rceil} \to 0$ (see Fig.~\ref{fig:graph_state_examples}.c).
When the MIM bound is tight by Theorem~\ref{thm:tight_classical_bound} the bound on the critical visibility of Theorem~\ref{thm:noise_robustness} is also tight. One example that scales nearly optimally is the star graph with radius $h=2$ where $|\texttt{MIM}(\texttt{Star}(b,2))| =n- |\texttt{MIS}(\texttt{Star}(b,2))|= (n-1) / 2$ where $v^{\star}(\texttt{Star}(b,2)) \to \frac{1}{d^{(n-1)/2}}$ as $n \to \infty$ (see Fig.~\ref{fig:graph_state_examples}.h).
Such favorable scaling is not always present as  GHZ states and complete graphs both have  $|\texttt{MIM}(\texttt{GHZ}(n))| = |\texttt{MIM}(\texttt{Comp}(n))| = 1$ such that $v^\star(\texttt{GHZ}(n)) \to \frac{1}{d}$ as $n\to\infty$ (see Fig.~\ref{fig:graph_state_examples}.a and Fig.~\ref{fig:graph_state_examples}.e).

Qudit loss is another common source of noise in quantum communication systems.
The effect that loss has on the dense multiparty affine transformation advantage largely depends on whether the loss is heralded or unheralded. In the heralded case, loss events are detectable and can either be discarded, or used to trigger error correction. Therefore, heralded losses affect the rate of quantum dense network coding, but not the success probability. In the unheralded case, a loss event results in the lost qudit being replaced with qudit that does not share an edge with any qudit, causing the success probability to decrease. Indeed some qudit losses are more detrimental to the witnessable entanglement than others \cite{silberstein2023_graph_state_loss}.

\section{Certifying Entanglement in Multiaccess Networks}\label{section:certifying_communication_resources}

In this section we introduce a framework for estimating the effective entanglement visibility of quantum dense network coding in MNs.
In Section~\ref{section:entanglement_witnessing}, we discuss how entanglement can be witnessed in MNs using nonclassicality witnesses.
In Section~\ref{section:sdi-protocol}, we present an SDI protocol that estimates the effective entanglement visibility in MNs. 
In Section~\ref{section:sdi-comparison}, we compare our SDI approach with entanglement witnesses from previous works.

\subsection{Witnessing Effective Entanglement Visibility from  Communication Advantages}\label{section:entanglement_witnessing}

For a MN with signaling dimension bounded as $\vec{d}$, a channel $\mathbf{P}\in\mathcal{Q}^E(\vec{d})$ is witnessed to achieve a communication advantage with respect to a linear function $g: \mathcal{A}_{[n]}\to\mathcal{Z}$ if $P_S(g,\mathbf{P}) > P^\star_S(g,\mathcal{S}(\vec{d}))$ such that $\mathbf{P} \not\in\mathcal{S}(\vec{d})$  \cite{Bowles2015_nonclassicality_communication_networks, doolittle2024operational_nonclassicality,george2026_dnc}. An explicit communication advantage can be achieved by entanglement-assisted quantum MNs whenever $P^\star_S(g, \mathcal{Q}^{\texttt{E}}(\vec{c})) > P^\star_S(g, \mathcal{S}(\vec{c}))$ where any $\mathbf{P}\not\in\mathcal{S}(\vec{d})$ is witnessed to use entanglement-assisted senders and quantum communication.

To estimate the amount of entanglement, we define the effective entanglement visibility of a channel $\mathbf{P}'\in\mathcal{P}_{\mathcal{Z}|\mathcal{A}_{[n]}}$ with respect to the depolarized dense network coding channel $\mathbf{P}(v)$ in Eq.~\eqref{eq:channel_visibility} as 
\begin{equation}\label{eq:effective_visibility_def}
    v_e^{\mathbf{A}}(\mathbf{P}') := \max_{v\in[0,1]} v \ \texttt{s.t.} \  P_S(g^{\mathbf{A}}, \mathbf{P}(v)) \leq P_S(g^{\mathbf{A}}, \mathbf{P}')
\end{equation}
which is the visibility for which $\mathbf{P}(v)\in\mathcal{Q}^{\texttt{E}}(\vec{d})$ achieves the same success probability as $\mathbf{P}'$. 
We solve the equation  $P_S(g^{\mathbf{A}}, \mathbf{P}')= v_e^{\mathbf{A}}(\mathbf{P}') + (1-v_e^{\mathbf{A}}(\mathbf{P}')) \frac{1}{d^n}$, to obtain the effective visibility for the channel $\mathbf{P}'$ as
\begin{equation}\label{eq:effective_visibility}
    v_{e}^{\mathbf{A}}(\mathbf{P}') = \frac{d^n }{d^n - 1}\left( P_S(g^{\mathbf{A}},\mathbf{P}') - \frac{1}{d^n}\right) \ .
\end{equation}

Depolarizing noise on state preparations and measurements is modeled using Werner graph states 
\begin{equation}\label{eq:werner_graph_state}
    \mathcal{W}^{\mathbf{A}}_{z,v} := v\op{\Gamma^{\mathbf{A}}_{z}}{\Gamma^{\mathbf{A}}_{z}} + (1-v)\frac{1}{d^n} I \ .
\end{equation}
The dense network coding success probability of a Werner graph state and measurement basis is obtained by direct calculation from Eq.~\eqref{eq:general-state-success} as
\begin{align}
    P_S(&g^{\mathbf{A}}, \mathcal{W}^{\mathbf{A}}_{0,v_p}, \{\mathcal{W}^{\mathbf{A}}_{z,v_m}\}_z) = P_S(g^{\mathbf{A}}, \mathbf{P}(v_p v_m)) \label{eq:success_probability_prep_meas_depolarizing} 
\end{align}
in which we use the equality
\begin{align}
    \frac{1}{d^n}&=P_S(g^{\mathbf{A}}, \ \op{\Gamma^{\mathbf{A}}_z}{\Gamma^{\mathbf{A}}_{z}},\  \{\frac{1}{d^n}I\}_z) \\
    &= P_S(g^{\mathbf{A}},\  \frac{1}{d^n}I,\ \{\op{\Gamma^{\mathbf{A}}_z}{\Gamma^{\mathbf{A}}_{z}}\}_z)\ .
\end{align}
Following  Eq. ~\eqref{eq:success_probability_prep_meas_depolarizing}, for any $\mathbf{P}\in \mathcal{P}_{\mathcal{Z}|\mathcal{A}_{[n]}}$, we can express Eq.~\eqref{eq:effective_visibility} in terms of the effective visibility of the state preparation ($v_{p}$) and/or measurement ($v_{m}$) under the Werner noise model as
\begin{equation}\label{eq:effective-state-meas-visibility}
     \min\{v_{p},v_{m}\} \geq v_p v_m = v_e^{\mathbf{A}}(\mathbf{P}) \ .
\end{equation}

In our test procedures, we assume that the multiaccess channel or network is parameterized as $\mathbf{P}(\theta)\in\mathcal{Q}^{\texttt{E}}(\vec{d})$ for all $\theta\in\mathbb{R}^m$, and it holds that 
\begin{equation}\label{eq:channel_optimization}
P_S(g, \mathbf{P}(\theta^\star)) := \max_{\theta\in\mathbb{R}^m} P_S(g, \mathbf{P}(\theta)) \leq P_S^\star(g, \mathcal{Q}^{\texttt{E}}(\vec{d})) \ .
\end{equation}
While numerical or theoretical methods of optimization exist, variational quantum circuits are advantageous because they can run on quantum communication network hardware \cite{doolittle_vqo_nonlocality,doolittle2024operational_nonclassicality}.
If optimization is not feasible for a given system, our methods can be implemented with a fixed channel $\mathbf{P}$, however, the estimation procedure will have increased error.

\subsection{Semi-Device-Independent Entanglement Certification in Multiaccess Networks}\label{section:sdi-protocol}

Our goal is to estimate the effective entanglement visibility of quantum dense network coding under the SDI assumptions where the network's causal structure and signaling dimension $\vec{d}$ are known, but the communication resources and devices are uncharacterized.
In practice, false positives can occur if the SDI assumptions on the communication network topology, resource configuration, or channel signaling dimension break down. False positives could include cases where a MN uses superdense coding with  sender-receiver entanglement to communicate with signaling dimension $d^2$, or hidden classical side-channels are being used.

In the following protocol, we describe our SDI procedure for estimating the effective entanglement visibility of quantum dense network coding channels.
The procedure applies the dense multiparty affine transformation in Protocol~\ref{protocol:dense_affine_transformation} and the MIM bound from Theorem~\ref{thm:graph_fn_classical_bound} to witness communication advantages, $\mathbf{P}\not\in\mathcal{S}(\vec{d})$, that require the use entanglement-assisted senders, quantum communication, and measurements in entangled bases.

\begin{protocoldesc}\label{protocol:semi-di-for-entanglement}
		\caption{\textbf{: Semi-Device-Independent Certification of Effective Entanglement.} Estimate the effective entanglement visibility in an $n$-sender entanglement-assisted quantum MN with respect to the multiparty affine transformation $g^{\mathbf{A}}$. }

        \textbf{Inputs:} \\
		\hspace{0.25cm}
		\begin{tabular}{l l l}
			$\mathbf{P}(\theta)\in \mathcal{Q}^{\texttt{E}}(\vec{d})$ &: & A parameterized multiaccess channel \\
            & & where $\theta\in\mathbb{R}^m$.
		\end{tabular}
        
        \vspace{0.25cm}
        \textbf{Output:} \\
		\hspace{0.25cm}
        \begin{tabular}{l l l}
			$q$ & : &  If $\mathbf{P}(\theta^\star) \not\in \mathcal{S}(\vec{d})$, $q=1$, otherwise $q=0$. \\
            $v^{\mathbf{A}}_e(\mathbf{P}(\theta^\star))$ & : & The effective entanglement visibility.
		\end{tabular}

        \vspace{0.25cm}
		\textbf{Protocol:} 
		\begin{enumerate}
            \item Optimize the MN as in Eq.~\eqref{eq:channel_optimization} to obtain $P_S(g^{\mathbf{A}}, \mathbf{P}(\theta^\star))  = \max_{\theta} P_S(g^{\mathbf{A}},  \mathbf{P}(\theta))$.
            \item If $P_S(g^{\mathbf{A}}, \mathbf{P}(\theta^\star)) > P_S^\star(g^{\texttt{MIM}(\mathbf{A})}, \mathcal{S}(\vec{d}))$, set $q=1$ as $\mathbf{P}(\theta^\star) \not\in \mathcal{S}(\vec{d})$ by Theorem~\ref{thm:graph_fn_classical_bound}, otherwise set $q=0$.
            \item Use Eq.~\eqref{eq:effective_visibility} to calculate the effective visibility as \begin{equation}
                v_e^{\mathbf{A}}(\mathbf{P}(\theta^\star)) = \frac{d^n}{d^n - 1}\left(P_S(g^{\mathbf{A}}, \mathbf{P}(\theta^\star)) - \frac{1}{d^n} \right) \ .
            \end{equation}
        \end{enumerate}

\end{protocoldesc}

It is often practical to consider the case where a characterized entangled state preparation (resp. measurement) is used to estimate the effective entanglement of an uncharacterized measurement (resp. state preparation).
From Eq.~\eqref{eq:effective-state-meas-visibility} we see that when $v_pv_m> v^{\texttt{MIM}}(\mathbf{A})$, a communication advantage is achieved, which requires that both $v_p > v^{\texttt{MIM}}(\mathbf{A})$ and $v_m > v^{\texttt{MIM}}(\mathbf{A})$.
Therefore, if the effective visibility of the state preparation (measurement) is  known, then the lower bound on the effective visibility of the measurement (preparation) can be improved as
\begin{equation}\label{eq:self-tested-effective-visibility}
    v_p = \frac{1}{v_m}v_e^{\mathbf{A}}(\mathbf{P}) \ .
\end{equation}
Since $v_m$ is a lower bound on the entanglement visibility of the measurement device, Eq.~\eqref{eq:self-tested-effective-visibility} shows that characterizing the effective visibility of one device, improves the estimated effective visibility of the other device.
We assume that $v_m=1$ when comparing our methods to existing entanglement witnesses in Table~\ref{tab:comparison_with_previous_work}. 

\subsection{Comparison to Existing Entanglement Witnessing Approaches}\label{section:sdi-comparison}

\begin{table*}[t!]
    \centering
    \begin{tabular}{|c|c|c|c|c|}
        \hline
         \textbf{Graph State} & \textbf{Witness} & \textbf{Method}  &\textbf{Critical Visibility as $n\to\infty$} & \textbf{Our Work $v^{\texttt{MIM}}(\mathbf{A})$ as $n\to \infty$ }\\
         \hhline{|=|=|=|=|=|}
         Qubit $\texttt{GHZ}(n)$ & \cite[Eq. (6)]{Toth_2005_genuine_multipartite_entanglement} & Local Meas.  & $\frac{2-2^{(2-n)}}{3 - 2^{(2-n)}} \to \frac{2}{3}$ &  $\frac{2^{n-1}-1}{2^n - 1} \to \frac{1}{2}$   \\
         \hline 
         Qubit $\texttt{GHZ}(n)$ & \cite[Eq. (5)]{bancal2011_di_ghz_gme} & DI & $\frac{2}{3}$ & $\frac{2^{n-1}-1}{2^n - 1} \to \frac{1}{2}$\\
         \hline
         Qubit $\texttt{GHZ}(n)$ & \cite[Eq. (8)]{Guhne2010_ghz_witnessing} & Local Meas. & $ \frac{2^{(n-1)}-1}{2^n - 1}\to \frac{1}{2}$ & $\frac{2^{n-1}-1}{2^n - 1} \to \frac{1}{2}$ \\
         \hline 
         Qudit $\texttt{GHZ}(n)$ & \cite[Eq. (4)]{tavakoli2018_sdi_multipartite_entanglement} & SDI  & $\frac{d^{n-1}-1}{d^n - 1} \to \frac{1}{d}$ & $\frac{d^{n-1}-1}{d^n - 1} \to \frac{1}{d}$ \\
         \hline
         Qubit $\texttt{Clus}(n,1)$ & \cite[Eq. (9)]{Toth_2005_genuine_multipartite_entanglement} & Local Meas. & $\frac{3 - 2^{(2-n)}}{4 - 2^{(2-n)}} \to \frac{3}{4}$ & $\frac{2^{n - \lfloor(n + 1) / 3\rfloor} - 1}{2^{n} - 1} \to \frac{1}{2^{\lfloor (n +1 ) / 3\rfloor}} \to 0$ \\
         \hline
         Qubit $\texttt{Clus}(n,1)$ & \cite[Eq. (5)]{Jungnitsch2011_linear_cluster} & PPT  & $1 - (1 - \frac{1}{2^{n-1}}  + \frac{\lfloor \frac{n+2}{3}\rfloor + 1}{2^{\lfloor \frac{n+2}{3}\rfloor}})^{-1} \to \frac{1}{2^{\frac{n}{3}}} \to 0$ & $\frac{2^{n - \lfloor(n + 1) / 3\rfloor} - 1}{2^{n} - 1} \to \frac{1}{2^{\lfloor n +1 ) / 3\rfloor}} \to 0$ \\
         \hline
         {Qubit Graphs} & \cite[Eq. (23)]{jungnitsch2011_graph_entanglement-witness} & PPT  & $1 - (1 - \frac{1}{2^{n-1}}  + \frac{|\beta(n)| + 1}{2^{|\beta(n)|}})^{-1} \to \frac{|\beta(n)|}{2^{|\beta(n)|}+|\beta(n)|+1}$  & $\frac{2^{n-|\texttt{MIM}(\mathbf{A})|}-1}{2^n -1} \to \frac{1}{2^{|\texttt{MIM}(\mathbf{A})|}}$ \\
         \hline
    \end{tabular}
    \caption{\textbf{Robustness Comparison with Existing Entanglement Witnesses.} We compare the noise robustness results of our witness of communication advantage with previous works that witness genuine multipartite entanglement under varying assumptions including local measurements, device-independent, semi-device-independent, and PPT-based methods. For each graph state we list the previous works that derive the critical visibility of a witness of genuine multipartite entanglement. We show the scaling behavior of the critical visibility as $n\to \infty$ (if applicable). The comparison to our work assumes that the receiver performs a noiseless measurement in the graph basis such that $v_m=1$ and Eq.~\eqref{eq:ideal_meas_identity} holds, hence the depolarizing noise is attributed only to the graph state preparation. In the last row, $\beta(n)$ is defined as the set of nodes $\beta(n) \subseteq [n]$ such that for any two nodes, $i,j \in\beta(n)$, $N(i)\cap N(j) = \emptyset$ \cite{jungnitsch2011_graph_entanglement-witness}, which is distinct from the maximum induced matching $\texttt{MIM}(\mathbf{A})$. }
    \label{tab:comparison_with_previous_work}
\end{table*}

We now compare Protocol~\ref{protocol:semi-di-for-entanglement} with previous approaches for witnessing multipartite entanglement.
To begin, our SDI estimation procedure provides several important advantages.
First, the witnesses rely only on the input-output data of the dense network coding protocol, allowing the procedure to be run on quantum network hardware with uncharacterized measurements. 
Second, a violation to the classical bound in Theorem~\ref{thm:graph_fn_classical_bound} implies that both entangled states are prepared, quantum communication is used, and entangled basis measurements are applied.
Third, we show in Theorem~\ref{thm:tight_classical_bound} that the upper bound is achieved under a broad set of conditions, demonstrating the sensitivity of our entanglement witnesses.
Fourth, we show in Theorem~\ref{thm:noise_robustness} that our approaches are robust to depolarizing noise at network scale whenever $|\texttt{MIM}(\mathbf{A})|$ scales linearly with $n$ as the critical visibility $v^{\texttt{MIM}}(\mathbf{A})$ vanishes as $n \to \infty$.

We note two works that are similar to ours. First, Reference~\cite{tavakoli2018_sdi_multipartite_entanglement} presents a similar SDI testing scheme that witnesses genuine multipartite entanglement of GHZ states.
We extend these results to any graph state, showing that the noise robustness improves, but genuine multipartite entanglement is not implied by a communication advantage.
Second, the MIM bound on the success probability in Theorem~\ref{thm:graph_fn_classical_bound} is analogous to the `matching' bound \cite[Section~3.2]{Markham2007_graph_state_entanglement_local_measurement} on the geometric measure of entanglement, taken with respect to fully separable states $\sigma\in\texttt{Sep}$ \cite{wei2003_geometric_measure}. Our work extends the qubit geometric measure of entanglement in Reference~\cite[Eq.~(5)]{Markham2007_graph_state_entanglement_local_measurement} to qudits as
\begin{equation}\label{eq:geometric_measure_of_entanglement}
    E_G(\ket{\Gamma^{\mathbf{A}}_0}) :=  - \log_d\left( \max_{\sigma\in \texttt{Sep}}\ip{\Gamma^{\mathbf{A}}_0|\sigma| \Gamma^{\mathbf{A}}_0} \right)\geq |\texttt{MIM}(\mathbf{A})| \ .
\end{equation}
The lower bound in Eq.~\eqref{eq:geometric_measure_of_entanglement} results from Theorem~\ref{thm:graph_fn_classical_bound}  because  the optimization over separable states corresponds to an optimization over quantum MNs $\mathcal{Q}(\vec{d})\subseteq \mathcal{S}(\vec{d})$ as
\begin{align}
    \max_{\sigma\in \texttt{Sep}} \ip{\Gamma^{\mathbf{A}}_0 |\sigma |\Gamma^{\mathbf{A}}_0} &= \max_{\sigma \in \texttt{Sep}} P_S(g^{\mathbf{A}}, \sigma, \{\op{\Gamma^{\mathbf{A}}_z}{\Gamma^{\mathbf{A}}_z}\}_z)\label{eq:geometric_measure_justification}\\
    &\leq P_S^\star(g^{\mathbf{A}}, \mathcal{S}(\vec{d})) \leq \frac{1}{d^{|\texttt{MIM}(\mathbf{A})|}}
\end{align}
where the equality in Eq.~\eqref{eq:geometric_measure_justification} follows from the identity for ideal graph basis measurements in Eq.~\eqref{eq:ideal_meas_identity}.

Previous works primarily focus on witnessing genuine multipartite entanglement, which requires that a prepared quantum state does not decompose into a mixture of biseparable quantum states. \cite{uffink2002_gme_original_paper,Bourennane2004_gme}.
Our method does not witness genuine multipartite entanglement, unless the success probability is sufficiently large.
Instead, our method witnesses a communication advantage that requires entanglement and quantum communication to be achieved.
In Table~\ref{tab:comparison_with_previous_work} we compare witnesses of genuine multipartite entanglement with the noise robustness of our results, assuming ideal graph basis measurements.

Local measurement approaches for witnessing graph state entanglement \cite{Toth_2005_genuine_multipartite_entanglement,Toth_2005_stabilizer_entanglement_detection,tang2013greenberger,McKague2014_graph_state_self-testing} are simpler to implement than measurements in graph state bases, but they also have some drawbacks. First, entanglement witnesses that use local measurements can require many observables to be measured, leading to bottlenecks in data collection. Our method avoids this issue by measuring a single observable with randomly applied encodings.
Second, the restriction to local measurements can limit the robustness of entanglement witnessing.
For example, violations of Bell inequalities such as the Clauser-Horne-Shimony-Holt (CHSH) inequality have a critical visibility of $v^\star = 1/\sqrt{2}$ \cite{Horodecki1995_chsh}, whereas we obtain $v^\star = \frac{1}{3}$ as the critical visibility for the two-sender MN.  Therefore entangled states that do not violate the CHSH inequality still yield communication advantage when assisting senders in a quantum MN.
When graph states are considered, we show in Theorem~\ref{thm:noise_robustness} that star graphs, tree graphs, and cluster states all demonstrate a critical visibility  $v^{\texttt{MIM}}(\mathbf{A})$ that vanishes for large $n$. As we list in Table~\ref{tab:comparison_with_previous_work}, witnessing genuine multipartite entanglement of a linear cluster state with the local measurement requires a visibility $v > \frac{3}{4}$\cite{Toth_2005_genuine_multipartite_entanglement,Toth_2005_stabilizer_entanglement_detection,tang2013greenberger,McKague2014_graph_state_self-testing}, whereas our methods witness communication advantage for all $v > v^\star(\texttt{Clus}(n,1)) \to 0$.

Our method shows similar  entanglement witnessing sensitivity as the positive partial transpose (PPT) negativity criterion. 
For instance, the critical visibility of $v^\star=1/3$ that is reported by the PPT negativity criterion for two maximally entangled qubits matches the critical visibility of the two-sender MN \cite{Horodecki1997_ppt_criterion}. 
Entanglement witnessing methods based on the PPT negativity criterion have also been extended to graph states \cite{jungnitsch2011_graph_entanglement-witness}, for which certain classes of connected graph states, the critical visibility vanishes as $n\to \infty$, showing similar scaling behavior to our methods.
The main disadvantage of the PPT negativity criterion is that evaluating the PPT of a state can require a state tomography or many observables to be measured, which can cause bottlenecks in practice.

\section{Discussion}\label{section:discussion}
 
Quantum dense network coding is an emerging area of research that has promising applications in  cryptography and network communications.
We generalize quantum dense network coding to $n$-sender MNs, introducing in Protocol~\ref{protocol:dense_affine_transformation} a family of multiparty affine transformations  that require half as many qudits of communication to evaluate as dits.
We prove that a strict separation exists between the success probability of evaluating the multiparty affine transformation between $\mathcal{Q}^{\texttt{E}}(\vec{d})$ and other resource configurations $\mathcal{S}(\vec{d})$, which shows that under the SDI assumptions, a communication advantage is only achieved if both sender-sender entanglement and sender-receiver quantum communication are used.
The communication advantage is robust to noise because the critical visibility becomes negligible for large $n$ and scales inversely with $d$. In Protocol~\ref{protocol:semi-di-for-entanglement}, we exploit these characteristics to develop a procedure for robust SDI estimation of the effective visibility of graph states and measurements in entanglement-assisted quantum MNs.

From a physical perspective, quantum dense network coding with graph states presents a fascinating scenario in which entanglement-assisted quantum MNs drastically outperform other resource configurations under strict communication constraints.
The strong communication advantages and noise robustness result from the bound on $\mathcal{S}(\vec{d})$ in Theorem~\ref{thm:graph_fn_classical_bound}, leading to negligible success probabilities when evaluating certain conditionally bijective functions for large $n$.

The application of quantum dense network coding in real-world settings requires several technical challenges to be addressed.
First, strict timing and synchronization is needed between all network devices to jointly process quantum information.
Second, a scheme for graph state generation and measurement must be implemented \cite{bodiya2006_scalable_generation_of_graph_states,Gold2026_heraldedphotonicgraphstates,zheng2026graph_state_generation}.
Third, error mitigation and correction routines must be developed to address qudit loss and noise.
Finally, entanglement verification is necessary to secure communications, otherwise eavesdropping attacks exist \cite{george2026_dnc}, \textit{e.g.}, entanglement verification can be randomly interleaved to detect eavesdropping as shown by the secure dense coding protocol in Reference~\cite{das2021_secure_communication_qmac}. 

Our work can be extended in several  directions.
First, quantum dense network coding with entanglement-assisted quantum MNs can be experimentally demonstrated and used to verify the preparation, transmission, and measurement of graph state entanglement.
Second, new dense network codings can be developed from unitaries beyond the discrete Weyl operators, or entangled states beyond graph states.
Third, dense network coding can be adapted to topologies with greater complexity, such as butterfly networks.
Finally, dense network coding can be applied in protocols for information security, distributed computing, or network communications.

\section*{Acknowledgements}

The authors thank Haneul Kim, Felix Leditzky, and Eric Chitambar for insightful discussions. This work is supported by Aliro Technologies, Inc.
This project is supported by the Ministry of Education, Singapore, through grant T2EP20124-0005. This project is supported by the National Research Foundation, Singapore under the NRF Postdoctoral award. Large language models, including Anthropic Claude Opus 5.5 and OpenAI GPT-5.6, were used during technical review of this manuscript.

\sloppy
\bibliography{references}

\appendix

\section{Quantum Technical Preliminaries}

\subsection{Discrete Weyl Operators and Tightly Network Codeable Groups}\label{section:discrete_heisenberg-weyl_operators}

We characterize qudit systems using the discrete Weyl operators, $X$ and $Z\in \mathbb{C}^{d\times d}$, which are unitary operators, respectively referred to as the shift and clock operators. When $X$ or $Z$ are applied to a computational basis state $\ket{m} \in \{\ket{0}, \dots, \ket{d-1}\}$, the state is transformed as
\begin{align}
    X \ket{m} = \ket{m + 1}, \quad Z \ket{m} = \omega^m \ket{m},
\end{align}
where $\omega = e^{2 \pi i / d}$ is the $d^{th}$ root of unity.
The operators $X$ and $Z$ are periodic over powers of $d$ as
\begin{equation}
    Z^d = X^d = Z^0 = X^0 = I \ ,
\end{equation}
 they commute up to a phase as
\begin{equation}\label{eq:appendix-weyl-group-commutation}
    X Z = \omega^{-1} Z X \ ,
\end{equation}
they are traceless as $\tr{Z} = \tr{X} = 0$, and their inverses are defined by the Hermitian adjoint as
\begin{align}\label{eq:discrete-weyl-adjoint-identity}
    (X^k)^\dagger = X^{d-k} = X^{-k}, \quad (Z^k)^\dagger = Z^{d-k} = Z^{-k} 
\end{align}
where $k \in \Zbb_d$ and $ -k := d-k \mod d$.

The Heisenberg-Weyl group is  generated by taking all combinations of $X$ and $Z$ to obtain $\mathcal{G}_d = \{W_{j,k}:=X^j Z^k\}_{j,k \in \mathbb{Z}_d}$ where
the Heisenberg-Weyl group forms a basis spanning $\mathbb{C}^{d\times d}$, such that discrete Weyl operators are orthogonal under the trace as 
\begin{equation}\label{eq:weyl_operator_orthogonality}
    \tr{W_{j,k}W_{j',k'}^\dagger} = d \delta_{j,j'}\delta_{k,k'} \ .
\end{equation}
Also note that the operator product results in the representation of a group element up to a scalar global phase, 
\begin{equation}\label{eq:weyl-operator-product-phase}
    W_{j,k} W_{\ell,m} = \omega^{k\ell}W_{(j+\ell,k+m)\bmod d} \ .
\end{equation}

Notably when $d=2$, the Heisenberg-Weyl group corresponds to the Pauli group $\mc{G}_2 = \{I, X, Y, Z\}$ where
\begin{equation}
    X = \begin{pmatrix} 0 & 1\\1 & 0 \end{pmatrix}, \; Y = \begin{pmatrix} 0 & -i\\i & 0 \end{pmatrix}, \; Z = \begin{pmatrix} 1 & 0\\0 & -1 \end{pmatrix}.
\end{equation}
The product of two Pauli operators $\sigma_j,\sigma_k\in \mc{G}_2$ 
\begin{equation}
    \sigma_j \sigma_k = \delta_{j,k} \sigma_0 + i \varepsilon_{j,k,\ell} \sigma_{\ell}
\end{equation} 
where $j$ and $k\in\{0,1,2,3\}$ respectively index $(I,X,Y,Z)$ and
$\varepsilon_{j,k,\ell}$  is the Levi-Civita symbol. There are two main distinctions between the Pauli operators and the discrete Weyl operators. First, the Pauli operators are Hermitian while the Heisenberg-Weyl operator are not. Second, the Pauli operators have a real-valued scalar phase factor, $\omega \in \{\pm 1\}$, while when $d > 2$, the scalar phase is complex.

A group $\mathcal{G}$ is defined as a set whose elements $q, r, s\in \mathcal{G}$ support an associative binary product, $q\cdot r\in\mathcal{G}$, where $(q\cdot r) \cdot s = q \cdot (r \cdot s)$, there exists an identity element $I\in \mathcal{G}$ such that $I\cdot q = q$, and each element has an inverse $q^{-1}\in\mathcal{G}$ such that $I = q^{-1}q=qq^{-1}$. A group is tightly network codeable if it has order $d^2=|\mathcal{G}|$ and there exists a set of unitary operators $\mathcal{U}:=\{U_q\in \mathbb{C}^{d\times d}\}_{q\in\mathcal{G}}$ and scalar phases $\{
\omega(q,r)\}_{q,r \in \mathcal{G}} \subset \mathbb{C}\setminus \{0\}$ such that the unitaries form an orthonormal basis under the Hilbert-Schmidt inner product, $\tr{U^\dagger_q U_r} = d \delta_{q,r}$ for all $q,r\in\mathcal{G}$, and the product of the unitaries satisfy $U_q U_r = \omega(q,r) U_{q\cdot r}$. As shown in Reference~\cite[Proposition 26]{george2026_dnc}, a tightly network codeable group, such as the Heisenberg-Weyl group, corresponds to a \textit{nice error basis} \cite{knill1996group,knill1996nonbinary_unitary_bases,klappenecker2002beyond,klappenecker2003unitary}.

\subsection{Graph States and Measurement Bases}\label{section:graph_states_and_measurements}

A graph state is a type of entangled $n$-qudit state that is conveniently represented as a set of $n$ nodes linked by a set of pairwise edges that define a particular graph \cite{Markham2007_graph_state_entanglement_local_measurement,keet2010graph_stae_QSS,helwig2013absolutelymaximallyentangledqudit,aigner2025_qudit_stabilizer_formalism}.
The topology of the graph state is described by the adjacency matrix, $\mbf{A} \in \mbb{Z}_d^{n\times n}$ where $\mathbf{A}=\mathbf{A}^T$ is symmetric.
A qudit graph state is then defined as
\begin{equation}\label{eq:graph_state_def}
    \ket{\Gamma^{\mbf{A}}_0} := \prod_{\{i,j\}\in \texttt{Edges}(\mathbf{A})} CZ^{\mbf{A}_{i,j}}_{i,j}\ket{\overline{0}\dots\overline{0}}
\end{equation}
where the controlled phase gate is defined as
\begin{equation}
    CZ_{i,j} := \sum_{k \in \mathbb{Z}_d} \op{k}{k}_i\otimes Z_{j}^{k}
 = \sum_{k,\ell\in \mathbb{Z}_d}\omega^{k \ell}\op{k}{k}_i\otimes\op{\ell}{\ell}_j
\end{equation}
where $\omega^{k\ell} = e^{2\pi i k \ell /d}$ is a scalar phase factor. The initial state is defined in the $n$-qudit Fourier basis as
\begin{equation}
    \ket{\overline{0}\dots\overline{0}} := \bigotimes_{j\in[n]} F_j \ket{0\dots 0}
\end{equation}
 where the Fourier operator applied to the $j^{th}$ qudit is defined as
\begin{equation}
    F_j := \frac{1}{\sqrt{d}}\sum_{k, \ell \in \mathbb{Z}_d} \omega^{k \ell} \op{k}{\ell} \ .
\end{equation}
In the qubit case where $d=2$, the Fourier gate corresponds to the Hadamard gate
\begin{equation}
    H :=\frac{1}{\sqrt{2}}\begin{pmatrix}1 & 1 \\ 1 & -1\end{pmatrix}
\end{equation}
and the controlled phase gate is 
\begin{equation}
    CZ := \begin{pmatrix}1 & 0 & 0& 0\\0 & 1 & 0 &0 \\ 0 & 0& 1 & 0\\ 0 & 0 & 0 & -1\end{pmatrix} \ .
\end{equation}

An orthonormal basis is generated for any graph state $\ket{\Gamma_0^{\mathbf{A}}}$ by applying clock operators to each qudit as
\begin{equation}
    \ket{\Gamma_z^{\mbf{A}}} := \prod_{i\in \mathbb{Z}_d} Z_i^{z_i}\ket{\Gamma_0^{\mbf{A}}}
\end{equation}
where the integer string $z=(z_1,\dots,z_n)\in \Zbb^{n}_d$ indexes each of the $d^n$ basis elements and $Z_i$ is the clock operator $Z$ acting on the $i^{th}$ qudit.

An $n$-qudit unitary operator $S\in\mathbb{C}^{d^n\times d^n}$ is called a stabilizer of a graph state if
\begin{equation}\label{eq:appendix-stabilizer_operator_def}
    \ket{\Gamma^{\mbf{A}}_0} = S\ket{\Gamma^{\mbf{A}}_0} \ .
\end{equation}
For any graph state $\ket{\Gamma_0^{\mathbf{A}}}$, a set of $n$ stabilizer operators is generated as
\begin{equation}\label{eq:appendix-stabilizer-generator-set}
     \left\{  S_i := X_i \prod_{j \in N
    (i)} Z_j^{\mathbf{A}_{i,j}} \right\}_{i\in [n]}
\end{equation}
where the $i^{th}$ node in the graph contributes the stabilizer operator $S_i$.
Given the definition of a stabilizer in Eq.~\eqref{eq:appendix-stabilizer_operator_def} and the generating set in Eq.~\eqref{eq:appendix-stabilizer-generator-set}, we obtain the rule
\begin{equation}
    X_j\ket{\Gamma^{\mbf{A}}_0} = \prod_{k\in N(j)} (Z_k^{\mbf{A}_{j,k} })^\dagger\ket{\Gamma^{\mbf{A}}_0} \ ,
\end{equation}
in which the $X$ operator on a qudit is converted into the product of $Z$ operators on each neighboring node.
We formally define the stabilizer group as
\begin{equation}
    \texttt{Stab}(\mathbf{A}) := \{ S_k \}_{k\in \mathbb{Z}_d^n}
\end{equation}
where for $k\in \mathbb{Z}_d^n$,
\begin{equation}
    S_k:=\prod_{i\in [n]} S^{k_i}_i = \omega'\prod_{i\in [n]}X_i^{k_i}Z_i^{\sum_{j\in N(i)} \mathbf{A}_{i,j}k_{j}}
\end{equation}
where we use $\mathbf{A}_{i,j}=\mathbf{A}_{j,i}$ to group each $Z_i$ operator by the $i^{th}$ qudit and ignore the global phase $\omega'$.
If the adjacency matrix's elements $\mbf{A}_{i,j}$ are coprime with $d\geq2$ for all $\{i,j\}\in \texttt{Edges}(\mathbf{A})$,
then the generator for the $i^{th}$ node can be applied from zero to $d-1$ times to generate $d^n$ unique mutually orthogonal operators. A graph state can be expressed in the density matrix representation as a sum of its stabilizer operators as
\begin{equation}\label{eq:stabilizer_state_decomposition}
    \op{\Gamma_0^{\mbf{A}}}{\Gamma_0^{\mbf{A}}} = \frac{1}{d^n}\sum_{k\in \mathbb{Z}_d^n}\prod_{i\in [n]} S^{k_i}_i \ .
\end{equation}

\section{Quantum Dense Network Coding with Graph States Technical Details}

\subsection{Proofs for Quantum Characterization of Dense Network Coding}\label{section:appendix-quantum-characterization-proofs}

\begin{proposition}\label{prop:heisenber-weyl-graph-state-encoding}
    For an adjacency matrix $\mathbf{A}\in\mathbb{Z}_d^{n\times n}$ and  the associated $n$-qudit graph state $\ket{\Gamma^{\mathbf{A}}_0}$, applying the operator $W^{\texttt{Net}}_{x,y} = \bigotimes_{i\in[n]} W_{x_{i},y_{i}}= \prod_{i\in[n]}X^{x_i}_iZ^{y_i}_i$ results in the graph state basis element
    \begin{equation}
       \ket{\Gamma_{g^{\mathbf{A}}(x,y)}^{\mathbf{A}}} = \omega' \prod_{i=1}^{n} X_i^{x_{i}}Z_i^{y_{i}} \ket{ \Gamma_0^{\mbf{A}}}
    \end{equation}
    where the function $g^{\mathbf{A}}(x,y) := y - \mathbf{A}x \mod d$
    and $\omega'\in \mathbb{C}$ is a scalar phase factor that can be ignored.
\end{proposition}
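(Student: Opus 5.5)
The plan is a direct calculation with the stabilizer rule from Appendix~\ref{section:graph_states_and_measurements}. Every $X_i$ acting on the graph state is traded for a product of clock operators on the neighbors of $i$. What remains is a pure product of $Z$ operators, which by definition produces a graph basis element $\ket{\Gamma_z^{\mathbf{A}}}$.

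\textbf{Step 1: reorder into a product of $X$'s times a product of $Z$'s.} Operators acting on distinct qudits commute, so
\begin{equation}
\prod_{i\in[n]} X_i^{x_i}Z_i^{y_i} = \Big(\prod_{i\in[n]}X_i^{x_i}\Big)\Big(\prod_{i\in[n]}Z_i^{y_i}\Big).
\end{equation}
I would rather move the $Z$'s to the left, past the $X$'s. By Eq.~\eqref{eq:appendix-weyl-group-commutation}, each local exchange $X^{x_i}Z^{y_i}=\omega^{-x_iy_i}Z^{y_i}X^{x_i}$ only costs a phase, which is absorbed into $\omega'$. This gives
\begin{equation}
W^{\texttt{Net}}_{x,y}\ket{\Gamma_0^{\mathbf{A}}} = \omega''\Big(\prod_i Z_i^{y_i}\Big)\Big(\prod_i X_i^{x_i}\Big)\ket{\Gamma_0^{\mathbf{A}}}.
\end{equation}

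\textbf{Step 2: apply the stabilizer rule.} Since $S_i\ket{\Gamma_0^{\mathbf{A}}}=\ket{\Gamma_0^{\mathbf{A}}}$ and $S_i=X_i\prod_{j\in N(i)}Z_j^{\mathbf{A}_{i,j}}$, the $X_i$ and $Z_j$ factors in $S_i$ act on different qudits and therefore commute. Hence
\begin{equation}
X_i\ket{\Gamma_0^{\mathbf{A}}}=\prod_{j\in N(i)}Z_j^{-\mathbf{A}_{i,j}}\ket{\Gamma_0^{\mathbf{A}}}.
\end{equation}
I would apply this inductively, one $X$ factor at a time, working from the rightmost. The point to check is that after replacing some $X$'s by $Z$'s, the remaining $X_k$ still acts directly on $\ket{\Gamma_0^{\mathbf{A}}}$. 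To ensure this, I commute $X_k$ through the already-produced $Z_j$ factors; each such exchange again costs only a phase $\omega^{\pm(\cdot)}$. Iterating gives
\begin{equation}
\prod_i X_i^{x_i}\ket{\Gamma_0^{\mathbf{A}}}=\omega'''\prod_{j}Z_j^{-\sum_{i\in N(j)}\mathbf{A}_{j,i}x_i}\ket{\Gamma_0^{\mathbf{A}}},
\end{equation}
where I use $\mathbf{A}_{i,j}=\mathbf{A}_{j,i}$ and $\mathbf{A}_{j,j}=0$, so that the exponent is exactly $-(\mathbf{A}x)_j \bmod d$.

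\textbf{Step 3: combine.} Collecting the $Z$ powers from Steps 1 and 2 gives
\begin{equation}
\omega'\prod_j Z_j^{y_j-(\mathbf{A}x)_j}\ket{\Gamma_0^{\mathbf{A}}}=\omega'\ket{\Gamma^{\mathbf{A}}_{g^{\mathbf{A}}(x,y)}}.
\end{equation}
This holds by the basis definition in Eq.~\eqref{eq:ONB-of-graph-states}, using $Z^d=I$ to reduce the exponents mod $d$.

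\textbf{Main obstacle.} The only delicate part is the phase bookkeeping in Step 2. Pushing $X_k$ past clock operators on the same qudit generates phases. I would avoid tracking them explicitly by noting that every exchange uses Eq.~\eqref{eq:appendix-weyl-group-commutation} or Eq.~\eqref{eq:weyl-operator-product-phase}, so each exchange multiplies by a power of $\omega$ and never changes the operator content. The accumulated factor is therefore a global scalar of unit modulus, which the statement allows us to ignore.
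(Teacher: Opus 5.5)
Your proposal is correct and follows essentially the same route as the paper: trade each $X_i$ for clock operators on its neighbors via the stabilizer rule, absorb all commutation phases into $\omega'$, regroup the exponents using $\mathbf{A}_{i,j}=\mathbf{A}_{j,i}$ to obtain $-(\mathbf{A}x)_j$, and identify the result with $\ket{\Gamma^{\mathbf{A}}_{y-\mathbf{A}x}}$. Your explicit induction over the order in which the $X$ factors act on the state is more careful than the paper's single-line step, but it does not change the argument.
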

\begin{proof}
    Given values $x,y\in \mathbb{Z}_d^n$, it follows that
    \begin{align}
            \prod_{i=1}^{n} X_i^{x_{i}}Z_i^{y_{i}} &\ket{ \Gamma_0^{\mbf{A}}} =\omega'\prod_{i=1}^{n} Z^{y_{i}}_i \prod_{j\in N(i)} (Z_j^{\mbf{A}_{i,j}x_{i}})^\dagger \ket{ \Gamma_0^{\mbf{A}}} \label{eq:first-equality-graph-encoding} \\
            &= \omega'\prod_{i=1}^{n} Z^{y_{i}}_i \prod_{j\in N(i)} (Z_i^{\mbf{A}_{i,j}x_{j}})^\dagger \ket{ \Gamma_0^{\mbf{A}}} \label{eq:second-equality-graph-encoding} \\
            &= \omega'\prod_{i=1}^{n} Z_i^{y_{i} - \sum_{j\in N(i)} \mbf{A}_{i,j} x_{j}} \ket{ \Gamma_0^{\mbf{A}}}\label{eq:third-equality-graph-encoding} \\
            &= \omega'\ket{\Gamma_{y - \mathbf{A} x \bmod d}^{\mathbf{A}}} = \omega'\ket{\Gamma^{\mathbf{A}}_{g^{\mathbf{A}}(x,y)}} \ .
    \end{align} 
    Eq.~\eqref{eq:first-equality-graph-encoding} results from the stabilizer rule for graph states, $X_j\ket{\Gamma^{\mbf{A}}_0} = \prod_{k\in N(j)} (Z_k^{\mbf{A}_{j,k} })^\dagger\ket{\Gamma^{\mbf{A}}_0}$,
    which converts an $X_j$ operator on the $j^{th}$ qudit into  $Z_k$ operators on each neighboring node $k\in N(j)$, and the commutation rule for discrete Weyl operators, $XZ=\omega^{-1}ZX$ where $\omega=e^{\frac{2\pi i}{d}}$ and $\omega'$ is a global phase factor that can be ignored.
    Eq.~\eqref{eq:second-equality-graph-encoding} results from the symmetry of the adjacency matrix, $\mathbf{A}_{i,j} = \mathbf{A}_{j,i}$, and the fact that if $j\in N(i)$ then $i\in N(j)$, which allows the $Z_i^{x_j}$ operators contributed by neighboring nodes $j\in N(i)$ to be cleanly regrouped by each node $i$ as $\prod_{i\in[n]} \prod_{j\in N(i)} (Z_j^{\mbf{A}_{i,j}x_{i}})^\dagger = \prod_{i\in[n]} \prod_{j\in N(i)} (Z_i^{\mbf{A}_{i,j}x_{j}})^\dagger$.
    Eq.~\eqref{eq:third-equality-graph-encoding} results from the identity $(Z^k)^\dagger = Z^{-k}=Z^{d-k\bmod d}$, and the last line results from the definition in \eqref{eq:ONB-of-graph-states}.
\end{proof}

\subsection{Classical Bound on the Multiparty Affine Transformation}\label{section:appendix-classical-bound}

\begin{proposition}\label{prop:single_node_guessing_prob}
    Let $d\geq2$ be coprime with $\mathbf{A}_{i,j}\in \mathbb{Z}_d$ for all $j\in N(i)$. The function \begin{align}
        g^{\mathbf{A}}_i(x_{N(i)},y_i) = y_i - \sum_{j\in N(i)}\mathbf{A}_{i,j}x_j \bmod d
    \end{align}
    is symmetrically-conditionally bijective, and its success probability of evaluation  in an $n$-sender multiaccess network is bounded as
    \begin{align}\label{eq:appendix-giA-network-communication-bound}
        P^\star_S(g^{\mathbf{A}}_i, \mathcal{S}(\vec{c}) )&\leq \frac{1}{d}\min\{c_i,\min_{j\in N(i)} c_j\}
    \end{align}
    where $\vec{c}=(c_1,\dots,c_n)$ and $\mathcal{S}(\vec{c})$ is the set of classically bound resource configurations. 
\end{proposition}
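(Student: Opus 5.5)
The proof has two parts: first verify symmetric conditional bijectivity directly from the definition, then obtain the bound by applying Theorem~\ref{thm:conditional-bijectivity-network-communication-bounds} to the function $g^{\mathbf{A}}_i$. The inputs of $g^{\mathbf{A}}_i$ are indexed by the parties in $\{i\}\cup N(i)$. Party $i$ contributes $y_i$, and each neighbor $j\in N(i)$ contributes $x_j$. All other parties' inputs do not enter the function, so they can be regarded as dummy senders or simply dropped.

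For bijectivity, I check each variable in turn, holding the others fixed.

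\begin{itemize}
\item \textbf{Party $i$.} With all $x_{N(i)}$ fixed, the map $y_i\mapsto y_i - c \bmod d$ is a translation of $\mathbb{Z}_d$, with $c=\sum_{j\in N(i)}\mathbf{A}_{i,j}x_j$. It is therefore a bijection.
\item \textbf{Neighbor $j\in N(i)$.} With $y_i$ and $x_k$ fixed for $k\in N(i)\setminus\{j\}$, the map is $x_j\mapsto c' - \mathbf{A}_{i,j}x_j \bmod d$. Since $\gcd(\mathbf{A}_{i,j},d)=1$, multiplication by $\mathbf{A}_{i,j}$ is invertible on $\mathbb{Z}_d$; it has a modular inverse. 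Composing with the negation and translation, the map is a bijection.
\end{itemize}

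Hence $g^{\mathbf{A}}_i$ is $\mathcal{W}_{[n']\setminus\{k\}}$-conditionally bijective for every participating party $k$. Also $|\mathcal{W}_k|=|\mathcal{Z}|=d$ for each participating party.

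For the bound, one subtlety is that in the network each sender's actual input alphabet is $\mathcal{A}_k=\mathbb{Z}_d^2$, whereas only one coordinate matters for $g^{\mathbf{A}}_i$. I handle this by noting that a sender's irrelevant input coordinate can be absorbed into local randomness. Since the inputs are uniform, any strategy in $\mathcal{S}(\vec c)$ for the full-input problem induces a strategy in the same resource class for the reduced-input function. The induced strategy averages over the unused coordinates as private randomness, and it has the same success probability. Conversely, a reduced strategy embeds trivially into a full-input one. The same argument applies to senders outside $\{i\}\cup N(i)$, whose entire inputs are irrelevant.

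It follows that $P^\star_S(g^{\mathbf{A}}_i,\mathcal{S}(\vec c))$ equals the optimal success probability for the reduced symmetrically-conditionally bijective function with signaling dimensions $(c_k)_{k\in\{i\}\cup N(i)}$. Extra senders can be merged into any party without loosening the relaxation used in the theorem. Theorem~\ref{thm:conditional-bijectivity-network-communication-bounds} with $|\mathcal{Z}|=d$ then gives $\frac1d\min\{c_i,\min_{j\in N(i)}c_j\}$.

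The main obstacle is the reduction step: checking that the extra input coordinates and non-participating parties can be removed while staying inside each resource set $\mathcal{C},\mathcal{Q},\mathcal{C}^{\texttt{E}},\mathcal{C}^{\texttt{NS}}$. In each case, local randomness (or a fixed dummy input) is freely available and preserves the signaling dimensions, so I expect this to go through uniformly.
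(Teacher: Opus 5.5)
Your proposal is correct and follows the paper's proof. You check conditional bijectivity in $y_i$ (a translation) and in each $x_j$, $j\in N(i)$ (invertible because $\mathbf{A}_{i,j}$ has a modular inverse), and then invoke Theorem~\ref{thm:conditional-bijectivity-network-communication-bounds} with $|\mathcal{Z}|=d$. Your extra reduction, which averages the unused input coordinates as local randomness and folds the non-participating senders into the complement party formed in that theorem's proof for each participating $k$, is a valid justification of a step the paper leaves implicit.
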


\begin{proof}
    The inverse of $g^{\mathbf{A}}_i(x_{N(i)},y_i)$ with respect to fixed $\hat{x}_i\in\mathbb{Z}_d^{|N(i)|}$ is then $g^{\mathbf{A}}_{i,\hat{x}}(z_i) = z_i + \sum_{j\in N(i)}\mathbf{A}_{i,j} x_j$. Thus the function $g^{\mathbf{A}}_i(x_{N(i)},y_i)$ is $\mathcal{Y}_i$-conditionally bijective. 
    For $k\in N(i)$, and fixed $\hat{w}_k=(\hat{x}_{N(i)\setminus k},\hat{y}_i)\in\mathbb{Z}^{|N(i)|}_d$ the inverse with respect $x_k\in\mathbb{Z}_d$ is then
    \begin{equation}
        g^{\mathbf{A}}_{i,\hat{w}_k}(z_i) = \mathbf{A}^{-1}_{i,k}\left(z_i - \hat{y}_i + \sum_{j\in N(i)\setminus k} \mathbf{A}_{i,j}\hat{x}_j \right) \mod d \ ,
    \end{equation}
    which is bijective provided that $\mathbf{A}_{i,j}$ is coprime with $d$ for all $j,k\in N(i)$, such that the multiplicative inverse $\mathbf{A}_{i,k}^{-1}$ exists.
    Thus the function $g^{\mathbf{A}}_i(x_{N(i)},y_i)$ is $(\mathcal{Y}_i\times \mathcal{X}_{N(i)\setminus  k})$-conditionally bijective for all $k\in N(i)$. 
    Consequently, the function $g^{\mathbf{A}}_i(x_{N(i)},y_i)$ is symmetrically-conditionally bijective, and we apply Theorem~\ref{thm:conditional-bijectivity-network-communication-bounds} to the multivariate function $g^{\mathbf{A}}_i(x_{N(i)},y_i)$ with $|\mathcal{Z}|=
    |\mathcal{X}_j|=|\mathcal{Y}_i|=d$ to obtain the bound in Eq.~\eqref{eq:giA-network-communication-bound}.
\end{proof}

\begin{proposition}\label{prop:neighbor_node_guessing_probability}
    Let $(i,j) \in \texttt{Edges}(\mbf{A})$, $d\geq2$ be coprime with $\mathbf{A}_{i,j}\in \mathbb{Z}_d$ for all $j\in N(i)$. For fixed inputs $\hat{x} = (\hat{x}_k)_{k\in [n]\setminus \{i,j\}}$, define the induced functions
    \begin{align}
    g^{\mathbf{A}}_{i,\hat{x}}(x_{i},y_{j}) \coloneq g_{i}^{\mbf{A}}(x_{i},y_{j},\hat{x})  \ ,
    \end{align}
    and
    \begin{align}
        \hspace{-1mm} g^{\mathbf{A}}_{i,j,\hat{x}}(x_i,y_i,x_j,y_j,\hat{x}) \coloneq (g^{\mathbf{A}}_{i,\hat{x}}(y_i,x_j),g^{\mathbf{A}}_{j,\hat{x}}(y_j,x_i)) \ .
    \end{align}
    Then $g^{\mathbf{A}}_{i,j,\hat{x}}$ is doubly-conditionally bijective with respect to $\mathcal{A}_i=\mathcal{X}_i\times\mathcal{Y}_i$ and $\mathcal{A}_j=\mathcal{X}_j\times\mathcal{Y}_j$.
\end{proposition}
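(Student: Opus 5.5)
We write out the two output coordinates explicitly and check bijectivity directly. With $\hat{x}$ fixed, set $c_i := \sum_{k\in N(i)\setminus\{j\}}\mathbf{A}_{i,k}\hat{x}_k$ and $c_j := \sum_{k\in N(j)\setminus\{i\}}\mathbf{A}_{j,k}\hat{x}_k$. Then
\begin{equation*}
g^{\mathbf{A}}_{i,j,\hat{x}}(x_i,y_i,x_j,y_j) = \left(y_i - \mathbf{A}_{i,j}x_j - c_i,\; y_j - \mathbf{A}_{j,i}x_i - c_j\right) \bmod d ,
\end{equation*}
which maps into $\mathbb{Z}_d^2$. Both input alphabets $\mathcal{A}_i$ and $\mathcal{A}_j$ have size $d^2$. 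By the definition of double (symmetric) conditional bijectivity with $n=2$, we must show two things. First, for each fixed $(\hat{x}_j,\hat{y}_j)\in\mathcal{A}_j$, the map $(x_i,y_i)\mapsto g^{\mathbf{A}}_{i,j,\hat{x}}$ is a bijection $\mathbb{Z}_d^2\to\mathbb{Z}_d^2$. Second, the same holds with the roles of $i$ and $j$ exchanged.

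For the first claim, fix $(x_j,y_j)$ and solve for $(x_i,y_i)$ given a target $(z_i,z_j)$. The second coordinate involves only $x_i$, through $z_j = y_j - \mathbf{A}_{j,i}x_i - c_j$. Since $\{i,j\}\in\texttt{Edges}(\mathbf{A})$, the entry $\mathbf{A}_{j,i}=\mathbf{A}_{i,j}$ is coprime with $d$, so its multiplicative inverse mod $d$ exists. This gives the unique solution
\begin{equation*}
x_i = \mathbf{A}_{j,i}^{-1}(y_j - c_j - z_j) \bmod d .
\end{equation*}
The first coordinate involves only $y_i$, which enters with coefficient $1$, so $y_i = z_i + \mathbf{A}_{i,j}x_j + c_i$ is uniquely determined. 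We have thus written down an explicit two-sided inverse, which shows the map is a bijection. Equivalently, the map is affine with triangular linear part having unit determinant up to the invertible factor $-\mathbf{A}_{j,i}$. The second claim follows identically by the symmetry $i\leftrightarrow j$ together with $\mathbf{A}_{i,j}=\mathbf{A}_{j,i}$.

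The main point to be careful about is bookkeeping rather than substance. The statement's argument ordering (written as $g^{\mathbf{A}}_{i,\hat{x}}(y_i,x_j)$) must be read so that output $i$ depends on $y_i$ and $x_j$. This is consistent with Eq.~\eqref{eq:output-i-function}, because $j\in N(i)$ and the $x_i$ term is absent since $\mathbf{A}_{i,i}=0$. We must also note that the hypothesis that $\{i,j\}$ is an edge, giving $\mathbf{A}_{i,j}\neq 0$ and coprime with $d$, is exactly what ensures invertibility. Without an edge, output $j$ would not depend on $x_i$, and the map would fail to be injective.
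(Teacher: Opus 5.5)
Your proof is correct and takes essentially the same route as the paper: once one sender's input is fixed, the output pair splits into a product of two univariate bijections, one coordinate depending only on the free sender's $y$ and the other only on its $x$ through the invertible weight $\mathbf{A}_{i,j}$. The paper gets each coordinate's bijectivity by citing Proposition~\ref{prop:single_node_guessing_prob}, whereas you write out the inverse explicitly, which is self-contained and uses only the coprimality of this one edge weight; your aside calling the linear part triangular should say anti-diagonal (its determinant $\mathbf{A}_{j,i}$ is still a unit), which changes nothing.
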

\begin{proof}
    From Proposition~\ref{prop:single_node_guessing_prob}, $g_i^{\mathbf{A}}$ and $g_j^{\mathbf{A}}$ are both symmetrically-conditionally bijective whenever $d\geq2$ is coprime with $\mathbf{A}_{i,j}$ for all $i,j\in [n]$. It follows that for fixed inputs $\hat{x} = (\hat{x}_k)_{k\in [n]\setminus \{i,j\}}$, the functions $g_{i,\hat{x}}^{\mathbf{A}}: \cX_{i} \times \cY_{j} \to \mbb{Z}_{d}$ and $g_{j,\hat{x}}^{\mathbf{A}}: \cX_{j} \times \cY_{i} \to \mbb{Z}_{d}$ are doubly-conditionally-bijective functions by our assumptions on the entries $\mbf{A}_{i,j}$. Therefore, once we hold $\hat{a}_i = (\hat{x}_i, \hat{y}_i)$ constant, we may conclude $g^{\mathbf{A}}_{i,j,\hat{a}_i,\hat{x}}(x_j,y_j,\hat{a}_i,\hat{x}) = (g_{i,\hat{x}}(x_j,\hat{y}_i,\hat{x}),g_{j,\hat{x}}(y_j,\hat{x}_i,\hat{x}))$ is a product function of two bijections and thus a bijection itself. This proves that $g_{i,j}^{\mathbf{A}}$ is $\mathcal{A}_i$-conditionally bijective. For the same reason, if we hold $\hat{a}_j = (\hat{x}_{j},\hat{y}_{j})$ constant, we may conclude $g_{i,j}^{\mathbf{A}}$ is $\mathcal{A}_j$-conditionally bijective. Therefore, for $\{i,j\} \in \texttt{Edges}(\mbf{A})$, the function $g^{\mathbf{A}}_{i,j,\hat{x}}$ is doubly-conditionally bijective with respect to $\cA_{i} = \cX_{i} \times \cY_{i}$ and $\cA_{j} = \cX_{j} \times \cY_{j}$.
\end{proof}

\begin{theorem}\label{thm:appendix_graph_fn_classical_bound}
    \textbf{General Maximum Induced Matching Bound:}
    Let
    $\mathbf{A}_{i,j}=\mathbf{A}_{j,i}$ be coprime with $d\geq 2$ for all $\{i,j\}\in\texttt{Edges}(\mathbf{A})$.
    Given a MN with communication resources $\mathcal{S}(\vec{c})$ where $\vec{c} = (c,\dots,c)$, the success probability of evaluating the multiparty affine transformation $g^\mathbf{A}$ over uniformly random inputs is bounded as 
    \begin{align}
        P^\star_S(g^{\mathbf{A}}, \mathcal{S}(\vec{c})) &\leq P_S^\star(g^{V_0(\mathbf{A})}, \mathcal{S}(\vec{c})) P^\star_S(g^{\texttt{MIM}(\mathbf{A})}, \mathcal{S}(\vec{c})) \\
        &\leq \left(\min\left\{\frac{c}{d},1\right\} \right)^{|V_0(\mathbf{A})|}\left( \frac{c}{d^2} \right)^{|\texttt{MIM}(\mathbf{A})|} \label{eq:classical_bound}
    \end{align}
    where $|\texttt{MIM}(\mathbf{A})|$ denotes the size of the maximum induced matching of $\texttt{Graph}(\mathbf{A})$, $|V_0(\mathbf{A})|$ is the number of isolated nodes in the graph, and we define the functions $g^{\texttt{MIM}(\mathbf{A})} := \bigtimes_{\{i,j\}\in\texttt{MIM}(\mathbf{A})} g^{\mathbf{A}}_{i,j,\hat{x}}$ and  $g^{V_0(\mathbf{A})} := \bigtimes_{k\in V_0(\mathbf{A})} g_k^{\mathbf{A}}$.
\end{theorem}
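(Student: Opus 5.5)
The plan is to reduce the full function $g^{\mathbf{A}}$ to a product of independent "blocks" by (i) marginalizing away the outputs that are not attached to isolated nodes or to edges of $\texttt{MIM}(\mathbf{A})$, (ii) fixing the inputs of all remaining senders, and (iii) arguing that the success probability of the resulting product function factorizes into the product of single-block success probabilities, each of which is controlled by Proposition~\ref{prop:subfunction_bounds} or by a trivial counting bound.

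First I will observe that correctly outputting all $n$ coordinates $z=g^{\mathbf{A}}(x,y)$ implies correctly outputting the subvector $(z_k)_{k\in V_0(\mathbf{A})}$ together with $(z_i,z_j)_{\{i,j\}\in\texttt{MIM}(\mathbf{A})}$. Hence $P^\star_S(g^{\mathbf{A}},\mathcal{S}(\vec{c}))$ is at most the optimal probability of evaluating this restricted vector function, since any channel for $g^{\mathbf{A}}$ yields one for the restriction by post-processing the receiver's output. Next, let $R$ be the set of senders not contained in $V_0(\mathbf{A})$ or in any matched edge. Because the average over uniform inputs is at most the maximum over fixings $\hat{x}$ of the inputs of $R$ (and of the $y$-inputs irrelevant to the retained outputs), I may condition on $\hat{x}$ and absorb it into the channel; this only enlarges the feasible set within $\mathcal{S}(\vec{c})$. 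The key structural fact is the induced-matching property. For $\{i,j\}\in\texttt{MIM}(\mathbf{A})$, every neighbour of $i$ other than $j$ lies in $R$ or in an unmatched position: it cannot be an endpoint of another matched edge, otherwise an edge $e_3$ would link two matched edges. Consequently, after fixing $\hat{x}$, the retained outputs $(z_i,z_j)$ depend only on $a_i,a_j$, which is exactly $g^{\mathbf{A}}_{i,j,\hat{x}}$. Likewise each $z_k=y_k$ for $k\in V_0(\mathbf{A})$ depends only on $a_k$. The restricted function is therefore exactly the product $g^{V_0(\mathbf{A})}\times g^{\texttt{MIM}(\mathbf{A})}$, with disjoint sets of senders feeding disjoint blocks, which gives the first inequality.

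The second inequality needs the product to factor: $P^\star_S(f_1\times f_2,\mathcal{S}(\vec{c}))\le P^\star_S(f_1,\mathcal{S})\,P^\star_S(f_2,\mathcal{S})$ for functions on disjoint sender sets. This is the main obstacle, because the resources in $\mathcal{S}(\vec{c})$ (shared entanglement in $\mathcal{C}^{\texttt{E}}$, non-signaling boxes, or a joint quantum measurement in $\mathcal{Q}$) may correlate the blocks, and the receiver decodes jointly. I would first try an iterative conditioning argument: write $P(\text{all blocks correct})=P(\text{block }1)\,P(\text{rest}\mid\text{block 1 correct})$. I would then show that, conditioned on the inputs and correct output of block 1, the remaining blocks still form a channel in $\mathcal{S}(\vec{c})$ on their own senders. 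The justification is that block 1's senders can be simulated inside the receiver or the source, because their inputs are independent and uniform, and by non-signaling they cannot help the other senders.

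If that fails, I would fall back on the linear-witness approach of Theorem~\ref{thm:conditional-bijectivity-network-communication-bounds}. In that approach, conditional bijectivity of each block with respect to each sender forces the per-block bound $c/d^2$, and a product of signaling-dimension counts over disjoint channels (a dimension-counting argument of the type in \cite{george2026_dnc}) gives the product. Finally, I insert the per-block bounds. Proposition~\ref{prop:subfunction_bounds} gives $c/d^2$ for each $g^{\mathbf{A}}_{i,j,\hat{x}}$. For isolated nodes, guessing $y_k$ with signaling dimension $c$ succeeds with probability at most $\min\{c/d,1\}$, by Theorem~\ref{thm:conditional-bijectivity-network-communication-bounds} applied to the univariate bijection $y_k\mapsto y_k$ and the trivial bound $1$. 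Multiplying these bounds yields Eq.~\eqref{eq:classical_bound}.
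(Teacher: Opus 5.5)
Your reduction is the same as the paper's. You keep only the outputs on $V_0(\mathbf{A})$ and on the endpoints of $\texttt{MIM}(\mathbf{A})$, and neutralize the remaining senders by fixing their inputs $\hat{x}$; the paper instead relaxes their signaling dimension to $d^2$ so that $\hat{x}$ can be forwarded, and your average-versus-maximum argument does the same job. You then use the induced-matching property to see that each retained pair is exactly $g^{\mathbf{A}}_{i,j,\hat{x}}$ on its own two senders, and finish with Proposition~\ref{prop:subfunction_bounds} and the $\min\{c/d,1\}$ bound for isolated nodes.

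The gap is the factorization $P^\star_S(f_1\times f_2,\mathcal{S}(\vec{c}))\leq P^\star_S(f_1,\mathcal{S}(\vec{c}))\,P^\star_S(f_2,\mathcal{S}(\vec{c}))$. You correctly identify it as the crux but do not establish it. Note that it is the \emph{first} displayed inequality that asserts this product; your restriction step by itself only gives $P^\star_S(g^{\mathbf{A}},\mathcal{S}(\vec{c}))\leq P^\star_S(g^{V_0(\mathbf{A})}\times g^{\texttt{MIM}(\mathbf{A})},\mathcal{S}(\vec{c}))$. The paper treats this step as immediate from the (conditional) mutual independence of the blocks' target values. You are right that independence of the targets does not by itself control a joint decoder assisted by entanglement, non-signaling boxes, or a joint quantum measurement.

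Your first route fails, however. ``Block 1 correct'' is an event defined through the receiver's joint output, and post-selecting on it can raise the conditional success of the remaining blocks above their optimum, even in $\mathcal{C}(\vec{c})$. For suitable encoders, a receiver that outputs a block-1 value it knows to be wrong whenever the block-2 messages leave $z^{(2)}$ undetermined attains $P(\text{rest}\mid\text{block 1 correct})=1$. So the conditioned behaviour is not a channel in $\mathcal{S}(\vec{c})$ on the remaining senders, and $P(1)\,P(\text{rest}\mid 1)$ cannot be bounded factor by factor.

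Your fallback is the right idea, and it can be made precise so that no multiplicativity lemma is needed. Choose one endpoint $i_e$ of every $e\in\texttt{MIM}(\mathbf{A})$. If $c<d$, add all isolated nodes, fixing their irrelevant $x_k$ by average-versus-maximum; if $c\geq d$, just drop their outputs, which costs a factor $1$. Merge these $m$ senders into one effective sender of signaling dimension $c^{m}$, and merge all other senders into a second one. This is the relaxation from the proof of Theorem~\ref{thm:conditional-bijectivity-network-communication-bounds}, including the bipartite non-signaling relaxation, applied to a different bipartition.

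Neighbours of $i_e$ other than its partner are unmatched senders. Hence, once the second group's inputs are fixed, the retained outputs are a product of the bijections $a_{i_e}\mapsto g^{\mathbf{A}}_{i,j,\hat{x}}$ from Proposition~\ref{prop:neighbor_node_guessing_probability} and of $y_k\mapsto y_k$. The retained function is therefore conditionally bijective in the merged input, with $|\mathcal{Z}|=d^{2|\texttt{MIM}(\mathbf{A})|}$, times $d^{|V_0(\mathbf{A})|}$ when $c<d$. Theorems 7 and 8 of \cite{george2026_dnc} then give $c^{m}/|\mathcal{Z}|$, which is exactly the right-hand side of Eq.~\eqref{eq:classical_bound}. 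This route also makes fixing $\hat{x}$ beforehand unnecessary.

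The intermediate product of optimal values in the statement additionally needs multiplicativity of $P^\star_S$. This holds for $\mathcal{C}(\vec{c})$, since for product encoders the optimal decoder is a product decoder. It also holds whenever the per-block bounds are attained, e.g.\ for $c=d$ via Proposition~\ref{proposition:neighbor_guessing_probability_separability}. It does not follow from independence of the outputs alone.
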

\begin{proof} 
    The approach of this proof is to pick a set of outputs of the multivariate function $g^{\mbf{A}}$ that are mutually independent so that we obtain multiplicative scaling in the probability of decoding each of these outputs correctly, and then use we apply the bounds on guessing the individual outputs due to conditional bijectivity of the subfunctions in Proposition~\ref{prop:subfunction_bounds}. 

    We first establish the multiplicative scaling in decoding probability. Note that the set of isolated vertices $V_{0}(\mathbf{A})$ and the set of vertices generated from $\texttt{MIM}(\mathbf{A})$, $V_{1}(\mathbf{A}) \coloneq \cup_{\{i,j\} \in \texttt{MIM}(\mathbf{A})} \; \{i,j\}$, are disjoint sets because no vertex in $V_{0}(\mathbf{A})$ has a neighbor to give rise to an edge.
    As a result, the set of random variables $(z_i := g_i^{\mathbf{A}}(y_i))_{i\in V_0(\mathbf{A})}$ are mutually independent from each other, as well as other vertices $j\in [n] \setminus V_0(\mathbf{A})$.
    Next, define $M \coloneq [n] \setminus (V_{0}(\mathbf{A}) \cup V_{1}(\mathbf{A}))$, 
    and consider the relaxation on the signaling dimension constraint $\vec{C}$ defined by
    \begin{equation}
         C_{i} = \begin{cases} d^{2} & i \in M \\ c & i \not \in M \end{cases} \ .
    \end{equation}
    Since $\vec{C}$ gives relaxed signaling dimension constraints relative to the theorem statement, we have $P_{S}^{\star}(g^{\mbf{A}},\cS(\vec{c})) \leq P_{S}^{\star}(g^{\mbf{A}},\cS(\vec{C}))$.
    Note that under these relaxed signaling constraints, the inputs $\hat{x} \coloneq (x_{i},y_{i})_{i \in M}$ may be forwarded to the receiver without loss of generality. Also observe that this results in the pairs of random variables
    \begin{equation}
    \left((z_i,z_j):=g^{\mathbf{A}}_{i,j,\hat{x}}(x_i,x_j, y_i, y_j)\right)_{\{i,j\}\in \MIM(\mbf{A})} \ ,
    \end{equation}
    being mutually independent as they depend on disjoint sets of inputs. In other words, in addition to being mutually independent of the values of nodes $j\in V_0(\mathbf{A})$, the $(z_{i},z_{j})_{\{i,j\}\in V_{1}}$ are conditionally mutually independent where the conditioning is upon the value of $\hat{x}$. As a result, for any choice of $\hat{x} \coloneq (x_{i},y_{i})_{i \in M}$,
    \begin{align}
        & P_S^\star(g^{\mathbf{A}},\mathcal{S}(\vec{C})) \\
        &\leq P_S^\star(g^{V_{0}(\mathbf{A})\cup V_{1}(\mbf{A})},\mathcal{S}(\vec{C})) \\
        &\leq P_S^\star(g^{V_{0}(\mathbf{A})\cup V_{1}(\mbf{A})}_{\hat{x}},\mathcal{S}(\vec{C})) \\
        &= P_S^\star(g^{V_{0}(\mathbf{A})}_{\hat{x}},\mathcal{S}(\vec{C})) \cdot  P_S^\star(g^{\MIM(\mbf{A})}_{\hat{x}},\mathcal{S}(\vec{C})) \\
        &= \prod_{k\in V_0(\mathbf{A})} P_S^\star(g^{\mathbf{A}}_{k},\mathcal{S}(\vec{C})) \cdot P_S^\star(g^{\MIM(\mbf{A})}_{\hat{x}},\mathcal{S}(\vec{C})) \\ 
        &= \prod_{k\in V_0(\mathbf{A})} P_S^\star(g^{\mathbf{A}}_{k},\mathcal{S}(\vec{C})) \cdot \prod_{\{i,j\}\in \MIM(\mbf{A})} P_S^\star(g^{\mathbf{A}}_{i,j,\hat{x}},\mathcal{S}(\vec{C})) \\
        &\leq \prod_{k\in V_0(\mathbf{A})} \min\{\frac{c}{d},1\}\prod_{\{i,j\}\in \MIM(\mbf{A})}\frac{c}{d^2} \label{eq:inequality_of_last_line}
    \end{align}
    where the first inequality is restricting to only guessing the values $(z_{i})_{i \in  V_{0}(\mathbf{A})\cup V_{1}(\mbf{A})}$, the second is making a choice of $\hat{x}$, the first and second equality are the mutual independence of the $(z_{i})_{i \in {V_{0}(\mathbf{A})}}$, the fourth equality is the conditional mutual independence of the $(z_{i},z_{j})_{\{i,j\} \in \MIM(\mbf{A})}$, and the inequality in Eq.~\eqref{eq:inequality_of_last_line} follows from \eqref{eq:giA-network-communication-bound} and \eqref{eq:bipartite-bound-success-probability}.
\end{proof}

\subsection{Classical Encodings for Achieving the Maximum Induced Matching Bound}\label{appendix:classical_encodings_for_mim_bound}

\begin{proposition}\label{proposition:neighbor_guessing_probability_separability}
    \textbf{Classical Achieveability of an Edge:} Let $\{i,j\} \in \texttt{Edges}(\mbf{A})$. For $\hat{x} = (\hat{x}_k)_{k\in [n]\setminus \{i,j\}}$,
    \begin{align}
        P_S^\star\left(g^{\mathbf{A}}_{i,j,\hat{x}}, \mathcal{S}(\vec{c})\right) &\geq P_S^\star\left(g^{\mathbf{A}}_{i,\hat{x}}, \mathcal{C}(\vec{c}_+)\right)P_S^\star\left(g^{\mathbf{A}}_{j,\hat{x}}, \mathcal{C}(\vec{c}_-)\right) \label{eq:success_probability_separability_product_channel}
    \end{align}
    where $c_{i},c_{j},c_{\pm,i},c_{\pm,j}\in[d^2]$ are chosen such that $c_i \geq c_{+,i}c_{-,i}$. Moreover, if $c_{i} = c_{j} = d$, $c_{+,i}=c_{+,j} \eqqcolon c_{+}$, $c_{-,i} = c_{-,j} \eqqcolon c_{-}$ and $d = c_{+}c_{-}$, then  Eqs.~\eqref{eq:success_probability_separability_product_channel} and \eqref{eq:bipartite-bound-success-probability} simultaneous hold with equality, and
    \begin{align}
        P_S^\star(g^{\mathbf{A}}_{i,\hat{x}}, \mathcal{C}(\vec{c}_+)) = \frac{c_+}{d} \quad P_S^\star\left(g^{\mathbf{A}}_{j,\hat{x}}, \mathcal{C}(\vec{c}_-)\right) &= \frac{c_{-}}{d} \  .
    \end{align} 
    That is, \eqref{eq:giA-network-communication-bound} also holds with equality.
    \end{proposition}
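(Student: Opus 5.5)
We prove the inequality by building an explicit product strategy inside $\mathcal{C}(\vec{c}) \subseteq \mathcal{S}(\vec{c})$. With $\hat{x}$ fixed, the two outputs decouple: $z_i = y_i - \mathbf{A}_{i,j}x_j - \kappa_i$ depends only on $(x_j, y_i)$, and $z_j = y_j - \mathbf{A}_{i,j}x_i - \kappa_j$ depends only on $(x_i, y_j)$. Here $\kappa_i, \kappa_j$ are constants determined by $\hat{x}$.

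Take an optimal classical strategy for $g^{\mathbf{A}}_{i,\hat{x}}$ with signaling dimensions $(c_{+,i}, c_{+,j})$. In it, sender $i$ encodes $y_i$ into a message $\mu_i^+$ and sender $j$ encodes $x_j$ into $\mu_j^+$. Take likewise an optimal strategy for $g^{\mathbf{A}}_{j,\hat{x}}$ with $(c_{-,i}, c_{-,j})$, in which sender $i$ encodes $x_i$ into $\mu_i^-$ and sender $j$ encodes $y_j$ into $\mu_j^-$.

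Each sender transmits the pair $(\mu^+, \mu^-)$. This needs alphabet size $c_{+,i}c_{-,i} \le c_i$, and similarly $c_{+,j}c_{-,j} \le c_j$. The receiver runs both decoders independently.

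Under uniform inputs, the inputs $(y_i, x_j)$ feeding the first subtask are independent of the inputs $(x_i, y_j)$ feeding the second. The decoders also act on disjoint message pairs, so the joint success probability factorizes into the product of the two subtask success probabilities. Maximizing each factor gives Eq.~\eqref{eq:success_probability_separability_product_channel}. Shared randomness is not needed, since deterministic optima exist.

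For the equality case, set $c_i = c_j = d = c_+ c_-$. We need explicit achievability $P_S^\star(g^{\mathbf{A}}_{i,\hat{x}}, \mathcal{C}(\vec{c}_+)) \ge c_+/d$ for a bivariate function $y_i - a x_j$, where $a = \mathbf{A}_{i,j}$ is invertible mod $d$. The strategy has sender $i$ send $y_i$ exactly when $y_i \in \{0, \dots, c_+ - 1\}$ and send a default symbol otherwise; sender $j$ sends nothing, which uses signaling dimension $1 \le c_+$. The receiver guesses $y_i - a x_j$ using a fixed guess for $x_j$. That succeeds with probability only $c_+/d^2$, so it fails and must be replaced.

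The better strategy splits $\mathbb{Z}_d$ into $c_+$ blocks, which needs $c_+ \mid d$ (this holds since $d = c_+ c_-$). Sender $i$ sends the block index of $y_i$, and sender $j$ sends the block index of $a x_j$ after a suitable bijective relabeling. The receiver then guesses the difference within the block structure. Concretely, let sender $i$ send $\lfloor y_i / c_- \rfloor$ and sender $j$ send $\lfloor (a x_j \bmod d)/c_- \rfloor$. The receiver outputs a guess $z$ matching $y_i - a x_j$ given the block indices. Counting collisions should give $c_+/d$, because the residual uncertainty is the difference of two uniform variables on $\mathbb{Z}_{c_-}$, and guessing its most likely value gives $1/c_-$. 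We must check the wrap-around carefully. I expect this counting to be the main obstacle.

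The matching upper bound $c_\pm/d$ comes from Proposition~\ref{prop:subfunction_bounds}, Eq.~\eqref{eq:giA-network-communication-bound}. So each factor equals $c_\pm/d$, and the product is $c_+c_-/d^2 = d/d^2 = 1/d$.

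The product $1/d$ matches the upper bound $\min\{c_i, c_j\}/d^2 = d/d^2$ in Eq.~\eqref{eq:bipartite-bound-success-probability}. The chain
\[
\frac{1}{d} \ge P^\star_S(g^{\mathbf{A}}_{i,j,\hat{x}}, \mathcal{S}(\vec{c})) \ge \frac{c_+}{d}\cdot\frac{c_-}{d} = \frac{1}{d}
\]
therefore collapses, giving equality in all three relations simultaneously. If the block strategy's wrap-around analysis fails, the fallback is a random-shift trick: use shared randomness (allowed by convexity) to uniformize the residual, which still attains $1/c_-$.
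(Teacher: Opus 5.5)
Your proof is correct. The inequality is argued exactly as in the paper: split each alphabet into $c_{+}$- and $c_{-}$-parts, run the two subtask strategies on the independent pairs $(y_i,x_j)$ and $(x_i,y_j)$, and factor the joint success probability.

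You differ from the paper only in the achieving strategy for the equality case, and the counting you flagged as the main obstacle closes in one line. Write $y_i=B_y c_{-}+r_y$ and $(\mathbf{A}_{i,j}x_j\bmod d)=B_x c_{-}+r_x$ with $r_y,r_x\in\{0,\dots,c_{-}-1\}$, and let the receiver output $(B_y-B_x)c_{-}-\kappa_i\bmod d$. This is correct iff $r_y-r_x\equiv 0\pmod d$. Since $|r_y-r_x|<c_{-}\le d$, that holds iff $r_y=r_x$. Conditioned on the block indices, $r_y$ is uniform on $\{0,\dots,c_{-}-1\}$ and independent of $r_x$. So the success probability is exactly $1/c_{-}=c_{+}/d$. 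There is no wrap-around issue and no need for the shared-randomness fallback. Your phrase about the difference of two uniforms should be read as an integer difference, whose mode $0$ has mass $1/c_{-}$.

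The paper instead sends low-order residues, $(x_i\bmod c_{-},\,y_i\bmod c_{+})$ and $(x_j\bmod c_{+},\,y_j\bmod c_{-})$. Since $c_{+}\mid d$, reduction mod $c_{+}$ is a ring homomorphism on $\mathbb{Z}_d$. The receiver therefore computes $z_i\bmod c_{+}$ exactly and only guesses among $c_{-}$ equally likely lifts, and symmetrically for $z_j$. That route avoids borrows entirely and needs no pre-multiplication by $\mathbf{A}_{i,j}$ at sender $j$. Your high-digit encoding needs both, but reaches the same value.

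One harmless slip: the strategy you discarded succeeds with probability $1/d$, not $c_{+}/d^2$. With sender $j$ silent, $z_i$ is uniform and independent of everything the receiver sees.
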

    
    \begin{proof}
        Let $\{i,j\} \in \texttt{Edges}(\mathbf{A})$, and let the inputs to all nodes except $i$ and $j$ be fixed as $\hat{x} = (\hat{x}_k)_{k\in [n]\setminus \{i,j\}}$. 
        To establish a lower bound on $P_S^\star(g^{\mathbf{A}}_{i,j,\hat{x}}, \mathcal{S}(\vec{c}))$, we construct an explicit product encoding of senders $i$ and $j$. Let  each sender partition their channel into two parts with signaling dimensions $c_{+,i}$ and $c_{-,i}$, such that $c_{i} \geq c_{+,i}c_{-,i}$ (resp. $j$). This allows sender $i$ to embed a tuple of messages into the whole message, i.e.~$\cC_{i} \supseteq \cC_{a,i} \times \cC_{b,i}$ (resp. $j$). Sender $i$ encodes $\cX$ into $\cC_{-,i}$ and $\cY_{i}$ into $\cC_{+,i}$ by using independent encoders.\footnote{This is formalized as sender $i$ using encoders represented by conditional distributions $T^{i}_{C_{-,i} \vert X_{i}}$ and $R^{i}_{C_{+,i} \vert Y_{i}}$, and sending the output tuple over the full channel. Similarly we have $T^{j}_{C_{+,j} \vert X_{i}}$ and $R^{j}_{C_{-,j} \vert Y_{j}}$ for sender $j$.} Similarly, sender $j$ encodes $\cX_{j}$ to $\cC_{+,j}$ and $\cY_{j}$ to $\cC_{-,j}$ to perform the same type of strategy. As the value of $g_{i,\hat{x}}^{\mbf{A}}$ depends on $\cX_{j}$ and $\cY_{i}$, the value of $g_{j,\hat{x}}^{\mbf{A}}$ depends on $\cX_{i}$ and $\cY_{j}$, and the encodings of these variables are chosen independently, the success of guessing both outputs with all of the encoded messages is the product of guessing the individual inputs with the relevant encoded messages.\footnote{Following \cite{george2026_dnc}, this may be formalized in terms of guessing probability \cite{konig2009operational} by using the fact that the joint state of the outputs $Z_{i}$,$Z_{j}$ and encoded values $C_{+,i},C_{-,i},C_{+,j},C_{-,j}$ result in the state $\rho_{Z_{i}Z_{j}C_{+,i}C_{-,i}C_{+,j}C_{-,j}} = \rho_{Z_{i}C_{+,i}C_{+,j}} \otimes \rho_{Z_{j}C_{-,j}C_{-,i}}$ so by the multiplicativity of the guessing probability over tensor products, c.f.~\cite[Proposition 17]{george2026_dnc}, we have $p_{g}(Z_{i}Z_{j} \vert C_{+,i},C_{-,i},C_{+,j},C_{-,j}) = p_{g}(Z_{i} \vert C_{+,i}C_{+,j})p_{g}(Z_{j} \vert C_{-,j}C_{-,i})$.} As this is a specific class of strategies, we immediately have the lower bound reported in Eq.~\eqref{eq:success_probability_separability_product_channel}.
        
        We now show that equality between the product channel in Eq.~\eqref{eq:success_probability_separability_product_channel} and the upper bound in Eq.~\eqref{eq:bipartite-bound-success-probability} is achieved when $d=c_+ c_-$. Let $d = c_+ c_-$ where $c_+ \geq c_-$. Then, senders $i$ and $j$ encode their respective input as $\mu_i = (x_i',y_i') = (x_i \bmod c_-, y_i \bmod c_+)$ and $\mu_j  = (x_j',y_j') = (x_j \bmod c_+, y_j \bmod c_-)$.
        It holds that $g_{i,\hat{x}}^{\mathbf{A}}(x_j', y_i') = g_{i,\hat{x}}^{\mathbf{A}}(x_j,y_i)\mod c_+ $ because
        \begin{equation}
            y_i - \sum_{j\in N(i)}\mathbf{A}_{i,j}x_j  = y'_i - \sum_{j\in N(i)}\mathbf{A}_{i,j}x'_j \mod c_+ \ ,
        \end{equation}
        and similarly $g_{j,\hat{x}}^{\mathbf{A}}(x_j', y_i') = g_{j,\hat{x}}^{\mathbf{A}}(x_j,y_i)\mod c_- $.
        Therefore the receiver can decode the pairs $\mu_i$ and $\mu_j$ to obtain $z_{i,+} :=  g_{i,\hat{x}}^{\mathbf{A}}(x_j,y_i)\mod c_+$ and  $z_{j,-} := g_{j,\hat{x}}^{\mathbf{A}}(x_i,y_j)\mod c_-$.
        The receiver can then guess the correct outputs, $g^{\mathbf{A}}_{i,\hat{x}}(x_j, y_i)$ and $g^{\mathbf{A}}_{j,\hat{x}}(x_i, y_j)$, as $z_i' = z_{i,+} + \ell_- c_+\mod c_+ c_-$ for some $\ell_- \in \mathbb{Z}_{c_-}$ and $z_j' = z_{j,-}+ \ell_+c_- \mod c_+ c_-$ for some $\ell_+ \in \mathbb{Z}_{c_+}$. Since the input is uniformly random, $\ell_+$ and $\ell_-$, can each be guessed with uniform probability, achieving $P_S^\star(g^{\mathbf{A}}_{i,\hat{x}}, \mathcal{C}(\vec{c}_+)) = \frac{1}{c_-}$ and $P_S^\star(g^{\mathbf{A}}_{j,\hat{x}}, \mathcal{C}(\vec{c}_-)) = \frac{1}{c_+}$, such that $P_S^\star(g^{\mathbf{A}}_{i,\hat{x}}, \mathcal{C}(\vec{c}_+))\cdot P_S^\star(g^{\mathbf{A}}_{j,\hat{x}}, \mathcal{C}(\vec{c}_-)) = \frac{1}{d}$ achieves the upper bound in Eq.~\eqref{eq:success_probability_separability_product_channel}.
    \end{proof}

\begin{protocoldesc}\label{protocol:mis_classical_channels}
    \caption{\textbf{: Stochastic Multiparty Affine Transformation via maximum independent set Classical Encodings.} Given a classical $n$-sender MN, implement a channel $\mathbf{P}\in\mathcal{C}(\vec{d})$ that evaluates the multiparty affine transformation  $g^{\mathbf{A}}(x,y)$  with success probability $P_S(g^{\mathbf{A}},\mathbf{P}) = (\frac{1}{d})^{n - |\texttt{MIS}(\mathbf{A})|} $ where $\vec{d} = (d,\dots,d)$ and $\mathbf{A}_{i,j}\in\mathbb{Z}_d$ are coprime with $d\geq2$. }

    \textbf{Input:} \\
    \hspace{0.25cm}
    \begin{tabular}{l l l}
        $x,y \in \Zbb^n_d$ &: & Uniformly random integer strings where \\
        & & the $i^{th}$ sender's input is $a_i=(x_i,y_i)$. \\
    \end{tabular}
    
    \vspace{0.25cm}
    \textbf{Output:} \\
    \hspace{0.25cm}
    \begin{tabular}{l l l}
        $z \in \Zbb^n_d $ & : & The receiver's  output satisfies $z=g^{\mathbf{A}}(x,y)$  \\
        & & with probability $P_S(g^{\mathbf{A}}, \mathbf{P}) = (\frac{1}{d})^{(n - |\texttt{MIS}(\mathbf{A})|)}$.\\ 
    \end{tabular}

    \vspace{0.25cm}
    \textbf{Protocol:} 
    \begin{enumerate}
        \item \textbf{Encoding:} For each $i\in [n]$, the sender encodes the message $\mu_i\in \mathbb{Z}_d$ as follows:
        \begin{align}\label{eq:classical_parallel_channel_encoding}
            \mu_i= \left\{ \begin{matrix} y_i & \texttt{if} \ \ i\in\texttt{MIS}(\mathbf{A}) \\
            x_i & \texttt{otherwise} \end{matrix}\right.
        \end{align}
        \item \textbf{Transmission:} For each $i\in [n]$, the sender transmits their encoded message $\mu_i$ to the receiver over a noiseless classical channel of signaling dimension $d$.
        \item \textbf{Decoding:} The receiver accepts the message $\mu_i$ from each sender and decodes it as
        \begin{align}
            z_i = \begin{cases}
                g_{i}^{\mathbf{A}}(\mu_{N(i)}, \mu_i) & \texttt{if } i\in\texttt{MIS}(\mathbf{A}) \\
                z_i' & \texttt{otherwise}
            \end{cases}
        \end{align}
        where $z_i' \in \mathbb{Z}_{d}$ is drawn uniformly at random.
    \end{enumerate}
\end{protocoldesc}

\begin{proposition}\label{proposition:appendix-mis-achievability}
    Protocol~\ref{protocol:mis_classical_channels} evaluates the multiparty affine transformation $g^{\mathbf{A}}:\mathcal{A}_{[n]}\to\mathcal{Z}$ with probability
    \begin{equation}\label{eq:mis_classical_achievability}
        P_S(g^{\mathbf{A}}, \mathbf{P})  = d^{(|\texttt{MIS}(\mathbf{A})| - n)} = \left(\frac{1}{d} \right)^{(n-|\texttt{MIS}(\mathbf{A})|)}
    \end{equation}
\end{proposition}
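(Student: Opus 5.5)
The plan is to compute the success probability of Protocol~\ref{protocol:mis_classical_channels} directly. The key observation is that the receiver recovers the coordinates indexed by $\texttt{MIS}(\mathbf{A})$ exactly and guesses the remaining coordinates uniformly at random. Write $I := \texttt{MIS}(\mathbf{A})$. By independence, for every $i\in I$ we have $N(i)\cap I=\emptyset$, so every neighbor $j\in N(i)$ lies in $[n]\setminus I$. Each such neighbor therefore transmits $\mu_j = x_j$, while node $i$ itself transmits $\mu_i = y_i$.

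Accordingly, the first step is to verify that for $i\in I$ the decoded value satisfies
\begin{equation*}
    z_i = g_i^{\mathbf{A}}(\mu_{N(i)},\mu_i) = y_i - \sum_{j\in N(i)} \mathbf{A}_{i,j} x_j \bmod d = \big(g^{\mathbf{A}}(x,y)\big)_i
\end{equation*}
deterministically, for every input $(x,y)$, using Eq.~\eqref{eq:output-i-function}. Each transmitted message is a single element of $\mathbb{Z}_d$, so the signaling dimension $d$ is respected and $\mathbf{P}\in\mathcal{C}(\vec{d})$.

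The second step treats $i\notin I$. Here $z_i'$ is drawn uniformly from $\mathbb{Z}_d$, independently of the inputs and of the other draws. Hence for each fixed input $(x,y)$ the event $z_i' = (g^{\mathbf{A}}(x,y))_i$ has probability exactly $1/d$, and these $n-|I|$ events are independent.

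Combining the two steps gives $\mathbf{P}_{g^{\mathbf{A}}(x,y)\mid x,y} = d^{-(n-|I|)}$ for every $(x,y)$. Averaging over uniform inputs as in Eq.~\eqref{eq:function_success_probability} then yields Eq.~\eqref{eq:mis_classical_achievability}.

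The only delicate point is the use of the independent-set property: every neighbor of an $I$-node must send its $x_j$ rather than its $y_j$. This holds because $I$ contains no neighboring vertices, so no $I$-node has a neighbor inside $I$. Nothing else requires real work. In particular, coprimality of the entries $\mathbf{A}_{i,j}$ with $d$ is not needed for this achievability statement.
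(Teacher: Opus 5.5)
Your proposal is correct and follows essentially the same route as the paper: the $\texttt{MIS}(\mathbf{A})$ coordinates are recovered exactly because all of their neighbors transmit $x_j$, and each of the remaining $n-|\texttt{MIS}(\mathbf{A})|$ coordinates is an independent uniform guess that succeeds with probability $1/d$. The only difference is the justification of the $1/d$: you compute it pointwise for each fixed input from the independence of the receiver's draws, whereas the paper argues that the output of $g_k^{\mathbf{A}}$ is uniform by symmetric conditional bijectivity, so your version is slightly cleaner and correctly shows that coprimality is not needed for this statement.
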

\begin{proof}
    Let the inputs $x$ and $y$ be drawn uniformly at random.
    Following Protocol~\ref{protocol:mis_classical_channels}, for each $i\in\texttt{MIS}(\mathbf{A})$ sender $i$ encodes $\mu_i =  y_i$, otherwise $\mu_k = x_k$ as in Eq.~\eqref{eq:classical_parallel_channel_encoding}.
    Then, during decoding, for each $i \in \texttt{MIS}(\mathbf{A})$ the receiver calculates $ g_i^{\mathbf{A}}(x_{N(i)}, y_i)$ with success probability $P_S(g^{\mathbf{A}}, \mathbf{P}) = 1$ because the inputs, $x_{N(i)}\cup \{y_i\}\subseteq \mu$, are contained in the received message.
    For $k\not\in \texttt{MIS}(\mathbf{A})$ the receiver guesses $g_k^{\mathbf{A}}(x_{N(k)}, y_k)$ uniformly at random.
    The success probability of the uniformly random guess is $P_S(g_k^{\mathbf{A}}, \mathbf{P}) = \frac{1}{d}$ because  the output of function $g^{\mathbf{A}}_k(x_{N(k)}, y_k)$ is uniformly distributed over $\mathcal{Z}$, which results from the inputs being uniformly random and the function $g_k^{\mathbf{A}}$ being symmetrically-conditionally bijective. 
    Overall, the receiver guesses $(n - |\texttt{MIS}(\mathbf{A})|)$ independent values, therefore the success probability of the protocol is given by Eq.~\eqref{eq:mis_classical_achievability}.
\end{proof}

Alternative classical encodings also exist for evaluating the multiparty affine transformation $g^{\mathbf{A}}$ under communication constraints. As an example, we show in Protocol~\ref{protocol:classical_channel_sum_encoding} that although the complete graph has $|\texttt{MIS}(\mathbf{A}(\texttt{Comp}(n)))| = 1$ a classical channel exists $\mathbf{P}\in\mathcal{C}(\vec{d})$ such that $P_S(g^{\mathbf{A}(\texttt{Comp}(n))}, \mathbf{P}) = \frac{1}{d}$, which achieves the MIM bound in Theorem~\ref{thm:graph_fn_classical_bound}.

\begin{protocoldesc}\label{protocol:classical_channel_sum_encoding}
    \caption{\textbf{: Stochastic Multiparty Affine Transformation via Classical Sum Encoding.} Given a classical $n$-sender MN, implement a channel $\mathbf{P}\in\mathcal{C}(\vec{d})$ that evaluates the multiparty affine transformation $g^{\mathbf{A}}(x,y)$ for adjacency matrix with success probability $P_S(g^{\mathbf{A}},\mathbf{P}) = \frac{1}{d}$ where $\mathbf{A}\in \texttt{Comp}(n)$ corresponds to the complete graph with uniform edge weight $\mathbf{A}_{i,j} = m$ for all $i\neq j$ where $m\in\mathbb{Z}_d$ is coprime with $d\geq 2$. }

    \textbf{Input:} \\
    \hspace{0.25cm}
    \begin{tabular}{l l l}
        $x,y \in \Zbb^n_d$ &: & Uniformly random integer strings where  \\
        & & the $i^{th}$ sender's input is $a_i=(x_i,y_i)$. \\
    \end{tabular}
    
    \vspace{0.25cm}
    \textbf{Output:} \\
    \hspace{0.25cm}
    \begin{tabular}{l l l}
        $z \in \Zbb^n_d $ & : & The receiver  outputs $z=g^{\mathbf{A}}(x,y)$ with \\
        & & success probability $P_S(g^{\mathbf{A}}, \mathbf{P}) = \frac{1}{d}$.\\ 
    \end{tabular}

    \vspace{0.25cm}
    \textbf{Protocol:} 
    \begin{enumerate}
        \item \textbf{Encoding:} For each $i\in [n]$, the sender encodes the message $\mu_i = mx_i + y_i \mod d$.
        \item \textbf{Transmission:} For each $i\in [n]$, the sender transmits their encoded message $\mu_i$ to the receiver over a noiseless classical channel of signaling dimension $d$.
        \item \textbf{Decoding:} For each $i\in[n]$, the receiver decodes the message $\mu_i$ as $z_i = \mu_i - \sigma'$ where $\sigma'\in\mathbb{Z}_d$ is drawn from a uniform random distribution and the same value is used for each $i\in[n]$.
    \end{enumerate}
    
\end{protocoldesc}
\begin{proposition}
    Protocol~\ref{protocol:classical_channel_sum_encoding} evaluates the multiparty affine transformation with probability $P_S(g^{\mathbf{A}}, \mathbf{P}) = \frac{1}{d}$ for complete graphs $\mathbf{A}\in \texttt{Comp}(n)$ with uniform edge weight $m=\mathbf{A}_{i,j}$ for all $\{i,j\}\in\texttt{Edges}(\texttt{Comp}(n))$.
\end{proposition}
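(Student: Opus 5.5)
The plan is to rewrite each output coordinate of $g^{\mathbf{A}}$ for the complete graph so that it splits into a \emph{local} term, which sender $i$ can compute and send, plus one \emph{global} offset that is the same for every coordinate. The receiver then has only a single unknown value in $\mathbb{Z}_d$ to guess. First we note that each message $\mu_i = m x_i + y_i \bmod d$ lies in $\mathbb{Z}_d$. So the encoding is a valid deterministic classical strategy over a channel of signaling dimension $d$, and the induced channel satisfies $\mathbf{P}\in\mathcal{C}(\vec{d})$.

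The key algebraic step uses uniform edge weights, $\mathbf{A}_{i,j}=m$ for all $i\neq j$. Let $S := \sum_{k\in[n]} x_k \bmod d$. Then, from Eq.~\eqref{eq:output-i-function},
\begin{equation}
    g^{\mathbf{A}}_i(x_{N(i)},y_i) = y_i - m\sum_{j\neq i} x_j = (y_i + m x_i) - m S = \mu_i - mS \mod d ,
\end{equation}
because $N(i)=[n]\setminus\{i\}$ in $\texttt{Comp}(n)$. Hence the correct output vector is $z = (\mu_i - mS)_{i\in[n]}$. The receiver's decoded vector $(\mu_i - \sigma')_{i\in[n]}$ equals it if and only if $\sigma' = mS \bmod d$. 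This is a single event, not $n$ independent events, because the same $\sigma'$ is used for every coordinate.

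Finally we compute the probability. For every fixed input $(x,y)$, the receiver's $\sigma'$ is uniform on $\mathbb{Z}_d$ and independent of the inputs. So $\mathbf{P}_{g^{\mathbf{A}}(x,y)\mid x,y} = \Pr[\sigma' = mS] = 1/d$. Averaging over the uniform inputs as in Eq.~\eqref{eq:function_success_probability} gives $P_S(g^{\mathbf{A}},\mathbf{P}) = 1/d$. This matches the MIM bound of Theorem~\ref{thm:graph_fn_classical_bound}, since $|\texttt{MIM}(\texttt{Comp}(n))|=1$.

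Note that neither the coprimality of $m$ with $d$ nor the uniformity of $S$ is needed for this value; they matter only for the upper bound. The only real obstacle is spotting the decomposition $g_i = \mu_i - mS$. This relies on uniform weights: with non-uniform weights the offset would depend on $i$, and a single shared guess would no longer suffice. That is why the proposition is restricted to uniform edge weight.
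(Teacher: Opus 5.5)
Your proof is correct and follows the same route as the paper: the rewriting $g_i^{\mathbf{A}} = \mu_i - \sigma$ with $\sigma = m\sum_j x_j$, and the observation that the single shared guess $\sigma'$ makes every coordinate right or every coordinate wrong. The only difference is that you get the $1/d$ from the uniformity of $\sigma'$ for each fixed input, whereas the paper uses the uniformity of $\sigma$ over random inputs. Either suffices, and your version shows that coprimality of $m$ is not needed for this particular value.
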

\begin{proof}
    Let inputs $x$ and $y$ be uniformly random.
    Let $m=\mathbf{A}_{i,j}$ for all $\{i,j\}\in \texttt{Edges}(\texttt{Comp}(n))$ where $m$ is coprime with $d$.
    Following Protocol~\ref{protocol:classical_channel_sum_encoding}, the $i^{th}$ sender encodes message $\mu_i = m x_i + y_i \mod d$.
    Using the substitution $y_i = \mu_i - m x_i$, the function $g_i^{\mathbf{A}}(x_{N(i)}, y_i)$ can be rewritten as
    \begin{align}
        g_i^{\mathbf{A}}(x_{N(i)}, y_i) &= y_i - \sum_{j\neq i} m x_j \mod d\\
        &= \mu_i - m x_i  - \sum_{j\neq i} m x_j \mod d \\
        &= \mu_i - \sum_{j} m x_j \mod d \ . \label{eq:sigma_complete_network}
    \end{align}
    Define $\mathbb{Z}_d\ni\sigma := \sum_j m x_j$ such that Eq.~\eqref{eq:sigma_complete_network} becomes $g_i^{\mathbf{A}}(x_{N(i)}, y_i) = \tilde{g}^{\mathbf{A}}_i(\mu_i, \sigma) := \mu_i - \sigma$.
    In the decoding phase of Protocol~\ref{protocol:classical_channel_sum_encoding}, the receiver holds $\mu_i$ and guesses $\sigma'\in\mathbb{Z}_d$ from a uniform distribution.
    Since the inputs $x$ are uniformly random, the distribution of $\sigma = \sum_j m x_j$ is also uniformly random, therefore the receiver correctly guesses $\sigma'\in\mathbb{Z}_d$ with probability $\frac{1}{d}$.
    When $\sigma'$ is correctly guessed, then $g_i^{\mathbf{A}}(x_{N(i)}, y_i) = \tilde{g}_i^{\mathbf{A}}(\mu_i, \sigma')$ holds for all $i\in[n]$ hence $g^{\mathbf{A}}(x,y)$ is correctly evaluated. When $\sigma'$ is incorrectly guessed, then $g_i^{\mathbf{A}}(x_{N(i)}, y_i) \neq \tilde{g}_i^{\mathbf{A}}(\mu_i, \sigma')$ holds for all $i\in [n]$ hence the receiver does not correctly evaluate $g^{\mathbf{A}}(x,y)$.
    Since the receiver simply has to correctly guess the value $\sigma'$, which occurs with probability $\frac{1}{d}$, the success probability for Protocol~\ref{protocol:classical_channel_sum_encoding} is $P_S(g^{\mathbf{A}}, \mathbf{P}) = \frac{1}{d}$.
\end{proof}

\section{Illustrative Examples}\label{section:appendix-illustrative-examples}

In this section, we provide the technical analysis for the examples shown in Table~\ref{table:graph_state_examples} and Fig.~\ref{fig:graph_state_examples}.
In Section~\ref{section:dnc_identities}, we prove some useful identities for manipulating the particular functions being dense network coded. In Section~\ref{section:two-sender_dnc}, we discuss how two-sender MNs can use the multiparty affine transformation to evaluate carryless arithmetic operations. In Section~\ref{section:example_multiparty_graph_states}, we discuss the multiparty affine transformation induced by various classes of graph states including pairwise-entangled states, cluster states, tree graph states, GHZ states, and complete graph states.

\subsection{Identities for Dense Multiparty Affine Transformations}\label{section:dnc_identities}

The dense multiparty affine transformation in  Protocol~\ref{protocol:dense_affine_transformation} computes the function $g^{\mathbf{A}}(x,y) = y - \mathbf{A}x \mod d$. However, if the parties manipulate their local classical data or quantum encodings, the function evaluated by the network can be altered. 
We now derive identities between manipulations of quantum encodings and their equivalent manipulations of classical data.

\begin{proposition}\label{prop:invert-inputs}
    \textbf{Sign Inversion:} Given that the $i^{th}$ sender has input $(x_i,y_i)\in\mathbb{Z}_d^2$ in Protocol~\ref{protocol:dense_affine_transformation}, the following statements about input $x_i$ (resp. $y_i$) are equivalent:
    \begin{enumerate}
        \item Sender $i$ updates their input value as $x_i\to x_i'=-x_i := d - x_i \mod d$ (resp. $y_i\to y_i'=-y_i$).
        \item Sender $i$ updates their encoding operation as $X^x \to (X^x)^\dagger$ (resp. $Z^y \to (Z^y)^{\dagger}$
    \end{enumerate}
    \begin{proof}
        1. Results from the modulo subtraction identity $-k = d - k \mod d$ for $k\in\mathbb{Z}_d$. 2. Results from the encoding operators being unitary where $I = X^x(X^x)^{\dagger}=X^x X^{-x} =I$ (resp. $Z$). As a result, the adjoint flips the sign in the encoding operator's exponent.
    \end{proof}
\end{proposition}

\begin{proposition}\label{prop:scale-inputs}
    \textbf{Scalar Multiplication:} Given that the $i^{th}$ sender has input $(x_i,y_i)\in\mathbb{Z}_d^2$ and scalar multiplier $k\in\mathbb{Z}_d$ that is coprime with $d$. In Protocol~\ref{protocol:dense_affine_transformation}, the following statements about input $x_i$ (resp. $y_i$) are equivalent:
    \begin{enumerate}
        \item Sender $i$ updates their input value as $x_i\to x_i'=kx_i \mod d$ (resp. $y_i\to y_i'=ky_i \mod d$).
        \item Sender $i$ updates their encoding operation as $X^x \to X^{kx}$ (resp. $Z^y \to (Z^{ky})$
    \end{enumerate}
    \begin{proof}
        1. The constraint on $k$ being coprime with $d$ is used to maintain bijectivity of the affine transformation function because the scalar multiplier must preserve $k^{-1} x_i' = x_i \mod d$. For the modulo multiplicative inverse $k^{-1}$ to be well-defined $k$ must be coprime with $d$, otherwise the original input cannot with certainty.  2. Results from substitution of $x_i'=k x_i$ into the encoding operators as $X^{x_i'}\to X^{kx_i}$ (resp. $Z^{y_i}\to Z^{k y_i}$).
    \end{proof}
\end{proposition}

\begin{proposition}\label{prop:invert-inputs-fn}
    \textbf{Adjacency Matrix Scaling:} For any input $(x,y)\in\mathbb{Z}_d^{2n}$ of Protocol~\ref{protocol:dense_affine_transformation}, the adjacency matrix can be scaled by $k\in\mathbb{Z}_d$ as
    \begin{equation}
        g^{\mathbf{A}} \to g^{k\mathbf{A}}:= y - k\mathbf{A} x \mod d 
    \end{equation}
    using either of the following approaches if the function $g^{k\mathbf{A}}(x,y)$ is doubly-conditionally bijective.
    \begin{enumerate}
        \item For each $i\in[n]$, sender $A_i$ updates their input value as $x_i\to x_i'=k x_i \mod d$.
        \item The graph is updated to have a new adjacency matrix where for each $i,j\in[n]$ $\mathbf{A}_{i,j} \to k\mathbf{A}_{i,j}$.
    \end{enumerate}
    \begin{proof}
        1. Proposition~\ref{prop:scale-inputs} can be applied to invert each $x_i\in x$ taking $x\to x'=kx$. 2. By the associativity of linear operators, $k\mathbf{A}x = (k\mathbf{A})(x)=(\mathbf{A})(kx)$.
    \end{proof}
\end{proposition}

\begin{proposition}\label{prop:input_swap}
    \textbf{Input Swapping:} In Protocol~\ref{protocol:dense_affine_transformation}, any party $A_i$ can swap their inputs as $a_i = (x_i, y_i) \to a_i' = (y_i, x_i)$ by:
    \begin{enumerate}
        \item Directly swapping $x_i \leftrightarrow y_i$  and encoding the swapped values into the twirling unitary as $W_{y_i,x_i} = X^{y_i}Z^{x_i}$.
        \item Applying the Fourier operator to the encoding unitary as $W_{y_i,x_i} = F X^{x_i} F^\dagger F^\dagger Z^{y_i} F = \omega^{x_iy_i} X^{y_i} Z^{x_i}$.
        \item By taking $y_i \to y'_i = -y_i$ and altering the entangled state as
        \begin{equation}\label{eq:rotated_graph_state}
            F_i\ket{\Gamma^{\mathbf{A}}_z} = X^{z_i}_{i}\prod_{i'>j\neq i} Z^{z_i}_{i'} CZ_{i',j}^{\mbf{A}_{i',j}}CX_{i',i}^{\mbf{A}_{i',i}}\ket{\overline{0}}\otimes\ket{0}_{i}
        \end{equation}
    \end{enumerate}

    \begin{proof}
        The first statement is trivial, however, we note that the inputs have been swapped successfully if $W_{y_i,x_i} = X^{y_i}Z^{x_i}$. The second statement results from the Fourier relations $FXF^\dagger = Z$ and $F^\dagger Z F = X$. To prove the third statement, suppose that the entangled state in  Eq.~\eqref{eq:rotated_graph_state} is prepared where the relation $CX_{i',i} = F_i^\dagger CZ_{i',i}F_i$ is used. The inner product then becomes $\ip{\Gamma_z^{\mbf{A}}|\dots \otimes F_i W_{x_i,y_i} F^{\dagger}_i \otimes\dots| \Gamma_0^{\mbf{A}}}$. Applying the Fourier relations, we obtain $F_i X_i^{x_i} Z_i^{y_i} F^\dagger_i= F_i X_i^{x_i}F^\dagger_i F_i Z_i^{y_i} F^\dagger_i = Z^{x_i} (X^{y_i})^\dagger = \omega^{-x_i y_i} X^{-y_i}Z^{x_i}$ where the sign of $-y_i$ can be removed by altering the input $y_i \to y_i' = -y_i$.
    \end{proof}
\end{proposition}

\subsection{Dense Network Coding in Two-Sender Multiaccess Networks}\label{section:two-sender_dnc}

\begin{figure}
    \centering
    \small
    \begin{tabular}{l c l}
        {\normalsize (a)}  &  & {\normalsize (b) } \\
        \begin{tikzpicture}
            \node[terminal] (x) at (-0.15,1) {$a$};
            \node[terminal] (y) at (-0.15,-1) {$b$};    
            \node[dev] (A) at (1.2,1) {$P^{A}_{\mu_0|a}$};
            \node[dev] (B) at (1.2,-1) {$P^{B}_{\mu_1|b}$};
            \node[dev] (C) at (2.2, 0) {$P^C_{z|\mu_0,\mu_1}$};
            \node[terminal] (z) at (3.6, 0) {$z$};
        
            \path (x) \cedge (A);
            \path (y) \cedge (B);
            \path (A) \cedge  node[el, above=3pt, xshift=4pt] {$c_0$} (C);
            \path (B) \cedge node[el, below=3pt, xshift=4pt] {$c_1$} (C);
            \path (C) \cedge (z);
        \end{tikzpicture} & & \begin{tikzpicture}
            \node[terminal] (x) at (0,1) {$a$};
            \node[terminal] (y) at (0,-1) {$b$};    
            \node[prep_dev] (A) at (1.1,1) {$\rho_{a}^{A}$};
            \node[prep_dev] (B) at (1.1,-1) {$\rho_{b}^{B}$};
            \node[meas_dev] (C) at (2.2, 0) {$\Pi^C_{z}$};
            \node[terminal] (z) at (3.2, 0) {$z$};
        
            \path (x) \cedge (A);
            \path (y) \cedge (B);
            \path (A) \qedge  (C);
            \path (B) \qedge  (C);
            \path (C) \cedge (z);
        \end{tikzpicture} \\
        \hfill \\
        {\normalsize (c)} & & {\normalsize (d)} \\
        \begin{tikzpicture}
            \node[terminal] (x) at (-0.15,1) {$a$};
            \node[qsource] (lambda) at (0,0) {$\rho^{AB}$};
            \node[terminal] (y) at (-0.15,-1) {$b$};    
            \node[meas_dev] (A) at (1.1,1) {$\Pi_{\mu_0|a}^{A}$};
            \node[meas_dev] (B) at (1.1,-1) {$\Pi_{\mu_1|b}^{B}$};
            \node[dev] (C) at (2.2, 0) {$P^C_{z|\mu_0,\mu_1}$};
            \node[terminal] (z) at (3.6, 0) {$z$};
        
            \path (lambda) \qedge (A);
            \path (lambda) \qedge (B);
            \path (x) \cedge (A);
            \path (y) \cedge (B);
            \path (A) \cedge node[el, above=3pt, xshift=4pt] {$\mu_0$}  (C);
            \path (B) \cedge  node[el, below=3pt, xshift=4pt] {$\mu_1$} (C);
            \path (C) \cedge (z);
        \end{tikzpicture} & &   \begin{tikzpicture}
            \node[terminal] (x) at (0,1) {$a$};
            \node[qsource] (lambda) at (0,0) {$\rho^{AB}$};
            \node[terminal] (y) at (0,-1) {$b$};    
            \node[proc_dev] (A) at (1.1,1) {$U_{a}^{A}$};
            \node[proc_dev] (B) at (1.1,-1) {$U_{b}^{B}$};
            \node[meas_dev] (C) at (2.2, 0) {$\Pi^C_{z}$};
            \node[terminal] (z) at (3.2, 0) {$z$};
        
            \path (lambda) \qedge (A);
            \path (lambda) \qedge (B);
            \path (x) \cedge (A);
            \path (y) \cedge (B);
            \path (A) \qedge  (C);
            \path (B) \qedge  (C);
            \path (C) \cedge (z);
        \end{tikzpicture}\\
    \end{tabular}
    \caption{Directed acyclic graphs depicting different resource configurations in a two-sender MN. Double-lined arrows denote classical communication and single-lined arrows denote quantum communication. (a) Classical MN. (b) Quantum MN. (c) Entanglement-assisted classical MN. (d) entanglement-assisted Quantum MN.
    }
    \label{fig:multiaccess_network_dags}
\end{figure}

Quantum dense network coding was deeply characterized in two-sender multiaccess networks \cite{george2026_dnc}. This section aims to illustrate our developed results in the context of the two-sender scenario. 
In this setting the two senders, $A$ and $B$, are each given the respective input strings $a = (x_0,y_0) $ and $b = (x_1,y_1)$ while the receiver, $C$, outputs the two digit string $z = (z_0,z_1)$ where $a,b,z\in\mbb{Z}^2_d$. The senders each make use of a channel with signaling dimension $d$.
From \cite[Proposition 22]{george2026_dnc}, we know that in the two-sender case, any tightly network codeable group $\mathcal{G}_d$ and its associated set of unitaries $\mathcal{U}(\mathcal{G}_d)\subset \mathcal{U}(d)$ form a basis on the generalized qudit Bell state as
$\ket{\psi_q} := U_q\otimes I\ket{\Phi^+_d}$
where $U_q \in \mathcal{U}(\mathcal{G})$ and $q\in\mathcal{G}$, and
$\ket{\Phi^+} := \frac{1}{\sqrt{d}} \sum_{j \in \mathbb{Z}_d} \ket{j}\ket{j}\in \mathbb{C}^d\otimes\mathbb{C}^d$.

In two-sender MNs, the affine transformations $g^{\mathbf{A}}(x,y)$ as described in Protocol~\ref{protocol:dense_affine_transformation} reduces to basic carryless arithmetic operations modulo $d$\begin{equation}\label{eq:carryless_arithmetic_general}
        \begin{pmatrix}
            z_a \\ z_b 
        \end{pmatrix} = \begin{pmatrix}
            y_a \\ x_b
        \end{pmatrix} - \begin{pmatrix}
            0 & m \\ m & 0
        \end{pmatrix} \begin{pmatrix}
            x_a \\ y_b
        \end{pmatrix} \mod d
\end{equation}  where sender $A$ is given the input $a = (x_a, y_a)\in\mathbb{Z}_d^2$, sender $B$ is given the input $b = (x_b, y_b)\in\mathbb{Z}_d^2$, and $1\leq m<d$. 
Since the maximally entangled pair corresponds to a graph with two nodes that share an edge, the maximum induced matching is $|\texttt{MIM}(\mathbf{A})| = 1$. As a result, the classical bound is $P^\star_S(g^{\mathbf{A}},\mathcal{S}(\vec{d})) = \frac{1}{d}$ and the critical visibility is $v^\star=\frac{1}{d + 1}$, which aligns with the findings of References~\cite{tavakoli2018_sdi_multipartite_entanglement,doolittle2024operational_nonclassicality,george2026_dnc}.

There are two distinctions from the two-sender dense network coding protocol presented in Reference~\cite{george2026_dnc}. First, sender $B$ applies the operator $U_y^T$, while we apply $U_y$, omitting the transpose for consistency with Protocol~\ref{protocol:dense_affine_transformation}. This omission results in the arithmetic operation in Eq.~\eqref{eq:carryless_arithmetic_general} being subtraction rather than addition. Second, Reference~\cite{george2026_dnc} considers Bell states while we assume a two-qudit graph state. As described in Proposition~\ref{prop:input_swap}, swapping party $A$'s inputs, $x_i \leftrightarrow y_i$,  is equivalent to using a Bell state, or we can incorporate the relabeling into the assisting entangled state shown in Eq.~\eqref{eq:rotated_graph_state}, which results in the maximally entangled state
\begin{align}
F_b\ket{\Gamma^{\mbf{A}}_0} &= CX^m_{a,b} \ket{\overline{0},0} = \ket{\Phi^+_m} =\frac{1}{\sqrt{d}}\sum_{i\in\mbb{Z}_d} \ket{i,mi} \ .
\end{align}

The function implemented by Eq.~\eqref{eq:carryless_arithmetic_general} enables a family of carryless arithmetic operations to be densely computed in a multiparty information processing scenario.
As a basic example, consider the case where $m=1$, then carryless addition of the inputs $a$ and $b$ is achieved as
\begin{align}
    z &= a + b \bmod d = (x_0 +  x_1, y_0 + y_1 ) \bmod d,
\end{align}
and carryless subtraction is achieved by setting $b' = (-x_b, -y_b)$ where $-x_b =d - x_b$ (and similarly for $-y_b$), thus addition and subtraction are achieved all using the same maximally entangled state.
When $d=2$, this carryless addition and subtraction correspond to a bitwise XOR operation between strings $a$ and $b$, which has previously been discussed in Reference~\cite[Protocol~1]{doolittle2024operational_nonclassicality}. 
Scalar multiplication is also observed in Eq.~\eqref{eq:carryless_arithmetic_general} when the multiplier $m\in \Zbb_d$ is not equal to one. Following Proposition~\ref{prop:scale-inputs}, inputs $x_a,y_a$ can be scaled prior to encoding to achieve $z_a = \alpha x_a - m \beta x_b$ for $\alpha,\beta\in \Zbb_d$. However, direct multiplication of $a$ and $b$ is not achieved by this protocol.

\subsection{Example Graph State Dense Network Codings}\label{section:example_multiparty_graph_states}

This section provides illustrative examples of the dense multiparty affine transformation (Protocol~\ref{protocol:dense_affine_transformation}) applied to the $n$-sender case with $n > 2$. We consider several graph topologies: cluster states, GHZ states, complete graph states, pairwise-entangled states, and tree graph states (see Fig.~\ref{fig:graph_state_examples}). In each case, we describe the multiparty affine transformation implemented by the dense network coding, and we calculate the maximum independent set, the maximum induced matching, and the critical visibility (see Table~\ref{table:graph_state_examples}).

\subsubsection{Cluster and Chain States}

A cluster graph state has a lattice topology where the nodes are arranged on a two-dimensional grid of width $w$ and length $\ell$  where the total number of nodes $n=w\ell$ and $w,\ell \geq 2$. 
We consider two dimensional lattices in which each node is linked to its adjacent nodes in the same row or column of lattice. In general, cyclic boundaries conditions can be considered, as well as $n$ dimensional lattices.
The maximum independent set of a cluster graph state has size $|\texttt{MIS}(\texttt{Clus}(\ell, w)))| = \lceil\frac{n}{2}\rceil$.
The maximum induced matching of the cluster state varies based on whether $n$ is even or odd. Whenever $n$ is even the edges can be compactly tiled to achieve $|\texttt{MIM}(\texttt{Clus}(\ell,w))| = \lceil \frac{n}{4}\rceil$ (see Fig.~\ref{fig:graph_state_examples}.c). For odd $n$, $|\texttt{MIM}(\texttt{Clus}(\ell, w))| \geq \lceil ( n - \min\{w,\ell\} ) / 4\rceil$.
For the case of the two-dimensional cluster states with even $n$, applying Theorem~\ref{thm:graph_fn_classical_bound} obtains the classical bound
\begin{equation}
    P_S^\star\left(g^{\texttt{MIM}(\texttt{Clus}(\ell, w))},\mathcal{S}(\vec{d})\right)  \leq\left(\frac{1}{d}\right)^{\lceil n / 4\rceil}
\end{equation}
while Theorem~\ref{thm:noise_robustness} shows that the critical entanglement visibility threshold for advantage is
\begin{equation}
     v^{\texttt{MIM}}(\texttt{Clus}(\ell,w)) = \frac{d^{n - \lceil n/4\rceil}-1}{d^n - 1} \ .
\end{equation}
where $v^\star(\texttt{Clus}(\ell,w)) \leq v^{\texttt{MIM}}(\texttt{Clus}(\ell,w))$.
Finally, the classical encoding from Protocol~\ref{protocol:mis_classical_channels} implements a channel $\mathbf{P}\in \mathcal{C}(\vec{d})$ that achieves the success probability
\begin{align}
    P_S(g^{\mathbf{A}}, \mathbf{P}) = \frac{1}{d^{n - \lceil n/2\rceil}} = \frac{1}{d^{\lfloor n / 2 \rfloor}}
\end{align}
for adjacency matrix $\mathbf{A}\in \texttt{Clus}(\ell,w)$.

The chain graph state $\texttt{Clus}(\ell,1)$ is a special case of the cluster state where $n=\ell$. For the $n$-qudit chain graph state, the size of the maximum induced matching is $|\texttt{MIM}(\texttt{Clus}(n,1))| = \lfloor (n + 1) / 3\rfloor$, therefore the MIM bound on the success probability  is
\begin{equation}
    P^\star_S(g^{\texttt{MIM}(\texttt{Clus}(n,1))}, \mathcal{S}(\vec{d})) \leq \left(\frac{1}{d}\right)^{\lfloor (n+1) / 3\rfloor}
\end{equation}
and the critical visibility is
\begin{equation}
    v^{\texttt{MIM}}(\mathbf{A}) = \frac{d^{n - \lfloor(n+1) / 3\rfloor} - 1}{d^n - 1} \ .
\end{equation}

\subsubsection{Star Graphs and GHZ States}

A star graph has a central node that multiple paths extend from (see Fig.~\ref{fig:graph_state_examples}.a,f,g). A star graph is characterized by the tuple $(b, h)$ where $b\geq1$ specifies the number of branches and $h \geq 1$ specifies the radius of the graph, \textit{i.e.}, the number of edges between the center and the outermost node. 
For a star graph $\mathbf{A}\in \texttt{Star}(b,h)$ 
the number of nodes is $n = b h + 1$ and the maximum independent set has size
\begin{equation}
    |\texttt{MIS}(\texttt{Star}(b,h))| = b\left(\left\lceil \frac{h}{2}  \right\rceil\right) + \delta_{0, h \bmod 2}
\end{equation}
where the maximum independent set is selected starting with the outermost nodes of the star and working towards the central node such that the central node is contained by the maximum independent set only when the radius $h$  is an even number. A similar selection algorithm can be applied for the MIM such that
\begin{equation}
    |\texttt{MIM}(\texttt{Star}(b,h))| = b\left(\left\lfloor \frac{h + 1}{3} \right\rfloor \right) + \delta_{1, h \bmod 3}
\end{equation}
where no more than $h/3$ edges can be selected to be in the maximum induced matching for each branch and the Kronecker delta accounts for the central node being included in the maximum induced matching when $h \bmod 3 = 1$.
As a result, for arbitrary $b$ and $h=2$, it holds that
\begin{align}
    |\texttt{MIM}(\texttt{Star}(b,2))| &= n - |\texttt{MIS}(\texttt{Star}(b,2))| \\
    &= 2b  + 1 - (b + 1) \\
    &=b \ ,
\end{align}
hence by Theorem~\ref{thm:tight_classical_bound} the classical set $\mathcal{C}(\vec{d})$ is tightly bounded by $P_S^\star(g^{\mathbf{A}},\mathcal{S}(\vec{d}))$, the upper bound from Theorem~\ref{thm:graph_fn_classical_bound}.

The $\texttt{Star}(b,1)$ corresponds to a
common $n$-qudit generalization of the two-qudit Bell state known as the Greenberger–Horne–Zeilinger (GHZ) state
\begin{equation}
    \ket{\psi^{\texttt{GHZ}(n)}} = \frac{1}{\sqrt{d}}\sum_{k\in \mathbb{Z}_d}\ket{k\dots k}  \ .
\end{equation}
A GHZ state can be prepared from the $n$-qudit graph state with adjacency matrix $\mathbf{A}\in \texttt{GHZ}(n):=\texttt{Star}(n-1,1) $ in which all nodes $j\in [2,n]$ share an edge with node $i=1$ as $\texttt{Edges}(\texttt{GHZ}(n)) = \left\{\{1,j\}\right\}_{j\in[2,n]}$ (see  Fig.~\ref{fig:graph_state_examples}.a).
The GHZ state can be recovered from this graph state by apply the qudit Fourier operator to each node $j\in[2,n]$ as
\begin{align}
    \prod_{j=2}^{n}  F^\dagger_j \ket{\Gamma_0^{\mathbf{A}}} &=\prod_{j=2}^{n} F^{\dagger}_j Z^{z_j}_j CZ_{1,j}\ket{\overline{0}\overline{0}\dots\overline{0}} \\
    &= \prod_{j=2}^{n} X^{z_j}_j CX_{1,j}\ket{\overline{0} 0 \dots 0} \\
    &= \prod_{j=2}^{n} X^{z_j}_j\ket{\psi^{\texttt{GHZ}(n)}} \ .
\end{align}
The dense network coding protocol achieved using the GHZ state can also be achieved without altering the corresponding graph state with adjacency matrix $\mathbf{A}({\texttt{GHZ}})$ by 
following Case~3. of Proposition~\ref{prop:input_swap}, the Fourier operators on qudits $j > 1$ can be replaced by swapping inputs $x_j\to y_j'$ and $-y_j \to x_i'$ and perform Protocol~\ref{protocol:dense_affine_transformation} with inputs $(x_j', y_j')$.
Since the maximum independent set for the GHZ state is $|\texttt{MIS}(\texttt{GHZ}(n))| = n-1$, the classical bound is $P_S \leq \frac{1}{d}$  as a consequence of Theorem~\ref{thm:graph_fn_classical_bound}, and the critical visibility is then $v^\star(\texttt{GHZ}(n)) = \frac{d^{n-1}-1}{d^n - 1}$. Note that the dense multiparty affine transformation protocol is studied for GHZ states in the qubit case in Reference~\cite{das2021_secure_communication_qmac} for secure conference key distribution and in semi-device-independent testing \cite{tavakoli2018_sdi_multipartite_entanglement} of entanglement-assisted quantum MNs.

If all parties in the $n$-qudit graph $\mathbf{A}\in\texttt{GHZ}(n)$ take $y_i \to y_i' = -y_i$ and all nodes except $1$ swap their inputs $a_i \to a_i'= (y'_i, x_i)$, then the evaluated multiparty affine transformation becomes
\begin{equation}
    g_1^{\mathbf{A}}(y_1, \dots, y_n) = y_{1} + y_{2}\dots + y_{n} \mod d
\end{equation}
and the remaining $n-1$ bits equal to
\begin{equation}
    g_{j}^{\mathbf{A}}(x_j, x_1) = x_j + x_1  \mod d \ .
\end{equation}
This function simultaneously allows two interesting computations. First, a sum over $y_i$ is performed without revealing the value of any one party.
Second, when the value $(z_2,\dots,z_{n})$ is decoded, each party $i \in [n]$ can determine $x_j$ for all $j\in[n]\setminus\{i\}$ when given $(z_2,\dots,z_n)$.

\subsubsection{Complete Graph States}

In a complete graph $\mathbf{A}\in\texttt{Comp}(n)$ all nodes are linked to each other, such that the adjacency matrix where $\mbf{A}_{i,j} \neq 0$ for all $i\neq j$ and $\mbf{A}_{i,j} = 0$ for all $i=j$.
The function evaluated is
\begin{equation}
    g^{\mathbf{A}}_i(x_{[n]\setminus \{ i\}}, y_i) = y_i - \sum_{j\in [n]}\mathbf{A_{i,j}} x_j \mod d \ ,
\end{equation}
which implements summations across the data of all parties.
The size of the maximum induced matching for the complete graph is $|\texttt{MIM}(\texttt{Comp}(n))| = 1$ and the size of the maximum independent set is $|\texttt{MIS}(\texttt{Comp}(n))| = 1$.
In Protocol~\ref{protocol:classical_channel_sum_encoding}, we show for $\mathbf{A}\in\texttt{Comp}(n)$ that the induced matching bound can be achieved by a classical strategy whenever $\mathbf{A}_{i,j} = m$ for $i\neq j$ where $m$ is coprime with $d$. Therefore the induced matching bound is tight and the  critical visibility for the complete graph is $v^\star(\texttt{Comp}(n)) = \frac{d^{n-1} - 1}{d^n - 1}\to\frac{1}{d}$ for large $n$.

\subsubsection{Pairwise-Entangled States}

We define a pairwise-entangled graph state as having having $n = 2m$ nodes where edges are pairwise as $(2i-1, 2i)$ for all $i\in[m]$ (see Fig.~\ref{fig:graph_state_examples}.b). The maximum independent set of the graph has size $|\texttt{MIS}(\texttt{Pair}(m))| = m = \frac{n}{2}$.
Applying Theorem~\ref{thm:graph_fn_classical_bound}, we obtain the classical bound for $\mathbf{A}\in\texttt{Pair}(m)$ as
\begin{equation}
    P^{\star}_S(g^{\mathbf{A}}, \mathcal{S}(\vec{d})) \leq  \frac{1}{d^{ n / 2}}
\end{equation}
while Theorem~\ref{thm:noise_robustness} shows that the critical entanglement visibility needed for advantage is
\begin{equation}
    v^\star(\texttt{Pair}(m)) = \frac{d^{ n/2}-1}{d^n - 1} = \frac{1}{d^{n/2} + 1}.
\end{equation}
As derived in Reference~\cite{george2026_dnc} pairwise entanglement-assisted channels demonstrate an exponential amplification in the number of senders. 

\subsubsection{Tree Graph States}

We define a tree graph state as having a tree topology of height $h \geq 1$ and each node splits into $b$ branches that link to neighboring nodes (see Fig.~\ref{fig:graph_state_examples}.d). 
One special case of the tree graph state can be related to the GHZ state $\ket{\psi^{\texttt{GHZ}(n)}} = \frac{1}{\sqrt{2}}(\ket{0\dots0} + \ket{1\dots 1})$. The connection is made by first noting that the GHZ state is a tree graph with $h = 1$, $b = n-1$ such that all leaf nodes connect to the root node.
The total number of nodes in the tree is 
\begin{equation}\label{eq:tree_node_num}
    n_{t}:=|\texttt{Nodes}(\texttt{Tree(b,h)})| = \sum_{k=0}^{h} b^{k} = \frac{b^{h+1} - 1}{b - 1} \ .
\end{equation}

The affine transformation that corresponds to the tree graph adjacency matrix $\mathbf{A}\in \texttt{Tree}(b,h)$ is
\begin{equation}\label{eq:tree_graph_function}
    g_i^{\mathbf{A}}(x_{N(i)},y_i) = y_i - \sum_{j\in L^{-}_i}\mbf{A}_{i,j} x_{j} - \sum_{j'\in L^{+}_i}\mbf{A}_{i,j'} x_{j'}
\end{equation}
where $L^-_i,L^+_i\subset [n]$ denote the nodes in the neighboring layer towards the root node ($L^-_i$) and the leaf nodes ($L^+_i$). If each sender $i\in [n]$ updates their input as $x_i \to x_i' = -x_i$, and then the senders in odd layers $\ell = 1,3,5,\dots$ swap their inputs as $a_i = (x_i', y_i)\to a_i' = (y_i, x_i')$, then Eq.~\eqref{eq:tree_graph_function} becomes for odd $\ell$
\begin{align}
    z_i =  x_i + \sum_{j\in L^{-}_i}\mbf{A}_{i,j} x_{j} + \sum_{j'\in L^{+}_i}\mbf{A}_{i,j'} x_{j'}
\end{align}
and even $\ell$
\begin{align}
    z_i = y_i + \sum_{j\in L^{-}_i}\mbf{A}_{i,j} y_{j} + \sum_{j'\in L^{+}_i}\mbf{A}_{i,j'} y_{j'},
\end{align}
which shows how carryless arithmetic can be extended to summing integers along the edges of the tree graph.

The set of maximum induced matching edges is 
\begin{equation}\label{eq:tree_mim_size}
    |\texttt{MIM}(\texttt{Tree}(b,h))| = \sum_{j=0}^{\lfloor(h -1) / 3\rfloor} b^{h - 1 - 3j} \geq b^{h - 1}
\end{equation}
where the set is selected iteratively beginning with the leaf nodes in layer $h$ and working towards the root of the tree, and only one edge can be placed every three layers because no edge can connect two edges in the maximum induced matching.
For the case of the tree state with $\mathbf{A}\in \texttt{Tree}(b,h)$, applying Theorem~\ref{thm:graph_fn_classical_bound} obtains the classical bound
\begin{equation}
    P^\star_S(g^{\mathbf{A}}, \mathcal{S}(\vec{d})) \leq  \left(\frac{1}{d}\right)^{|\texttt{MIM}(\texttt{Tree}(b,h))|}
\end{equation}
where $n$ is given in Eq.~\eqref{eq:tree_node_num} and $|\texttt{MIM}(\texttt{Tree}(b,h))|$ is given in Eq.~\eqref{eq:tree_mim_size}.
The critical visibility is then given by Theorem~\ref{thm:noise_robustness} as
\begin{equation}
    v^{\texttt{MIM}}(\texttt{Tree}(b,h)) = \frac{d^{n_{t} - |\texttt{MIM}(\texttt{Tree}(b,h))|}-1}{d^{n_{t}} - 1}.
\end{equation}
where as $h\to \infty$, 
\begin{equation}
    v^{\texttt{MIM}}(\texttt{Tree}(b,h)) \to \left(\frac{1}{d}\right)^{|\texttt{MIM}(\texttt{Tree}(b,h))| } \leq \left(\frac{1}{d}\right)^{b^{h-1}}
\end{equation}
where the upper bound on the critical visibility results from the bound in Eq.~\eqref{eq:tree_mim_size}.

To derive a lower bound on the success probability of the tree graph, we calculate
the size of the maximum independent set as
\begin{equation}\label{eq:tree_mis_size}
    |\texttt{MIS}(\texttt{Tree}(b,h))| = \sum_{j=0}^{\lfloor h/ 2\rfloor} b^{h - 2j} \geq b^{h}
\end{equation}
where the maximum independent set is selected by taking every other layer starting with the leaf nodes. When $h$ is odd, the root node is included in the maximum independent set. It follows that
\begin{align}
    n_t - |\texttt{MIS}(\texttt{Tree}(b,h))|&= \sum_{k=0}^{h}b^k - \sum_{j=0}^{\lfloor h /2 \rfloor} b^{h - 2j} \\
    &= \sum_{k=0}^{h}b^{h-k} - \sum_{j=0}^{\lfloor h /2 \rfloor} b^{h - 2j} \\
    &= \sum_{j=0}^{\lfloor (h - 1)/2 \rfloor} b^{h-2j - 1} \label{eq:tree-n-minus-mis} \\
    &\geq b^{h-1}
\end{align}
Therefore, the parallel classical channel encoding in Protocol~\ref{protocol:mis_classical_channels} produces a channel $\mathbf{P}\in\mathcal{C}(\vec{d})$ that achieves the success probability 
\begin{align}
    P_S(g^{\mathbf{A}}, \mathbf{P}) = \left( \frac{1}{d}\right)^{n_{t} - |\texttt{MIS}(\texttt{Tree}(b,h))| } \leq \left( \frac{1}{d}\right)^{b^{h - 1}}  \ .
\end{align}

\begin{proposition}
    For any tree graph of height $h=2$ the upper bound in Theorem~\ref{thm:graph_fn_classical_bound} tightly bounds $\mathcal{C}(\vec{d})$ where
    the parallel classical channel encoding in Protocol~\ref{protocol:mis_classical_channels} achieves the upper bound.
\end{proposition}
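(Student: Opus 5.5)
The plan is to verify Criterion 1) of Theorem~\ref{thm:tight_classical_bound} for $\texttt{Tree}(b,2)$, i.e.\ to show $|\texttt{MIM}(\texttt{Tree}(b,2))| = n_t - |\texttt{MIS}(\texttt{Tree}(b,2))|$. Theorem~\ref{thm:tight_classical_bound} together with Proposition~\ref{proposition:appendix-mis-achievability} then gives a channel $\mathbf{P}\in\mathcal{C}(\vec{d})$ from Protocol~\ref{protocol:mis_classical_channels} with $P_S(g^{\mathbf{A}},\mathbf{P}) = d^{-(n_t-|\texttt{MIS}|)} = d^{-|\texttt{MIM}|}$. This matches the upper bound of Theorem~\ref{thm:graph_fn_classical_bound}, so the inequality chain $P_S(g^{\mathbf{A}},\mathbf{P})\le P^\star_S(g^{\mathbf{A}},\mathcal{S}(\vec{d}))\le d^{-|\texttt{MIM}|}$ collapses to equalities.

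For the right-hand side, I will use Eq.~\eqref{eq:tree-n-minus-mis} with $h=2$. This gives $n_t - |\texttt{MIS}| = \sum_{j=0}^{0} b^{1-2j} = b$, consistent with $n_t = 1+b+b^2$ and $|\texttt{MIS}| = b^2+1$ (the leaves together with the root). Since Eq.~\eqref{eq:tree_mis_size} is only stated with a construction, I will briefly justify maximality. The graph is partitioned into the $b$ stars consisting of one layer-1 node and its $b$ leaves, plus the root. Each such star contributes at most $b$ vertices to an independent set, and the root contributes at most one, so $|\texttt{MIS}|\le b^2+1$.

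For the left-hand side, I need $|\texttt{MIM}(\texttt{Tree}(b,2))| = b$. For the lower bound, I pick, for each layer-1 node $u_k$, one edge to one of its leaves $\ell_k$. These $b$ edges are pairwise disjoint. Moreover, no edge of the tree joins an endpoint of one chosen edge to an endpoint of another: layer-1 nodes are only adjacent to the root and to their own leaves, and the root is not an endpoint of any chosen edge. So the set is an induced matching. This agrees with Eq.~\eqref{eq:tree_mim_size} at $h=2$, where the sum contributes $b^{1}$.

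For the upper bound, I note that every edge of $\texttt{Tree}(b,2)$ contains either the root or a layer-1 node. In an induced matching, distinct edges have disjoint endpoints, so each of the $b+1$ non-leaf vertices lies in at most one matched edge. Furthermore, if a matched edge contains the root, say $\{r,u_k\}$, then no other matched edge can touch any $u_{k'}$, because $u_{k'}$ is adjacent to $r$. Hence either the matching is just $\{r,u_k\}$ (size $1\le b$), or it uses no root edge, in which case each matched edge contains a distinct layer-1 node and its size is at most $b$.

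The main obstacle I anticipate is making this upper bound on the MIM airtight rather than relying on the heuristic counting behind Eq.~\eqref{eq:tree_mim_size}. The case split on whether a root edge is used handles it. Combining the two counts gives $|\texttt{MIM}| = b = n_t - |\texttt{MIS}|$, so Criterion 1) applies and the proposition follows.
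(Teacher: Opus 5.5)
Your proof is correct and follows the paper's route. The paper likewise verifies Criterion~1) of Theorem~\ref{thm:tight_classical_bound} by reading off $|\texttt{MIM}(\texttt{Tree}(b,2))| = n_t - |\texttt{MIS}(\texttt{Tree}(b,2))| = b$ from Eqs.~\eqref{eq:tree_mim_size} and \eqref{eq:tree-n-minus-mis} at $h=2$; the only difference is that you prove these two counts directly (the explicit induced matching, the root-edge case split, and the star partition) instead of citing the general formulas, which the paper supports only heuristically, so your version is more self-contained.
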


\begin{proof}
    Let $\mathbf{A}\in\texttt{Tree}(b,h)$ be the adjacency matrix of a tree graph.
    From Eq.~\eqref{eq:tree-n-minus-mis} and Eq.~\eqref{eq:tree_mim_size} it follows that when $h=2$,
    \begin{equation}
        |\texttt{MIM}(\texttt{Tree}(b,h))| = n_{t} - |\texttt{MIS}(\texttt{Tree}(b,h))| = b^{h - 1} \ .
    \end{equation}
    Thus we conclude that $P_S(g^{\mathbf{A}}, \mathbf{P}) = P_S^\star(g^{\mathbf{A}}, \mathcal{S}(\vec{d}))$.
\end{proof}

\end{document}